\tiny\color{black},
\newcommand{\codestyle}[1]{{{\fontsize{9}{10}\ttfamily #1}}}
\newcommand{\size}{\codestyle{size}}
\newcommand{\ins}{\codestyle{insert}}
\newcommand{\del}{\codestyle{delete}}
\newcommand{\contains}{\codestyle{contains}}
\newcommand{\cas}{\texttt{CAS}}
\newcommand{\nul}{\texttt{NULL}}
\newcommand{\scan}{\codestyle{scan}}
\newcommand{\up}{\codestyle{update}}
\newcommand{\htb}{\codestyle{HashTable}}
\newcommand{\skl}{\codestyle{SkipList}}
\newcommand{\bst}{\codestyle{BST}}
\newcommand{\shtb}{\codestyle{SizeHashTable}}
\newcommand{\sskl}{\codestyle{SizeSkipList}}
\newcommand{\sbst}{\codestyle{SizeBST}}
\newcommand{\snapskl}{\codestyle{SnapshotSkipList}}
\newcommand{\vcasbst}{\codestyle{VcasBST-64}}
\newcommand{\term}[1]{\textit{#1}}
\newtheorem{theorem}{Theorem}[section]
\newtheorem{claim}[theorem]{Claim}
\newtheorem{observation}[theorem]{Observation}
\begin{document}

\title{Concurrent Size}
\titlenote{This work was supported by the Israel Science Foundation Grant No. 1102/21.}

\author{Gal Sela}
\affiliation{
  \institution{Technion}
  \country{Israel}
}
\email{galy@cs.technion.ac.il}

\author{Erez Petrank}
\affiliation{
  \institution{Technion}
  \country{Israel}
}
\email{erez@cs.technion.ac.il}

\begin{abstract}
The size of a data structure (i.e., the number of elements in it) is a widely used property of a data set. 
However, for concurrent programs, obtaining a correct size efficiently is non-trivial.
In fact, the literature does not offer a mechanism to obtain a correct (linearizable) size of a concurrent data set without resorting to inefficient solutions, such as taking a full snapshot of the data structure to count the elements, or acquiring one global lock in all update and size operations.
This paper presents a methodology for adding a concurrent linearizable \size{} operation to sets and dictionaries with a relatively low performance overhead. Theoretically, the proposed size operation is wait-free with asymptotic complexity linear in the number of threads (independently of data-structure size). Practically, we evaluated the performance overhead by adding \size\ to various concurrent data structures in Java---a skip list, a hash table and a tree. The proposed linearizable \size{} operation executes faster by orders of magnitude compared to the existing option of taking a snapshot, while incurring a throughput loss of $1\%-20\%$ on the original data structure's operations.

\end{abstract}

\begin{CCSXML}
<ccs2012>
<concept>
<concept_id>10010147.10010169.10010170.10010171</concept_id>
<concept_desc>Computing methodologies~Shared memory algorithms</concept_desc>
<concept_significance>500</concept_significance>
</concept>
<concept>
<concept_id>10010147.10011777.10011778</concept_id>
<concept_desc>Computing methodologies~Concurrent algorithms</concept_desc>
<concept_significance>500</concept_significance>
</concept>
<concept>
<concept_id>10003752.10003809.10010031</concept_id>
<concept_desc>Theory of computation~Data structures design and analysis</concept_desc>
<concept_significance>500</concept_significance>
</concept>
</ccs2012>
\end{CCSXML}

\ccsdesc[500]{Computing methodologies~Shared memory algorithms}
\ccsdesc[500]{Computing methodologies~Concurrent algorithms}
\ccsdesc[500]{Theory of computation~Data structures design and analysis}

\keywords{Concurrent Algorithms; Concurrent Data Structures; Linearizability; Wait-Freedom; Size}

\maketitle

The conference version of this paper is available at \cite{sela2021concurrentSize}, 
and the code is publicly available at \cite{artifactConcurrentSize}.

\section{Introduction}\label{section:intro}

Concurrent data structures are fundamental building blocks of concurrent programming, utilized to leverage modern multi-core processors. There has been substantial work on the design of efficient and scalable concurrent data structures with good progress guarantees, in order to benefit concurrent algorithms at large. A fundamental, widely used, property of a data structure is its size (i.e., the number of elements it contains). 
In Java, for example, any collection or map class that implements one of the elementary interfaces \texttt{java.util.Collection} or \texttt{java.util.Map} \cite{java18} must implement a \size{} method. Interestingly, implementing an efficient and correct \size{} operation for a concurrent data structure is non-trivial. For a formal treatment, we use linearizability as the correctness criterion of concurrent executions~\cite{herlihy1990linearizability,sela2021linearizability}, but the discussion below also applies to other intuitive correctness criteria.

The literature does not offer an acceptable solution to implementing a correct \size{} operation, and existing implementations give up  correctness in order to avoid a significant performance deterioration.
For example, the non-blocking collections and maps in the \texttt{java.util.concurrent} package \cite{lea2004java} implement a non-linearizable \size{} method that returns an estimate of the size. The returned estimate may be inaccurate when the object is concurrently modified during the execution of \size{}. In contrast, a linearizable \size{} operation would tolerate concurrent update operations and retrieve the exact number of elements in the data structure at some point during the execution of the \size{} operation. 

Existing solutions are incorrect or inefficient. 
Ignoring concurrency, one can determine the size of a data structure simply by traversing it and counting the number of encountered items. 
This is the approach taken by the \size{} method of Java's \texttt{ConcurrentLinkedQueue} and \texttt{ConcurrentLinked\-Deque}.
This approach is fine for a sequential execution, but for a concurrent execution this implementation is not linearizable. The following is a worst-case scenario for this implementation. Consider an execution on a linked list with the single item $1$. Assume a thread $T$, running this \size{} implementation, starts the traversal from the node containing $1$ and then gets preempted. At this point, the following steps may occur repeatedly: some thread appends a node with the item $2$ to the end of the list, increasing the list's size to $2$; next, $T$ gets scheduled, resumes its traversal and proceeds to this new node; then, some thread deletes the node containing $1$, so the list's size is $1$ again. Next, some thread inserts $3$ and deletes $2$, letting $T$ see a third element, etc.; until eventually---after some item $s$ is appended---the thread $T$ gets to the end of the list before another thread gets the chance to insert an additional item. In this scenario, $T$ will erroneously return a possibly large $s$ as the list size, while in practice the list size never exceeded 2. 
While this is a worst-case scenario, one can envision many other scenarios in which the returned value would be incorrect. 

Alternatively, it is possible to obtain a correct \size{} implementation by obtaining a linearizable snapshot of the data structure (e.g., using any of the methods in \cite{wei2021constant,petrank2013lock,nelson2022bundling,arbel2018harnessing}) and then iterate over the returned snapshot to count the number of elements in the snapshot. While correct, this solution is inefficient, yielding a time complexity linear in the number of elements in the data structure. 

If we want to avoid such a high cost for the \size{} operation, then we need to keep some metadata that allows computing the size of the data structure efficiently when needed, and let the data-structure operations maintain this metadata. Naively, the metadata would just be the current size value. 
A natural attempt to implement such a \size{} operation would be to keep the size in a designated field of the data structure and let the operating threads update it with each operation that affects the size.  An \ins{} operation would execute a size increment after inserting its item, and a \del{} operation would execute a decrement of the size field after performing the deletion. 
Java's \texttt{ConcurrentSkipListMap} and \texttt{ConcurrentHashMap} use such an implementation. 
However, the separation between the data-structure update and the metadata update foils linearizability. As an example, consider $n$ threads that are preempted exactly after inserting an element to the data structure and before updating the size field. At this point the size field would be off by $n$ and thus inconsistent with a view of another thread that actually reads the data structure. 

A simplified execution for one updating thread is depicted in  \Cref{fig:non-atomic-update-causes-conflicts}. There, only one update operation is ever executed on the data structure: one thread inserts $1$ into an empty structure. Another thread that starts by calling \contains(1), sees that the data structure already contains the single element in it, but then it calls \size() and receives 0. We executed this simple program on Java's \texttt{ConcurrentSkipListMap} several times, and we actually witnessed executions that reproduced the contradicting result as depicted in \Cref{fig:non-atomic-update-causes-conflicts}. 
This demonstrates that the \size{} method of \texttt{ConcurrentSkipListMap} is not linearizable. The core issue is the separation between the actual data structure update and the subsequent update of the metadata.

\setlength{\unitlength}{0.1mm}
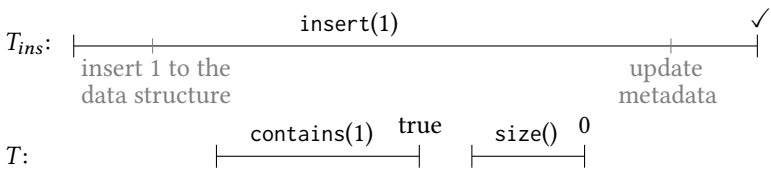
\begin{figure}[b]
\begin{picture}(1050,230)
\put(10,150){$T_{ins}$:}
\put(10,0){$T$:}

\put(100,170){\line(0,-1){30}} 
\put(100,155){\line(1,0){910}} 
\put(1010,170){\line(0,-1){30}} 
\put(400,175){\ins(1)}
\put(1000,180){$\checkmark$} 
\put(205,165){\color{gray} \line(0,-1){20}} 
\put(110,115){\color{gray} insert 1 to the}
\put(110,80){\color{gray} data structure}
\put(895,165){\color{gray} \line(0,-1){20}} 
\put(840,115){\color{gray} update}
\put(825,80){\color{gray} metadata}

\put(290,25){\line(0,-1){30}} 
\put(290,10){\line(1,0){270}} 
\put(335,30){\contains(1)}
\put(560,25){\line(0,-1){30}} 
\put(532,40){true} 

\put(630,25){\line(0,-1){30}} 
\put(630,10){\line(1,0){150}} 
\put(660,30){\size()}
\put(780,25){\line(0,-1){30}} 
\put(772,40){0} 

\end{picture}
\caption{An execution with conflicting \contains{} and \size{} results due to the separation between updating the data structure and the size metadata}
\label{fig:non-atomic-update-causes-conflicts}
\end{figure}

\setlength{\unitlength}{0.1mm}
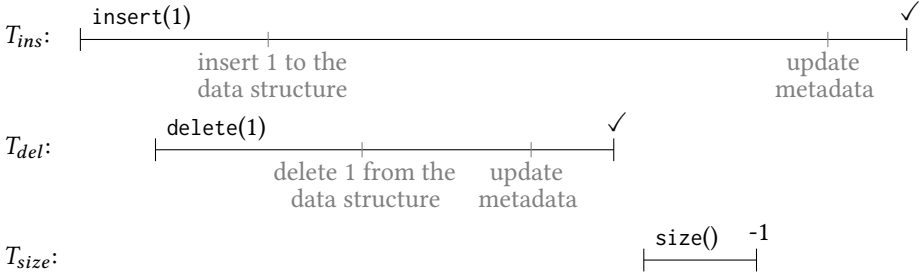
\begin{figure}[ht]
\begin{picture}(1200,370)
\put(0,295){$T_{ins}$:}
\put(0,150){$T_{del}$:}
\put(0,5){$T_{size}$:}

\put(100,315){\line(0,-1){30}} 
\put(100,300){\line(1,0){1100}} 
\put(1200,315){\line(0,-1){30}} 
\put(1190,325){$\checkmark$} 
\put(115,320){\ins(1)}
\put(350,310){\color{gray} \line(0,-1){20}} 
\put(256,260){\color{gray} insert 1 to the}
\put(256,225){\color{gray} data structure}
\put(1095,310){\color{gray} \line(0,-1){20}} 
\put(1040,260){\color{gray} update}
\put(1025,225){\color{gray} metadata}

\put(200,170){\line(0,-1){30}} 
\put(200,155){\line(1,0){610}} 
\put(810,170){\line(0,-1){30}} 
\put(800,180){$\checkmark$} 
\put(215,175){\del(1)}
\put(475,165){\color{gray} \line(0,-1){20}} 
\put(357,115){\color{gray} delete 1 from the}
\put(380,80){\color{gray} data structure}
\put(700,165){\color{gray} \line(0,-1){20}} 
\put(645,115){\color{gray} update}
\put(630,80){\color{gray} metadata}

\put(850,25){\line(0,-1){30}} 
\put(850,10){\line(1,0){150}} 
\put(1000,25){\line(0,-1){30}} 
\put(990,37){-1} 
\put(865,30){\size()}

\end{picture}
\caption{An execution that yields a negative size due to the separation between updating the data structure and the size metadata}
\label{fig:non-atomic-update-causes-negative-size}
\end{figure}

Furthermore, updating the metadata separately from updating the data structure may yield a \size{} execution that returns a negative number. This means that the size operation is not only non-linearizable, but it can also not satisfy any correctness criterion that requires method calls to appear to happen in a one-at-a-time sequential order, e.g., it is not quiescently consistent \cite{aspnes1994counting,shavit1996diffracting,herlihy2008art} nor sequentially consistent \cite{lamport1979make,herlihy2008art}. 
Consider the following execution (depicted in \Cref{fig:non-atomic-update-causes-negative-size}).
Thread $T_{ins}$ inserts an item to the data structure, and before it updates the metadata, thread $T_{del}$ deletes that item and updates the metadata, registering its decrement. At this point, thread $T_{size}$ calls \size() that returns $-1$ based on the metadata, which currently reflects only the deletion and not yet the insertion.
The separation between the data-structure and metadata updates results here in updating the metadata in a reversed order, which is impossible since the deletion cannot succeed if it happens before the insertion. The returned size exposes this impossible operation order. 
The method calls in this execution do not appear to happen in a one-at-a-time sequential order since \size{} would never return a negative result in a legal sequential execution.

A more complex metadata maintenance is proposed by Afek et al. for computing the \size{} more efficiently \cite{Afek2012size}. But they, too, update the metadata after the data-structure update, and so their implementation suffers from the same problems. (We elaborate on issues in \cite{Afek2012size} in   \Cref{section:prev paper incorrect}.)

A third alternative for implementing the \size{} operation is to use locks to prevent a \size{} operation from exposing a temporary inconsistency between the data structure's state and the metadata. This too would create a severe bottleneck and deteriorate performance significantly.

In this paper we propose an efficient linearizable \size{} implementation. To the best of our knowledge, this is the first \size{} solution that provides both linearizability and efficiency (namely, not iterating over all elements of the data structure or using coarse-grained locking). 
We present a methodology for adding a linearizable \size{} operation to concurrent data structures that implement a set or a dictionary. Our methodology yields data structures that satisfy the following attractive theoretical properties:
\begin{enumerate}
    \item \label{property:linear size} The time complexity of the \size{} operation is linear in the number of threads.
    \item \label{property:wait-free size} The \size{} operation is wait-free, namely, a thread running a \size{} operation completes the operation within a finite number of steps, regardless of the activity of other threads.
    \item \label{property:preserve time complexity} The (asymptotic) time complexity of the original data-structure operations is preserved.
    \item \label{property:preserve progress} The progress guarantees of the original data structure are preserved. Namely, wait-free methods of the original data structure remain wait-free in the transformed data structure, and the same goes for lock-free or obstruction-free methods.
\end{enumerate}

To achieve Property (\ref{property:linear size}), we keep always-consistent metadata, from which the size can be correctly computed.
To prevent operations from exposing inconsistencies similar to the examples of  \Cref{fig:non-atomic-update-causes-conflicts,fig:non-atomic-update-causes-negative-size}, we work hard to achieve a single linearization point in which the data structure is modified and the metadata gets updated simultaneously. This is obtained by letting an operation appear as completed to other operations only when the metadata update occurs. Formally, the update of the metadata becomes the single linearization point of the entire data structure operation.  
Dependent data-structure operations are adapted to comply with the new linearization point, and help completing concurrent operations when necessary. For instance, a \del($k$) by thread $T_2$ that encounters an ongoing \del($k$) by another thread $T_1$ which has already deleted the key from the data structure, must help $T_1$ to update the metadata in order to complete the obstructing \del($k$) before returning a failure. It cannot block and wait for $T_1$ to update the metadata, since that might change the progress guarantees of the \del{} operation and foil Property (\ref{property:preserve progress}).

Helping another operation implies updating the metadata on its behalf. As always with multiple threads helping to execute a single operation, care has to be taken for the operation to be executed only once. We keep per-thread counters as the size metadata, and use a corresponding mechanism that enables helpers to determine whether the metadata already reflects the helped operation, to prevent a wrong double update of the metadata on behalf of the same operation. This mechanism enables helpers to efficiently make a determination and update the metadata if necessary, thus achieving Property (\ref{property:preserve time complexity}).

The size of a data structure is a fundamental property of a data set and having a methodology for obtaining an efficient accurate solution for it seems like an important point in the design space, which is currently missing in the literature. Using inaccurate solutions may yield unexpected results, e.g., sizes that the data structure never had and even a negative size. Such results may in turn yield unexpected bugs that may put the reliability of an entire system at risk.
A reliable solution is especially desirable in dynamic programming languages that favor correctness over performance, such as Python and Ruby, which use a global interpreter lock in their reference implementations and are expected to behave reliably even in optimized implementations that shed the global interpreter lock to obtain parallelism.
This follows the line of previous works \citep[e.g.,][]{daloze2018parallelization,meier2016parallel} that present solutions for reliable efficient parallelism.

In order to evaluate the performance overhead of the linearizable \size{} operation, we added the \size{} operation using the methodology described in this paper to various concurrent data structures in Java: a skip list, a hash table and a tree \cite{artifactConcurrentSize}. The proposed linearizable \size{} operation executes faster by orders of magnitude compared to counting the elements of a linearizable snapshot. It also demonstrates scalability and insensitivity to the data-structure size. However, obtaining a linearizable \size{} operation does come with some cost, incurring a throughput loss of $1\%-20\%$ on the original data structure's operations. 

The paper is organized as follows. 
\Cref{section:terminology} introduces some basic terminology. \Cref{section:related-work} surveys relevant work on snapshots.
We then describe our methodology, starting with the transformation of a linearizable data structure into one that uses our size mechanism in \Cref{section:data-structure transformation}, and proceeding with the size mechanism itself: the metadata design is covered in \Cref{section:metadata}, and \Cref{section:wait-free size} describes how the size is obtained in a wait-free form. We describe possible optimizations to our methodology's implementation in \Cref{section:optimizations}.
In \Cref{section:properties} we argue about the properties the methodology satisfies. \Cref{section:performance} presents an evaluation of the methodology applied to different data structures in a variety of workloads. We conclude in \Cref{section:conclusion}.
\section{Terminology}\label{section:terminology}

An execution is considered \term{linearizable} \cite{herlihy1990linearizability,sela2021linearizability} if each method call appears to take effect at once, between its invocation and its response events, at a point in time denoted its \term{linearization point}, in a way that satisfies the sequential specification of the objects.
A concurrent data-structure is \term{linearizable} if all its executions are linearizable.

A concurrent object implementation is \term{wait-free} \cite{herlihy1991wait} if any thread can complete any operation in a finite number of steps, regardless of the execution speeds of other threads. 

A \term{set} is a collection of keys without duplicates, supplying the following interface operations: an \ins($k$) operation which inserts the key $k$ if it does not exist or else returns a failure; a \del($k$) operation which deletes $k$ if it exists or else returns a failure; and a \contains($k$) operation which returns true if and only if $k$ exists in the set.

A \term{dictionary} (synonymously \term{map} or \term{key-value map}) is a collection of distinct keys with associated values, with operations similar to the ones of a set but with values integrated in them.
Throughout the paper we will refer only to sets for brevity, but all our claims apply to dictionaries as well.

A compare-and-swap instruction (henceforth \cas{}) on an object takes an expected value and a new value. It atomically obtains the object's current value and swaps it with the new value if the current one equals the expected value. The return value indicates whether the substitution was performed: its compareAndSet variant returns a corresponding boolean value; its compareAndExchange variant returns the obtained current value.
\section{Related work}\label{section:related-work}

A {\em snapshot object} \cite{afek1993atomic} is an abstraction of shared memory made of an array of $m$ cells, supporting two operations: \up($i$, $v$) that writes $v$ to the $i$-th cell of the array, and \scan() that returns the current values of all $m$ locations (i.e., a snapshot of the array). 
The {\em atomic snapshot} problem is to implement such an object such that its two operations are linearizable and wait-free.  
\citet{jayanti2005optimal} presents algorithms that solve the problem with optimal time complexity. We build on the fundamental ideas of \citet{jayanti2005optimal} in this work to design a wait-free \size{} operation.

However, this scheme is not suited for multiple concurrent \scan{} operations and does not allow other operations (such as reading a specific cell) to occur concurrently.
\citet{petrank2013lock} extend Jayanti's idea and introduce a technique for adding a linearizable wait-free snapshot operation to a concurrent set data structure. In supporting concurrent \size{} operations, we use their method to support multiple concurrent snapshot operations.

An alternative approach by \citet{wei2021constant,nelson2022bundling} obtains snapshots of concurrent data structures more efficiently, at the cost of higher space overhead. They keep copies of modified nodes and let the snapshot operation advance a timestamp. This timestamp is then used to read the content of the data structure at the time the snapshot was taken.
To support such a read of old values, operations on the data structure are responsible to maintaining lists of previous values of mutable fields. 
Specifically for obtaining the size, one may take a snapshot and use the returned timestamp to traverse the data structure at that time and count elements. 

Literature on range queries may be also utilized to take a full snapshot of a data structure. For instance, \citet{arbel2018harnessing} propose to implement range queries using epoch-based memory reclamation. 

The above snapshot algorithms can be used to obtain a linearizable size, but using them for this purpose is an overkill.
The comparison we make in \Cref{section:performance} to the algorithms of  \citet{petrank2013lock} and \citet{wei2021constant} demonstrates the clear benefit of using our algorithm which is tailored for obtaining the size.

\section{Data-structure transformation}\label{section:data-structure transformation}
In this section we specify how the fields and methods of a linearizable data structure can be modified in order to transform it into a data structure that uses our size mechanism. 
To efficiently obtain a linearizable size, we keep metadata from which the size may be computed. 
But unlike previous work, we make the data structure and the metadata change (linearize) simultaneously.
The data-structure operations are responsible to maintain the metadata.
The main idea is to make sure that updates are not visible to other operations until their metadata is updated. The way to enforce that, is to let each operation complete work for previous related operations, so that it does not view any intermediate states. The details follow. 

\paragraph{Successful operations update metadata} 
The first modification is to let each successful \ins{} or \del{} operation (i.e., an operation that succeeds to insert a new key or delete an existing key respectively) update the metadata to reflect the operation's effect on the size. 

\paragraph{Operations help concurrent operations on the same key update metadata}
To prevent operations from exposing inconsistencies similar to the examples of \Cref{fig:non-atomic-update-causes-conflicts,fig:non-atomic-update-causes-negative-size}, we linearize data-structure operations that alter the size at the time the metadata is updated to reflect them (informally, linearizing means logically considering them as applied). Dependent data-structure operations are adapted to comply with the new linearization point: if they observe that successful \ins{} or \del{} operations that they depend on have accomplished their original linearization point, they help them update the metadata so that they reach their new linearization point. For example, a \contains($k$) that encounters a node with the key $k$ inserted by a concurrent \ins($k$) cannot return \codestyle{true} before ensuring that the \ins{} is reflected in the metadata.

We focus on data structures implementing a set (i.e., a collection of distinct keys) or a dictionary (i.e., a collection of distinct keys with associated values) that provide standard \ins{}, \del{} and \contains{} operations. In such data structures, an operation on some key logically depends only on operations on the same key.
Accordingly, when an operation on some key encounters a node with that key, it acts as follows: if the node is unmarked, it updates the metadata on behalf of the \ins{} operation that inserted the node, to guarantee it is complete (in case the metadata is not yet updated with this \ins{}); if the node is marked as deleted, it ensures the metadata reflects the \del{} operation that marked the node before proceeding with its own execution.

\paragraph{Successful operations leave a trace for helpers} 
For operations to help unfinished operations on the same key to update the metadata, they must observe these unfinished operations. To facilitate this, successful \ins{} and \del{} operations prepare an \codestyle{UpdateInfo} object with the information required by helpers for updating the metadata on their behalf, and reference it from the node on which they operate. An \ins{} creates an \codestyle{UpdateInfo} object and places a reference to it in the node it is about to link, in a new \codestyle{insertInfo} field we add to node objects. 
A \del{} also creates an \codestyle{UpdateInfo} object, and needs to reference it from the node it deletes. To this end, we rely on a deletion pattern introduced by \citet{harris2001pragmatic} and commonly used in concurrent data structures \citep[e.g.,][]{harris2001pragmatic,heller2005lazy,herlihy2007simple,fraser2004practical,sundell2005fast}, where a node is first marked as deleted and then physically unlinked. We install the delete information together with the marking, as demonstrated in the following examples.

When the original marking step already marks the node as deleted by installing an object with information about the delete operation (this is true, for instance, for the binary search tree of \citet{ellen2010non}), then a \codestyle{deleteInfo} field referencing the \del{}'s \codestyle{UpdateInfo} object may be simply placed inside that object.
When the original marking step nullifies the node's value field (as in Java's \texttt{ConcurrentSkipListMap}), in the transformed data structure instead of setting the value field to \nul{}, it may be set to a reference to the \codestyle{UpdateInfo} object.
When the original marking step sets a bit in the node's \codestyle{next} field (like in Harris's linked list \cite{harris2001pragmatic}), a new \codestyle{deleteInfo} field in the node may be set to reference the \codestyle{UpdateInfo} object before the marking step.

\paragraph{Metadata is updated before unlinking a marked node}
The metadata must be updated on behalf of a \del{} before the relevant node is unlinked.
To see why, assume the metadata is updated to reflect a \del($k$) only after it completes to operate on the data structure, including unlinking the node with the key $k$. In this circumstance, a dependent operation like \contains($k$) might run in between, and then it will not observe the relevant node and will thus not assist the \del{} operation update the metadata. Such a \contains($k$) would return \codestyle{false} though \del($k$) has not yet updated the metadata, hence, is not yet linearized. 
Therefore, the metadata is updated before any unlinking attempt:
the \del($k$) operation itself updates the metadata after marking the node and before unlinking it;
and any other operation that attempts to help unlinking the marked node is also required to update the metadata on behalf of \del($k$) beforehand.

\paragraph{Adding size functionality}
An instance of a \codestyle{SizeCalculator} object (described in detail in \Cref{section:SizeCalculator}), responsible for keeping the metadata and computing the size, is referenced from the transformed data structure, and a \size{} method that uses it to retrieve the size is added to the data structure.

\begin{figure}
\begin{lstlisting}
INSERT = 0, DELETE = 1
@\underline{class TransformedDataStructure}@:
    @\underline{TransformedDataStructure()}@:
        initialize as originally
        sizeCalculator = new SizeCalculator()
    @\underline{contains(k)}@:
        @search$^*$ for a node with k as originally@
        if not found: return false
        else if found unmarked node:
            sizeCalculator.updateMetadata(node.insertInfo, INSERT)
            return true
        else: // found marked node
            sizeCalculator.updateMetadata(node@'@s deleteInfo, DELETE)
            return false
    @\underline{insert(k)}@:
        @search$^*$ for the place to insert k as originally@
        if k is already present in an unmarked node: 
            sizeCalculator.updateMetadata(node.insertInfo, INSERT)
            return failure
        if k is present in a marked node:
            sizeCalculator.updateMetadata(node@'@s deleteInfo, DELETE)
        insertInfo = sizeCalculator.createUpdateInfo(INSERT)
        @allocate newNode as originally with k and the other relevant data, and additionally with insertInfo@
        @insert newNode as originally (in case of failure proceed as originally)@
        sizeCalculator.updateMetadata(insertInfo, INSERT)
        return success
    @\underline{delete(k)}@:
        @search$^*$ for a node with k as originally@
        if not found: return failure
        if found a marked node:
            sizeCalculator.updateMetadata(node@'@s deleteInfo, DELETE)
            return failure
        sizeCalculator.updateMetadata(node.insertInfo, INSERT)
        deleteInfo = sizeCalculator.createUpdateInfo(DELETE)
        @mark node with deleteInfo (in case of failure proceed as originally)@
        sizeCalculator.updateMetadata(deleteInfo, DELETE)
        unlink node
        return success
    @\underline{size()}@:
        return sizeCalculator.compute()
@$^*$For each encountered marked node along the search, in case of unlinking it in the original algorithm, call sizeCalculator.updateMetadata(node's deleteInfo, DELETE) before unlinking it.@
\end{lstlisting}
\caption{A transformed data structure}\label{fig: transformed data structure}
\end{figure}

\subsection{Specific Examples and the \codestyle{SizeCalculator} Object}
\Cref{fig: transformed data structure} demonstrates how the transformation described above may be applied to standard linearizable linked list, skip list and hash table that implement a set. 
A similar transformation with minor adaptations will apply to implementations of a dictionary.
The transformation may also be applied to search trees with some adaptations.

At the core of our size mechanism stands the  \codestyle{SizeCalculator} object. We elaborate on this object, responsible for the size calculation, in \Cref{section:SizeCalculator}.
For now we just need to be familiar with its interface methods: \codestyle{updateMetadata} is called with an \codestyle{UpdateInfo} object associated with an \ins{} or \del{} operation for updating the metadata stored in the \codestyle{SizeCalculator} to reflect that operation. This method may be called by both the operation initiator and helpers. We explain in \Cref{section:metadata} how \codestyle{SizeCalculator} prevents double update of the metadata on behalf of the same operation. 
\codestyle{createUpdateInfo} is called by \ins{} and \del{} operations to produce an object that will be published to helpers, with the information required for updating the metadata on their behalf. 
\codestyle{compute} is the method used to retrieve the size of the data structure efficiently (using the metadata).

A \codestyle{SizeCalculator} reference field named  \codestyle{sizeCalculator} is placed in the data structure, and initialized to hold a \codestyle{SizeCalculator} instance. Its methods are called in the appropriate places in the data-structure operations, as can be seen in \Cref{fig: transformed data structure}.
In addition, an \codestyle{insertInfo} field referencing an \codestyle{UpdateInfo} object is placed in the data structure's node objects. A similar \codestyle{deleteInfo} field is placed in the appropriate place, as described above.
Since the \codestyle{UpdateInfo} record contains the information required for updating the metadata to reflect the associated operation, its content is coupled with the size metadata, so its description is deferred to \Cref{section:metadata}.

\subsection{Applicability}
We focus on a transformation for linearizable data structures that implement the highly prevalent set or dictionary data types. However, the presented ideas may be adapted to other data types. 
Our transformation recipe requires that the \del{} operation of the original data structure be linearized at a marking step and not at an unlinking step, to ensure consistency with the size metadata. Otherwise, if operations on $k$ that encounter a marked node with the key $k$ ignore the mark, and consider $k$ as deleted only when its node is unlinked, that might be inconsistent with the size metadata which is updated to reflect the deletion before the unlinking.
Instead, by our requirement, operations on $k$ in the original data structure consider the node as removed when it is marked, and in the transformed data structure they help update the metadata on behalf of the \del($k$) that marked the node and only then treat the key $k$ as deleted. 

In case of a data structure that linearizes the \del{} operation at an unlinking step and not in the prior marking step, it is usually not difficult to adjust it to have the marking as the linearization point of \del{}. We made this adjustment to the binary search tree of \citet{ellen2010non} in order to apply the transformation to it and evaluate its performance.

\section{The size metadata}\label{section:metadata}
In our transformation, operations may help other operations update the metadata. Hence, we must prevent a double update of the metadata on behalf of the same operation. 
We use metadata which enables threads that operate on the data structure to determine whether the metadata already reflects a certain operation, and update it otherwise.
The \codestyle{SizeCalculator} object holds the array \codestyle{metadataCounters} as the metadata, containing two counters per thread: an insertion counter and a deletion counter, which indicate the number of successful insertions and deletions the thread has performed so far on the data structure.
Separating the insertion from the deletion counter allows determining whether an \ins{} (or a \del{}) operation has been reflected in the counters. If an \ins{} follows a \del{}, a single counter (incremented on each insertion and decremented on each deletion) cannot indicate if the two operations completed or none of them.
Two separate counters allow a simple concise indication of which one of the two operations is reflected in the counters. 
Next we describe how insertions are handled; deletions are handled similarly.

The per-thread monotonic insertion counters enable to immediately detect whether a certain \ins{} operation by a certain thread is reflected in the metadata, and otherwise ensure that it is reflected via a single \cas{}:
When \codestyle{updateMetadata} is called on behalf of a thread $T$'s $i$-th successful \ins{} operation by either $T$ or helpers,
if $T$'s insertion counter is $\geq i$, it leaves the counter as is since the operation is already reflected in the metadata; else, it uses a \cas{} to increment it from $i-1$ to $i$. There is no need to repeat the \cas{} in case of failure, since that might happen only when another thread succeeds to perform the same update.

To help another operation update the metadata, a helper needs to know on which counter in \codestyle{metadataCounters} it should operate and its target value.
This dictates the information that the $i$-th \ins{} operation by thread $T$ leaves for helpers in an \codestyle{UpdateInfo} object: $T$'s thread ID, which will be used as an index to the \codestyle{metadataCounters} array, and $i$, which is the counter's target value (which is simply the current value of $T$'s insertion counter in \codestyle{metadataCounters} plus $1$).

The size may be calculated from \codestyle{metadataCounters} as the difference between the sum of insertion counters and the sum of deletion counters. But naively reading the values one by one may result in an inconsistent (non-linearizable) size, because we may obtain a collection of values that never existed simultaneously in the array. We need to obtain a snapshot of values the array counters had at some point in time, but we cannot use locks to achieve this atomicity as we aim for a wait-free \size{}. Next we explain how we manage to achieve that.

\section{Mechanism for wait-free size}\label{section:wait-free size}
The \size{} operation needs to obtain a linearizable snapshot of the \codestyle{metadataCounters} array, from which it will be able to compute a consistent size. As the size of this array is twice the number of threads, our solution is the most beneficial (in comparison to computing the size by iterating over a snapshot of the data structure) for applications that usually use data structures with much more elements than the number of threads. If \size{} naively read \codestyle{metadataCounters} cell by cell, it could obtain an inconsistent view of the array. For example, consider an execution in which a \size{} operation starts scanning the array, but after it reads the insertion counter of some thread $T$, this thread inserts an item and then removes it. Now both $T$'s insertion and deletion counters equal $1$, and when the \size{} operation resumes it reads the new value of $T$'s deletion counter and returns $-1$ as the size. The problem here is that the \size{} operation captured the \del{}'s update of the array but missed the preceding \ins{}'s update.

To overcome this problem and obtain a linearizable snapshot of the counters array in a wait-free form, we adopt the basic idea of \cite{jayanti2005optimal}'s single-scanner single-writer snapshot algorithm, which is as follows.
After an \up{} operation writes to the main array, it checks if a concurrent \scan{} operation is in the process of collecting the main array's values. If so, the \scan{} has maybe already read the relevant cell and missed the new value, thus the \up{} forwards the new value from the main array to a designated second array. The \scan{} operation begins with a collection phase to collect the main array values; before starting the collection it announces it to other operations, and after the collection it announces its completion. In a second phase, the \scan{} retrieves a linearizable view of the array by combining the values it collected with newer values, forwarded to the designated second array by concurrent \up{} operations (namely, each forwarded value is adopted in place of the value that the \scan{} collected from the corresponding cell in the main array).
A \scan{} is linearized at the point it announces completing the collection. 
It might miss values that were written to the main array by some \up{} operations while it was collecting, thus, such operations are retrospectively linearized immediately after the \scan{}'s linearization point. 
We bring the linearization details of \up{} adapted to our context in \Cref{section:linearizability}.

Our \size{} operation acts in the spirit of Jayanti's \scan{} to obtain a view of the metadata array, and data-structure operations that update the metadata array (on behalf of their own operation or to help another operation) act in the spirit of Jayanti's \up{} to inform a concurrent \size{} of a new value it might have missed.
However, Jayanti's basic idea supports a single scanner.
When multiple \size{} operations execute concurrently, we cannot let each \size{} take an independent snapshot of the metadata array, because the linearization point of a \size{} operation determines the linearization points of updating operations it missed, and concurrent independent \size{} operations might determine contradicting linearization points for concurrent updates. Thus, we need to make sure that concurrent \size{} operations yield the same consistent snapshot of the metadata array.

To this end, we introduce a \codestyle{CountersSnapshot} object (on which we detail in \Cref{section:CountersSnapshot}). Concurrent \size{} operations coordinate with each other through a \codestyle{CountersSnapshot} instance, similarly to concurrent snapshot operations in \cite{petrank2013lock} that use a shared object to orchestrate taking a snapshot concurrently. 
A \size{} operation needs to first obtain a \codestyle{CountersSnapshot} instance to operate on.
At any given point in time, at most one collecting \codestyle{CountersSnapshot} instance (in which the collection has not yet completed) is announced. If a \size{} operation observes such an instance, it operates on it, so that it returns the same size as the \size{} that announced this instance. Otherwise, the \size{} operation produces a new instance, announces it and operates on it.

The \codestyle{CountersSnapshot} holds a snapshot array for taking a snapshot of the metadata array. \size{} operations that operate on a certain \codestyle{CountersSnapshot} instance collect values into its snapshot array (using a \cas{} from an initial \codestyle{INVALID} value to the value obtained from the metadata array), and operations that concurrently update the metadata array forward their values into the snapshot array. 
After a collection phase, a \size{} operation needs to compute the size based on the counters in the snapshot array. But the array is not stable---updating operations might be still forwarding values. For all \size{} operations that operate on the same \codestyle{CountersSnapshot} instance to agree on the same size, we place a \codestyle{size} field in \codestyle{CountersSnapshot}, initialized to \codestyle{INVALID}. The first \size{} operation to compute a size by traversing the snapshot array and then perform a \cas{} of the \codestyle{size} field from \codestyle{INVALID} to its computed size, determines the size value. Concurrent \size{} operations will adopt this value. Any value forwarded to a counter in the snapshot array after the thread that determined the size read this counter is ignored (and its related operation is linearized after the \size{}).

\begin{figure}[t]
\begin{lstlisting}
@\underline{\textbf{class} UpdateInfo}@:
    int tid
    long counter
@\underline{\textbf{class} SizeCalculator}@:
    long[][] metadataCounters
    CountersSnapshot countersSnapshot 
@\underline{\textbf{class} CountersSnapshot}@:
    long[][] snapshot
    boolean collecting
    long size
\end{lstlisting}
\caption{Classes fields}\label{fig:Classes fields}
\end{figure}

\begin{figure}
\begin{lstlisting}
@\underline{\textbf{class} SizeCalculator}@:
    @\underline{SizeCalculator()}@:@\label{code:SizeCalculator ctor}@
        this.metadataCounters = new long[n][PADDING] // implicitly initialized to zeros
        this.countersSnapshot = new CountersSnapshot()
        this.countersSnapshot.collecting.setVolatile(false)
    @\underline{compute()}@:
        activeCountersSnapshot = _obtainCollectingCountersSnapshot()@\label{code:SizeCalculator compute start collection}@
        _collect(activeCountersSnapshot)@\label{code:SizeCalculator compute collect}@
        activeCountersSnapshot.collecting.setVolatile(false)@\label{code:SizeCalculator compute stop collection}@
        return activeCountersSnapshot.computeSize()
    @\underline{\_obtainCollectingCountersSnapshot()}@:
        currentCountersSnapshot = this.countersSnapshot.getVolatile()@\label{code:SizeCalculator obtain get existing start}@
        if currentCountersSnapshot.collecting.getVolatile():@\label{code:SizeCalculator obtain current collecting}@
            return currentCountersSnapshot@\label{code:SizeCalculator obtain get existing end}@
        newCountersSnapshot = new CountersSnapshot()@\label{code:SizeCalculator obtain get new start}@
        witnessedCountersSnapshot = this.countersSnapshot.compareAndExchange(currentCountersSnapshot, newCountersSnapshot):@\label{code:SizeCalculator obtain cas}@
        if witnessedCountersSnapshot == currentCountersSnapshot:
            return newCountersSnapshot
        return witnessedCountersSnapshot // our exchange failed, adopt the value written by a concurrent thread@\label{code:SizeCalculator obtain get new end}@
    @\underline{\_collect(targetCountersSnapshot)}@:
        for each tid:
            for opKind in (INSERT, DELETE):
                targetCountersSnapshot.add(tid, opKind, this.metadataCounters[tid][opKind].getVolatile())@\label{code:SizeCalculator collect add}@
    @\underline{updateMetadata(updateInfo, opKind)}@:
        tid = updateInfo.tid
        newCounter = updateInfo.counter
        if this.metadataCounters[tid][opKind].getVolatile() == newCounter - 1:@\label{code:SizeCalculator updateMetadata update array start}@
           this.metadataCounters[tid][opKind].compareAndSet(newCounter - 1, newCounter)@\label{code:SizeCalculator updateMetadata update array end}@
        currentCountersSnapshot = this.countersSnapshot.getVolatile()@\label{code:SizeCalculator updateMetadata obtain CountersSnapshot}@
        if currentCountersSnapshot.collecting.getVolatile() and @\label{code:SizeCalculator updateMetadata check collecting}@
              this.metadataCounters[tid][opKind].getVolatile() == newCounter:@\label{code:SizeCalculator updateMetadata obtain and check counter}@
            currentCountersSnapshot.forward(tid, opKind, newCounter)@\label{code:SizeCalculator updateMetadata forward}@
    @\underline{createUpdateInfo(opKind)}@:
        return new UpdateInfo(threadID, this.metadataCounters[threadID][opKind].getVolatile() + 1)
\end{lstlisting}
\caption{\codestyle{SizeCalculator} methods}\label{fig:SizeCalculator}
\end{figure}

\subsection{\codestyle{SizeCalculator} Details}\label{section:SizeCalculator}
Each transformed data structure holds a \codestyle{SizeCalculator} instance associated with it, responsible for calculating the size by holding the metadata and operating on it.
The fields of \codestyle{SizeCalculator} (as well as the other classes we use) are detailed in \Cref{fig:Classes fields}, and its pseudocode appears in \Cref{fig:SizeCalculator}.

The \codestyle{SizeCalculator} object contains two fields: The first is \codestyle{metadataCounters}, holding the size metadata---an array with an insertion counter and a deletion counter per thread, with padding between the cells of each thread and the next one so that the counters of the different threads are placed in separate cache lines to avoid false sharing. The second field is \codestyle{countersSnapshot}, that holds the most recent \codestyle{CountersSnapshot} instance. In its constructor method (appearing in \Cref{code:SizeCalculator ctor}), \codestyle{SizeCalculator} initializes \codestyle{metadataCounters} with zeros, and \codestyle{countersSnapshot} with a dummy instance with its \codestyle{collecting} flag set to \codestyle{false}, to signal that it is not collecting and future \size{} operations should use a new instance.

The \codestyle{compute} method is called by the \size{} operation of a transformed data structure. It starts with a collection phase in \Crefrange{code:SizeCalculator compute start collection}{code:SizeCalculator compute stop collection}. First it needs to announce a new collection if there is no ongoing collection. To this end it calls the private method \codestyle{\_obtainCollectingCounters\-Snapshot}. The latter returns the most recent \codestyle{CountersSnapshot} if this instance is still collecting (\Crefrange{code:SizeCalculator obtain get existing start}{code:SizeCalculator obtain get existing end}), so that the current \codestyle{compute} would cooperate with ongoing \codestyle{compute} calls. Otherwise, \codestyle{\_obtainCollectingCounters\-Snapshot} tries to place a new \codestyle{CountersSnapshot} instance in \codestyle{counters\-Snapshot} using a \cas{}, and returns the new \codestyle{countersSnapshot} value, whether it is an instance placed by itself or an instance placed by another \codestyle{compute} call (\Crefrange{code:SizeCalculator obtain get new start}{code:SizeCalculator obtain get new end}).
With an \codestyle{activeCounters\-Snapshot} instance in a collecting mode, \codestyle{compute} calls the private method \codestyle{\_collect} (\Cref{code:SizeCalculator compute collect}), to add all \codestyle{metadataCounters} values to \codestyle{activeCounters\-Snapshot}. Then, it unsets \codestyle{activeCountersSnapshot}'s \codestyle{collecting} flag.
Now that its collection phase is complete, \codestyle{compute} computes the size according to the \codestyle{CountersSnapshot} instance maintained in \codestyle{activeCounters\-Snapshot}.
This is done using the \codestyle{computeSize} method of \codestyle{CountersSnapshot}, on which we detail in \Cref{section:CountersSnapshot}.

\codestyle{updateMetadata(UpdateInfo(tid, c), INSERT)} is called on behalf of the $c$-th successful \ins{} operation by thread $tid$. We describe how the method handles insertions for convenience; the same applies for deletions by passing \codestyle{opKind=DELETE}. The method first updates the relevant counter in the metadata array, i.e., \codestyle{metadataCounters[tid][INSERT]}, to be $c$ (\Crefrange{code:SizeCalculator updateMetadata update array start}{code:SizeCalculator updateMetadata update array end}), using a \cas{} to avoid overriding concurrent updates. 
At this point, the metadata reflects the discussed insertion.
Then, according to the described-above scheme, \codestyle{updateMetadata} should also forward the counter value $c$ to concurrent \size{} operations that take a snapshot of the \codestyle{metadataCounters} array and might have missed this value. 
For that, it performs the following steps:
(1) obtain the current collecting \codestyle{CountersSnpashot} instance (\Cref{code:SizeCalculator updateMetadata obtain CountersSnapshot});
(2) verify it is still collecting (\Cref{code:SizeCalculator updateMetadata check collecting});
(3) obtain the relevant counter from the metadata array and verify it still holds the value $c$ (\Cref{code:SizeCalculator updateMetadata obtain and check counter}); and then finally, if these checks pass,
(4) call the \codestyle{forward} method of the \codestyle{CountersSnpashot} instance obtained in the first step (\Cref{code:SizeCalculator updateMetadata forward}).
This series of steps is intended to prevent redundant forwarding. Though it is not yet clear now, it guarantees a constant time complexity for the \codestyle{forward} method, as we prove in \Cref{section:time complexity}.

The last method of \codestyle{SizeCalculator} is \codestyle{createUpdateInfo}, which is called by \ins{} and \del{} operations to obtain an \codestyle{UpdateInfo} instance for publication to helpers. \codestyle{createUpdateInfo} creates an \codestyle{UpdateInfo} instance with \codestyle{tid=threadID} and \codestyle{counter}=$c$, where \codestyle{threadID} is the ID of the calling thread (\codestyle{threadID} values are assumed to start from $0$, and could be obtained e.g. from a thread-local variable), and $c$ is the current value of the relevant counter (that indicates how many successful operations of the requested kind have been executed by the calling thread so far) plus $1$---as the calling thread is about to attempt its \codestyle{c}-th operation of this kind.

\begin{figure}
\begin{lstlisting}
@\underline{\textbf{class} CountersSnapshot}@:
    @\underline{CountersSnapshot()}@:@\label{code:CountersSnapshot ctor}@
        this.snapshot = new long[n][2]
        setVolatile all cells of this.snapshot to INVALID
        this.collecting.setVolatile(true)
        this.size.setVolatile(INVALID)
    @\underline{add(tid, opKind, counter)}@:
        if this.snapshot[tid][opKind].getVolatile() == INVALID:
            this.snapshot[tid][opKind].compareAndSet(INVALID, counter)@\label{code:CountersSnapshot snapshot cas in add}@
    @\underline{forward(tid, opKind, counter)}@:
        snapshotCounter = this.snapshot[tid][opKind].getVolatile()
        while (snapshotCounter == INVALID or @ @ counter > snapshotCounter): // will execute at most two iterations
            witnessedSnapshotCounter = this.snapshot[tid][opKind].compareAndExchange(snapshotCounter, counter):@\label{code:CountersSnapshot snapshot cas in forward}@
            if witnessedSnapshotCounter == snapshotCounter:
                break
            snapshotCounter = witnessedSnapshotCounter
    @\underline{computeSize()}@:
        computedSize = 0@\label{code:CountersSnapshot computeSize compute start}@
        for each tid:
            computedSize += this.snapshot[tid][INSERT].getVolatile() - @ @ this.snapshot[tid][DELETE].getVolatile()@\label{code:CountersSnapshot computeSize compute end}@
        witnessedSize = this.size.compareAndExchange(INVALID, computedSize)@\label{code:CountersSnapshot computeSize CAS}@
        if witnessedSize == INVALID:
            return computedSize
        return witnessedSize // our exchange failed, adopt the size written by a concurrent thread
\end{lstlisting}
\caption{\codestyle{CountersSnapshot} methods}\label{fig:CountersSnapshot}
\end{figure}

\subsection{\codestyle{CountersSnapshot} Details} \label{section:CountersSnapshot}
A new \codestyle{CountersSnapshot} instance is announced in \codestyle{SizeCalculator.countersSnapshot} each time a new collection phase starts (which happens every time a \size{} operation starts and observes that the last announced \codestyle{CountersSnapshot} instance is already not collecting). This instance coordinates the current size calculation among all concurrent \size{} calls that use it to compute the size.
Its methods appear in \Cref{fig:CountersSnapshot} and its fields appear in \Cref{fig:Classes fields}. 

The \codestyle{CountersSnapshot} object holds a snapshot array called \codestyle{snapshot} for taking a snapshot of the metadata array, from which the size will be computed. It also holds a \codestyle{collecting} field that indicates whether the collection of values into \codestyle{snapshot} is still ongoing, and a \codestyle{size} field that will eventually hold the computed size.
In its constructor method (appearing in \Cref{code:CountersSnapshot ctor}), \codestyle{CountersSnapshot} initializes all its fields. The cells of \codestyle{snapshot} are set to \codestyle{INVALID} (which may have the value \codestyle{Long.MAX\_VALUE} for instance), the \codestyle{collecting} flag is set to \codestyle{true} and \codestyle{size} is set to \codestyle{INVALID}.

The \codestyle{add} method is called by \size{} operations to collect values into the snapshot array. It performs a \cas{} on the requested cell to the requested value only if the current value is \codestyle{INVALID}. Otherwise, another operation has already collected a value to this cell and there is no need to perform another \cas{}. 
Indeed, the value that this \size{} operation fails to add might be missed during the size calculation if it is not forwarded on time by the updating operation associated with it, but this does not foil linearizability, as the updating operation associated with this value is retrospectively linearized after the \size{}.

\codestyle{forward(tid,INSERT,c)} is called by \codestyle{updateMetadata} that was called on behalf of the $c$-th successful \ins{} operation by thread $tid$. It is called after the insertion counter of thread $tid$ in the metadata array is set to $c$, to ensure that the insertion counter of that thread in the snapshot array contains a value $\geq c$. 
We prove in \Cref{section:time complexity} that the \codestyle{forward} method shall execute at most two \cas{} attempts before reaching this goal. 
\codestyle{forward} operates similarly for deletions when called with an \codestyle{opKind=DELETE} argument.

The \codestyle{computeSize} method is called by the \codestyle{compute} method of \codestyle{SizeCalculator} (which is called by the data structure's \size{} method), after obtaining a \codestyle{CountersSnapshot} instance and completing the collection to this instance, so that its \codestyle{snapshot} array is filled with meaningful values (rather than \codestyle{INVALID} values). 
The size is computed as the difference between the sum of insertion counters and the sum of deletion counters in the \codestyle{snapshot} array (\Crefrange{code:CountersSnapshot computeSize compute start}{code:CountersSnapshot computeSize compute end}). 
But \codestyle{computeSize} may be called by multiple concurrent \size{} operations that operate on the same \codestyle{CountersSnapshot} instance, and each of them might compute a different size because values may be concurrently forwarded to the array. Only the first \codestyle{computeSize} call to fix the size it computed in the \codestyle{size} field (in \Cref{code:CountersSnapshot computeSize CAS}), determines the size value that they will all adopt. The rest of them will fail to \cas{} and will adopt its value.

\subsection{Memory Model}\label{section:memory model}
The pseudocode brought in this section aligns with our Java implementation \cite{artifactConcurrentSize} (evaluated in \Cref{section:performance}) and accesses variables using volatile memory semantics to ensure the visibility required for correctness in accordance with the Java memory model. Read, write and \cas{} operations on non-final fields of shared objects are performed with volatile semantics (in our Java implementation this is achieved using volatile variables, \codestyle{VarHandle}s and \codestyle{AtomicReferenceFieldUpdater}s). These volatile-semantics accesses are considered synchronization actions, over which the Java memory model guarantees a synchronization order (a total order which is consistent with the program order of each thread, and where a read from a volatile variable returns the last value written to it by the synchronization order). 
A similar implementation could be designed in C++ according to its memory model guarantees, utilizing the \codestyle{std::atomic} library to order accesses to shared memory.

\section{Optimizations}\label{section:optimizations}

The following optimizations may be applied in our methodology, and we apply them in our implementation \cite{artifactConcurrentSize} measured in \Cref{section:performance}.

\subsection{Eliminate Metadata Update on Behalf of Completed Insertions}
When an insertion is complete, there is no need that future operations on the inserted node update the metadata on behalf of that insertion.
To this end, after a thread calls \codestyle{updateMetadata} to update the metadata on behalf of some \ins{} operation that inserted a node $N$, it may set $N$.\codestyle{insertInfo} to \nul{}, to signal that the metadata already reflects the insertion and there is no need to call \codestyle{updateMetadata}.
Before calling \codestyle{updateMetadata}, threads will perform a \nul{} check to the node's \codestyle{insertInfo} to rule if the call is necessary.

We do not propose a similar modification for deletions since deleted nodes are unlinked from the data structure when the deletion completes and cause no more update activity, unlike inserted nodes which, without the optimization, cause a redundant \codestyle{updateMetadata} call on each operation on the node.

\subsection{Size Backoff}
Each \size{} operation operates on a \codestyle{CountersSnapshot} instance it obtains as follows. It collects values into its \codestyle{snapshot} array using \cas{} operations, uses the collected counters to compute the size, and finally sets its \codestyle{size} field to the computed size using a \cas{}, unless another \size{} operation has done that beforehand.

Exponential backoff may be used to reduce contention among concurrent \size{} operations caused by their \cas{} operations on the \codestyle{snapshot} and \codestyle{size} fields.
Each time a \size{} operation obtains an existing \codestyle{CountersSnapshot} instance that was announced by another \size{} operation, it may wait a while to let another \size{} operation complete the size calculation. After waiting, if the calculation is not yet complete (which may be detected by an \codestyle{INVALID} value in the \codestyle{size} field), it shall collect, compute the size and try to set it on its own.

\subsection{Check for an Already-Set Size}
There are occasions where we may avoid contention and redundant work by obtaining the \codestyle{size} field of \codestyle{CountersSnapshot} and returning it in case it does not equal \codestyle{INVALID}. This may be done when \codestyle{SizeCalculator}'s \codestyle{\_obtainCollectingCountersSnapshot} method observes a concurrent \size{} operation in \Cref{code:SizeCalculator obtain get existing end,code:SizeCalculator obtain get new end}; at the beginning of \codestyle{CountersSnapshot}'s \codestyle{computeSize} method; and right before \codestyle{computeSize} performs a \cas{} attempt.
\section{Methodology properties}\label{section:properties}

\subsection{Linearizability}\label{section:linearizability}
A linearizable data structure transformed according to our methodology to support a \size{} operation, remains linearizable.
Recall that an operation has its original linearization point, when its linearization is defined in the original set data structure, but we linearize operations in the transformed data structure only when the metadata is updated.
Next, we detail the linearization points of a transformed set's operations, and use them to prove linearizability.

\subsubsection{Linearization Points}\label{section:linearization points}
A \size{} operation is linearized at the announcement of the collection completion. 
For a successful \ins{} or \del{} operation, the associated metadata counter is updated to reflect the operation (by either the operation initiator or helpers), and if this update happens when no \size{} is collecting, then the operation is simply linearized at the update.
However, if the update is performed while some \size{} is collecting, then the operation is linearized according to that \size{} to comply with its linearization point: if the \size{} takes the operation into account then the operation is simply linearized at the metadata counter update; otherwise, it is retrospectively linearized immediately after the linearization point of that \size{}.
Finally, a \contains{} operation and a failing \ins{} or \del{} operation (namely, one that fails to insert a new key or delete an existing key respectively, and returns a failure), are linearized like in the original data structure, unless the operation they ``depend on'', namely, the last successful update operation on the same key whose original linearization point precedes their original linearization point (a concurrent successful \ins{} of the same key in case of \contains{} returning \codestyle{true} and a failing \ins{};
and a concurrent successful \del{} in case of \contains{} returning \codestyle{false} and a failing \del{}) is not yet linearized at their original linearization point, in which case they are linearized immediately after this operation is linearized.

In more detail, a \size{} operation is linearized when the \codestyle{collecting} field of the \codestyle{CountersSnapshot} instance it operates on is set to \codestyle{false} for the first time (in \Cref{code:SizeCalculator compute stop collection}).
Regarding a successful \ins{} operation, the associated metadata counter is updated as follows: a \cas{} of \codestyle{sizeCalculator.metadata\-Counters[tid][INSERT]} to $c$ is performed on behalf of the $c$-th successful \ins{} operation of thread $tid$ (in \Cref{code:SizeCalculator updateMetadata update array end}), where \codestyle{sizeCalculator} is the \codestyle{SizeCalculator} instance held by the transformed data structure. For a successful \del{} operation, the only difference is that \codestyle{DELETE} is used as the array index.
As for the linearization point of such an \ins{} or \del{} operation---if a \codestyle{CountersSnapshot} instance with a \codestyle{collecting} field set to \codestyle{true} is not announced in \codestyle{sizeCalculator.countersSnapshot} when the metadata counter update is performed, then the operation is linearized at the metadata counter update (namely, at the \cas{} in \Cref{code:SizeCalculator updateMetadata update array end}). Else, the operation is linearized according to this \codestyle{CountersSnapshot} instance: if the \size{} operation that sets its \codestyle{size} field read a value $\geq c$ from the relevant counter (in \Cref{code:CountersSnapshot computeSize compute end}), then the operation is linearized at the metadata counter update (as in the previous case); otherwise, it is linearized immediately after the linearization point of that \size{} operation.

In the above specification, several operations might be linearized at the same moment---either operations defined to be linearized immediately after each other, or operations linearized at the exact same moment (e.g., several \size{} operations operating on the same \codestyle{CountersSnapshot} instance). We order operations that are linearized at the same moment one after the other as follows: \size{} operations (if any) are placed first; the order among them is arbitrary. Successful update operations (if any) are placed after the size operations according to their metadata-counter update order. Each \contains{} or failing \ins{} or \del{} call that is not linearized at its original linearization point (if any) is placed right after the successful update operation it depends on; the order among such operations which are placed after the same successful update is arbitrary.

\subsubsection{Linearizability Proof}\label{section:Linearizability proof}
We prove that our transformation in linearizable using the equivalent definition of linearizability that is based on linearization points (see \cite[Section 7]{sela2021linearizabilityfull} and the atomicity definition in \cite{lynch1996distributed}). We need to show that (1) each linearization point occurs within the operation's execution time, and that (2) ordering an execution's operations (with their results) according to their linearization points forms a legal sequential history.

We prove Property (1) in \Cref{claim:lin within op} and Property (2) in \Cref{claim:results comply} in \Cref{section:abstract set lin proof}.

\begin{claim}\label{claim:lin within op}
The linearization point of each operation occurs within its execution time.
\end{claim}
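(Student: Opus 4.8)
The plan is to prove the claim by induction-free case analysis over the four classes of operations whose linearization points were defined in \Cref{section:linearization points}: (a) \size{} operations; (b) successful \ins{}/\del{} operations; (c) \contains{} operations and failing \ins{}/\del{} operations; and, within (b) and (c), the two sub-cases distinguishing an operation linearized at its ``primary'' point from one retrospectively shifted to immediately after the linearization point of a concurrent \size{}. In each case the goal reduces to sandwiching one memory event --- a write that clears a \codestyle{collecting} flag, a \cas{} on a metadata counter, or a step of the original algorithm --- strictly between the operation's invocation and its response. As usual it suffices to treat completed operations (a pending operation either carries no linearization point or is handled identically with $+\infty$ playing the role of its response). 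Throughout I would use three elementary facts: at most one \codestyle{CountersSnapshot} instance is ever collecting at a time and a collecting instance is necessarily the one announced in \codestyle{sizeCalculator.countersSnapshot} (since a new instance is installed only when the current one is not collecting); each metadata counter and each \codestyle{snapshot} cell is monotone non-decreasing once it holds a real value; and each thread runs its own operations sequentially.

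First I would dispatch the easy cases. A \size{} operation is linearized when the \codestyle{collecting} flag of the instance $C$ it operates on is first set to false (\Cref{code:SizeCalculator compute stop collection}); $C$ is obtained inside \codestyle{\_obtainCollectingCountersSnapshot} only after the operation's invocation and $C$ becomes the announced collecting instance no earlier than that point, so the write that first clears $C$'s flag is after the invocation, and it is at or before the operation's own write at \Cref{code:SizeCalculator compute stop collection}, hence before the response. A successful \ins{}/\del{} linearized at its metadata-counter \cas{}: the \codestyle{UpdateInfo} carrying the target value $c$ is created by the operation after its invocation, so every \codestyle{updateMetadata} call on it (by the initiator or a helper) happens after the invocation; and the initiator calls \codestyle{updateMetadata} before returning, after which the counter is $\geq c$, so the unique \cas{} that raises it to $c$ occurs no later than that call. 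A \contains{} or failing \ins{}/\del{} linearized at its original linearization point: that step of the original algorithm is still executed between the transformed operation's invocation and response.

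The core of the argument is the shifted cases, starting with a successful operation. Let $P$ be thread $tid$'s $c$-th successful operation of its kind, whose metadata-counter update happens at a time $t$ at which some instance $C$ is collecting, and which the size-determining \size{} on $C$ does not account for, so $P$ is linearized just after $C$'s linearization point $\ell$. Since the update needs $P$'s \codestyle{UpdateInfo}, $t$ is after $P$'s invocation, and $\ell > t$ because $C$ is still collecting at $t$; hence $\ell$ is after $P$'s invocation. For $\ell$ before $P$'s response, inspect the \codestyle{updateMetadata} call $P$ runs before returning: while $P$'s thread is inside that call it has not yet begun its next successful operation of this kind, so the counter there equals exactly $c$. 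Thus when $P$ re-reads \codestyle{countersSnapshot} (\Cref{code:SizeCalculator updateMetadata obtain CountersSnapshot}) either (i) it reads an instance $\neq C$ --- so the announced instance has already changed, which forces $C$ to have stopped collecting before that read, pinning $\ell$ there; or (ii) it reads $C$ but with \codestyle{collecting} false (\Cref{code:SizeCalculator updateMetadata check collecting}), again pinning $\ell$ there; or (iii) it reads $C$ still collecting with counter $=c$ and therefore forwards $c$ into $C$ (\Cref{code:SizeCalculator updateMetadata forward}). In case (iii), by monotonicity of the relevant \codestyle{snapshot} cell, the size-determining read of that cell (\Cref{code:CountersSnapshot computeSize compute end}) must precede $P$'s forward --- otherwise it would be $\geq c$ and $P$ would have been accounted for --- and that read occurs after $\ell$, so $\ell$ precedes $P$'s forward, which precedes $P$'s response. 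So in all sub-cases $\ell$ lies before $P$'s response, and $P$'s shifted linearization point lies in its execution interval.

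Finally, the dependent shifted case, which I expect to be the main obstacle. Let $Q$ be a \contains{} or failing \ins{}/\del{} linearized just after the (possibly shifted) linearization point of the successful operation $P$ it depends on. That this point is after $Q$'s invocation follows from the defining condition that $P$ is not yet linearized at $Q$'s original linearization point, which is itself after $Q$'s invocation. For the upper bound, use that $Q$ returns only after calling \codestyle{updateMetadata} on $P$'s \codestyle{UpdateInfo} (the node $Q$ encountered carries it), and replay the previous paragraph's analysis with $Q$ as helper. The new difficulty is that $Q$'s thread is not $P$'s thread, so in place of case (iii) above $Q$ may find at \Cref{code:SizeCalculator updateMetadata obtain and check counter} that the counter has already advanced past $c$, and then forward nothing. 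The resolution I would use: if the counter exceeds $c$ at that moment, then by thread-sequentiality $P$'s thread has already begun its next successful operation of this kind (the one that raised the counter to $c+1$), which therefore started after $P$ returned; hence $P$'s response precedes $Q$'s check at \Cref{code:SizeCalculator updateMetadata obtain and check counter} and so precedes $Q$'s response. Combining this with ``$\ell$ before $P$'s response'' from the previous paragraph when $P$ is itself shifted (and, when $P$ is unshifted, with the observation that $Q$'s own counter-lines already force $P$'s linearization point $t$ before $Q$'s response) yields that $P$'s linearization point precedes $Q$'s response, as needed. I expect most of the labor to be the supporting bookkeeping: tracking at each instant which \codestyle{CountersSnapshot} instance is announced and collecting, maintaining the monotonicity invariants, and verifying that the forwarding guard in \Crefrange{code:SizeCalculator updateMetadata check collecting}{code:SizeCalculator updateMetadata obtain and check counter} interacts with the size-determining read exactly as this case split requires.
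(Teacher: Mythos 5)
Your proposal is correct and follows essentially the same route as the paper: the same case split over the linearization-point kinds, with the crux being the same argument that by the time the relevant \codestyle{updateMetadata} call returns the snapshot's \codestyle{collecting} flag has already been cleared (the content of \Cref{lemma:lin when updateMetadata returns}, which you inline, including the key ordering of the forwarded value against the size-determining read). The only notable difference is organizational: where the paper handles a helper that finds the counter already at $\geq c+1$ by re-examining the initiator's earlier \codestyle{updateMetadata} call and arguing it must have operated on the same \codestyle{CountersSnapshot} instance, you discharge that sub-case by observing that a counter exceeding $c$ implies the initiator has already returned and then chaining with the bound you established for the initiator's own call---a sound, slightly more direct reduction of the same case.
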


\begin{proof}

We begin with the linearization point of a \size{} operation. 
\size{} calls \codestyle{sizeCalculator.com\-pute}, which starts with calling \codestyle{\_obtainCollectingCountersSnapshot} to obtain a collecting \codestyle{Counters\-Snapshot} instance. \codestyle{\_obtainCollectingCountersSnapshot} returns an instance after its \codestyle{collecting} field has had the value \codestyle{true} at some point during this \codestyle{\_obtainCollectingCountersSnapshot} call: If this call observes that the current announced instance is collecting (in \Cref{code:SizeCalculator obtain current collecting}), it returns this instance. Otherwise, this instance cannot be used by the current \size{} because its linearization point has passed and has possibly occurred before the current \size{} started.
Thus, it creates an instance with \codestyle{collecting} set to \codestyle{true}, and if it succeeds to announce it using a \cas{} (in \Cref{code:SizeCalculator obtain cas}), it returns this instance. Else, the failure of its \cas{} indicates that another thread has in the meanwhile announced a new instance, with \codestyle{collecting} set to \codestyle{true}, and the discussed \codestyle{\_obtainCollectingCountersSnapshot} call returns such an instance.
We showed that in any of the above cases, the \codestyle{collecting} field of the obtained \codestyle{CountersSnapshot} instance was still \codestyle{true} at some point during the \codestyle{\_obtainCollecting\-Counters\-Snapshot} call, hence the \size{}'s linearization point does not occur before the \codestyle{compute} call starts. It does occur before it ends, as the \codestyle{collecting} field is set to \codestyle{false} either when this call executes \Cref{code:SizeCalculator compute stop collection}, or before if another \codestyle{compute} call has executed this line earlier.

Next, we prove that successful update operations are linearized within their execution time.
A successful \ins{} or \del{} operation calls \codestyle{updateMetadata} with its \codestyle{UpdateInfo} instance before returning. As we prove in \Cref{lemma:lin when updateMetadata returns} below, by the time \codestyle{updateMetadata} returns, the operation is guaranteed to be linearized.
Additionally, it is not linearized before the operation's execution starts, since it is linearized either at its metadata counter update or at a later point in time, and the update on behalf of a certain operation can only happen after it started and published its \codestyle{UpdateInfo} instance. 

Lastly, we show that a \contains{}, a failing \ins{} and a failing \del{} operations are linearized within their execution time. If an operation $op$ of this kind is linearized at its original linearization point, we are done\footnote{For every linearizable data structure, there exists a selection of linearization points such that each of them is placed during the execution time of the corresponding operation (see the equivalent definition of linearizability based on linearization points in Section 7 in \cite{sela2021linearizabilityfull}). Each time we refer to the linearization points of the original data structure, we refer to points that satisfy this requirement.}. Otherwise, $op$ is linearized immediately after the linearization point of an operation $op_2$ it depends on. This happens only in case $op$ observes $op_2$ and calls \codestyle{updateMetadata} on behalf of $op_2$. By \Cref{lemma:lin when updateMetadata returns}, $op_2$ is linearized by the time this \codestyle{updateMetadata} call returns. Hence, $op$ is linearized by that time as well.

\end{proof}

In the proof of \Cref{claim:lin within op} we use the following lemmas (\Cref{lemma:counter inc after first 2 lines of updateMetadata} is proven in \Cref{section:abstract set lin proof}):

\begin{lemma}\label{lemma:lin when updateMetadata returns}
When a call to \codestyle{updateMetadata} returns, the operation whose \codestyle{updateInfo} was passed to the call is guaranteed to be linearized.
\end{lemma}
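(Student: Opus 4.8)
The plan is to prove that when \codestyle{updateMetadata(updateInfo, opKind)} returns, the operation $op$ associated with \codestyle{updateInfo} (say, the $c$-th successful \ins{} of thread $tid$, with the deletion case symmetric) is linearized according to the linearization rules of \Cref{section:linearization points}. The proof splits on the state of the announced \codestyle{CountersSnapshot} instance at the moment the relevant metadata counter is set to $c$, because the linearization point of $op$ is defined differently depending on whether a \size{} was collecting at that time.

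First I would establish a basic building block (this is \Cref{lemma:counter inc after first 2 lines of updateMetadata}, which the excerpt says is proven later): after \Crefrange{code:SizeCalculator updateMetadata update array start}{code:SizeCalculator updateMetadata update array end} of a \codestyle{updateMetadata} call on behalf of $op$ execute, \codestyle{metadataCounters[tid][opKind]} holds a value $\geq c$. The \cas{} on \Cref{code:SizeCalculator updateMetadata update array end} either succeeds (the value becomes $c$), or fails because the value was already $\neq c-1$; since the counter is monotonically increasing and only ever reaches $c$ after a \cas{} from $c-1$, a failed \cas{} here means the value is already $\geq c$. Either way, once the executing thread passes \Cref{code:SizeCalculator updateMetadata update array end}, the metadata reflects $op$.

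Then the main case analysis: let $t^*$ be the first moment at which \codestyle{metadataCounters[tid][opKind]} becomes $\geq c$ — by the building block this happens no later than when our \codestyle{updateMetadata} call finishes its first two lines. Case (a): no \codestyle{CountersSnapshot} instance with \codestyle{collecting} $=$ \codestyle{true} is announced at time $t^*$. Then by the linearization rules $op$ is linearized at that metadata counter update, which is at or before the point our call returns, so we are done. Case (b): some collecting \codestyle{CountersSnapshot} instance $S$ is announced at $t^*$. Here $op$ is linearized relative to the \size{} operation that eventually fixes $S$.\codestyle{size}: either at the metadata update (if that \size{} read a value $\geq c$ from the relevant counter on \Cref{code:CountersSnapshot computeSize compute end}) or immediately after that \size{}'s linearization point. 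Either way, $op$'s linearization point is at the latest the linearization point of that \size{} operation. So it suffices to show that by the time our \codestyle{updateMetadata} call returns, that \size{} operation has already been linearized — equivalently, $S$ (or whichever \codestyle{CountersSnapshot} instance is the one whose \codestyle{size} gets fixed and governs $op$) has had its \codestyle{collecting} flag set to \codestyle{false}. This is the crux, and I expect it to be the main obstacle.

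To close the crux I would argue as follows. After \Cref{code:SizeCalculator updateMetadata update array end}, our \codestyle{updateMetadata} call reads \codestyle{countersSnapshot} (\Cref{code:SizeCalculator updateMetadata obtain CountersSnapshot}), checks \codestyle{collecting} (\Cref{code:SizeCalculator updateMetadata check collecting}) and re-reads the counter (\Cref{code:SizeCalculator updateMetadata obtain and check counter}). If the check on \Cref{code:SizeCalculator updateMetadata check collecting} sees \codestyle{collecting} $=$ \codestyle{true}, the call proceeds to \codestyle{forward(tid, opKind, c)} (\Cref{code:SizeCalculator updateMetadata forward}), which by the \codestyle{forward} postcondition ensures \codestyle{snapshot[tid][opKind]} $\geq c$ before returning; hence the \size{} operating on that instance, when it later reads that cell during \codestyle{computeSize}, reads $\geq c$, so $op$ is linearized at its metadata update — which is before our call returns. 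If instead the check on \Cref{code:SizeCalculator updateMetadata check collecting} sees \codestyle{collecting} $=$ \codestyle{false} (for the instance it read), I would argue that the \codestyle{CountersSnapshot} instance that actually governs $op$'s linearization (the one announced and collecting at $t^*$) must by then already have \codestyle{collecting} $=$ \codestyle{false}: the instance read on \Cref{code:SizeCalculator updateMetadata obtain CountersSnapshot} was read after $t^*$ and is the most recent announced one; if it is already non-collecting while $S$ was collecting at $t^*$, then either it equals $S$ (so $S$ is done collecting), or it was announced after $S$, which — since only one collecting instance is announced at a time, and a new instance is announced only after the previous one stops collecting — again forces $S$ to have stopped collecting. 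In every branch, the \size{} operation relevant to $op$ is linearized by the time \codestyle{updateMetadata} returns, so $op$ is too. I would also note the subtle point, to be handled carefully, that "the \size{} that fixes the \codestyle{size} field" is well-defined for each \codestyle{CountersSnapshot} instance and that its linearization point (setting \codestyle{collecting} to \codestyle{false}) precedes or coincides with the relevant reads — appealing to the ordering conventions in \Cref{section:linearization points} to break ties consistently.
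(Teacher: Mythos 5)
There is a genuine gap in your crux case analysis: you branch only on the \codestyle{collecting} check, claiming that whenever \Cref{code:SizeCalculator updateMetadata check collecting} sees \codestyle{collecting} $=$ \codestyle{true} the call ``proceeds to \codestyle{forward}.'' But forwarding requires \emph{both} conditions of \Cref{code:SizeCalculator updateMetadata check collecting,code:SizeCalculator updateMetadata obtain and check counter} to hold: the re-read metadata counter must still equal $c$. A delayed helper can call \codestyle{updateMetadata} on behalf of $op$ after thread $T$ has already completed $op$ and moved on to its $(c{+}1)$-st successful insertion, so the counter re-read in \Cref{code:SizeCalculator updateMetadata obtain and check counter} is $\geq c+1$; then the announced instance may well still be the collecting instance $S$ that governs $op$, yet no \codestyle{forward} is performed, and nothing in your argument shows that $S$'s \codestyle{collecting} flag has been set to \codestyle{false} (equivalently, that the governing \size{} is linearized) before this call returns. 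This is exactly the scenario to which the paper devotes the bulk of its proof: it argues that the counter can only have reached $c+1$ after $T$'s own \codestyle{updateMetadata} call for $op$ returned, shows that the instance \emph{that} call obtained in \Cref{code:SizeCalculator updateMetadata obtain CountersSnapshot} must be $S$ itself (otherwise $S$ would have collected a value $\geq c$, contradicting the assumption that the determining \size{} read a value $<c$), and concludes that $T$'s call either saw \codestyle{collecting} $=$ \codestyle{false} or forwarded $c$, so in either case $S$ stopped collecting before $T$'s call returned, hence before the helper's return. Your proposal contains no counterpart to this argument.

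A secondary, smaller flaw: in your forwarding branch you assert that the \size{} operating on the instance ``later reads that cell during \codestyle{computeSize}, reads $\geq c$, so $op$ is linearized at its metadata update.'' The determining \size{} may have read that snapshot cell \emph{before} the forward completed and obtained a value $<c$; the conclusion still holds, but for a different reason (which the paper spells out): such a read occurs only after \codestyle{collecting} has been set to \codestyle{false}, so the \size{}'s linearization point---and hence $op$'s, placed immediately after it---precedes the completion of the forward and thus the return of \codestyle{updateMetadata}. Your second branch (the \codestyle{collecting} $=$ \codestyle{false} case, including the newer-instance subcase) matches the paper's first two scenarios and is fine, and your use of \Cref{lemma:counter inc after first 2 lines of updateMetadata} as the building block is the same as the paper's; but as written the proof does not go through without handling the no-forward case and the early-read subcase above.
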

\begin{proof}
Consider a call to \codestyle{updateMetadata} on behalf of $op$, being the $c$-th successful \ins{} operation by a thread $T$ (a similar proof applies for \del{}). Denote this call by \term{updateMetadataForOp}. We need to show that $op$ has been linearized by the time \term{updateMetadataForOp} returns.
By \Cref{lemma:counter inc after first 2 lines of updateMetadata}, after executing \Crefrange{code:SizeCalculator updateMetadata update array start}{code:SizeCalculator updateMetadata update array end}, the relevant metadata counter's value is $\geq c$. 
If $op$ is linearized when its related metadata counter is set to $c$, we are done. Otherwise, the following hold: (1) the counter is set to $c$ when a \codestyle{CountersSnapshot} instance with a \codestyle{collecting} field set to \codestyle{true} is announced in the \codestyle{countersSnapshot} field of \codestyle{sizeCalculator} (denote this instance by \term{snapshotAtUpdate}); (2) the \size{} operation that sets \term{snapshotAtUpdate}'s \codestyle{size} field reads, during its size computation, a value $< c$ from the corresponding snapshot counter; and (3) $op$ is linearized immediately after the linearization point of that \size{} operation, namely, immediately after the \codestyle{collecting} field of \term{snapshotAtUpdate} is set to \codestyle{false} for the first time.
So we need to prove that this \codestyle{collecting} field is set to \codestyle{false} before the \term{updateMetadataForOp} call returns.
Next, we prove this holds in the various possible scenarios.

If \term{updateMetadataForOp} obtains a newer \codestyle{CountersSnapshot} instance than \term{snapshotAtUpdate} in \Cref{code:SizeCalculator updateMetadata obtain CountersSnapshot}, then we are done,
as \codestyle{CountersSnapshot} instances announced in \codestyle{SizeCalculator} are replaced only after their \codestyle{collecting} field is set to \codestyle{false}.

Else, if \term{updateMetadataForOp} observes in \Cref{code:SizeCalculator updateMetadata check collecting} that \term{snapshotAtUpdate}'s \codestyle{collecting} field value is \codestyle{false}, we are done.

Else, if the checks in \Cref{code:SizeCalculator updateMetadata check collecting,code:SizeCalculator updateMetadata obtain and check counter} pass, \term{updateMetadataForOp} forwards the value $c$ to the snapshot counter in \Cref{code:SizeCalculator updateMetadata forward}. When its forwarding completes, the snapshot counter contains a value $\geq c$. Since we analyze here a case in which the \size{} operation, which sets \term{snapshotAtUpdate}'s \codestyle{size} field, reads a value $< c$ from the snapshot counter, this \size{} must have read the snapshot counter before the forwarding completes, and this read during the size computation occurs only after the \term{snapshotAtUpdate}'s \codestyle{collecting} field is set to \codestyle{false}.

The remaining scenario is that \term{updateMetadataForOp} observes in \Cref{code:SizeCalculator updateMetadata obtain and check counter} that the metadata counter's value is $\geq c+1$. We will show that \term{snapshotAtUpdate}'s \codestyle{collecting} field value has been earlier set to \codestyle{false}.
For the metadata counter to reach the value $c+1$, the thread $T$ must have already started its ($c+1$)-st successful insertion.
Prior to that, $T$ has completed $op$ (which is its $c$-th successful insertion), during which it has called \codestyle{updateMetadata}, in a call we denote \term{updateMetadataByT}. We will next show that by the time \term{updateMetadataByT} returns, the value of \term{snapshotAtUpdate}'s \codestyle{collecting} field is already \codestyle{false}.
After \term{updateMetadataByT} executes \Crefrange{code:SizeCalculator updateMetadata update array start}{code:SizeCalculator updateMetadata update array end}, the metadata counter's value is $\geq c$ by \Cref{lemma:counter inc after first 2 lines of updateMetadata}.
Denote by \term{currSnap} the value that \term{updateMetadataByT} obtains in \codestyle{currentCountersSnapshot} in \Cref{code:SizeCalculator updateMetadata obtain CountersSnapshot}. \term{currSnap} must be \term{snapshotAtUpdate}, because otherwise, if it were an earlier \codestyle{CountersSnapshot} instance, then when \term{snapshotAtUpdate} is later announced in the \codestyle{countersSnapshot} field of \codestyle{sizeCalculator}, the metadata counter's value would already be $\geq c$ as mentioned above, but this contradicts Attribute (2) above which implies that a value $< c$ is collected in \term{snapshotAtUpdate}.
Thus, \term{currSnap} is \term{snapshotAtUpdate}.
When \term{updateMetadataByT} checks the value of \term{snapshotAtUpdate}'s \codestyle{collecting} field in \Cref{code:SizeCalculator updateMetadata check collecting}, if it is \codestyle{false} then we are done. Otherwise, \term{updateMetadataByT} calls the \codestyle{forward} method to ensure that the value $c$ is forwarded to the snapshot counter.
Like in the previous scenario, we analyze here a case in which the \size{} operation, which sets \term{snapshotAtUpdate}'s \codestyle{size} field, reads a value $< c$ from the snapshot counter, hence, this \size{} must have read the snapshot counter before the forwarding completes, and this read during the size computation occurs only after the \term{snapshotAtUpdate}'s \codestyle{collecting} field is set to \codestyle{false}. This concludes the proof.

\end{proof}

\begin{lemma}\label{lemma:counter inc after first 2 lines of updateMetadata}
Consider a call to \codestyle{updateMetadata} on behalf of $op$, being the $c$-th successful \ins{} or \del{} operation by a thread $T$. 
After this call executes \Crefrange{code:SizeCalculator updateMetadata update array start}{code:SizeCalculator updateMetadata update array end}, the relevant metadata counter's value is $\geq c$.
\end{lemma}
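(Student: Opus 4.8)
The plan is to reduce \Cref{lemma:counter inc after first 2 lines of updateMetadata} to two elementary facts about the counter \codestyle{metadataCounters}$[T][opKind]$ and then run a short case analysis on the conditional \cas{} in \Crefrange{code:SizeCalculator updateMetadata update array start}{code:SizeCalculator updateMetadata update array end}. The first fact is that this counter is monotonically non-decreasing and changes only in unit steps: the sole write to it is the \cas{} in \Cref{code:SizeCalculator updateMetadata update array end}, which replaces \codestyle{newCounter}$-1$ by \codestyle{newCounter} when it succeeds; and since the guard in \Cref{code:SizeCalculator updateMetadata update array start} fires only when the counter currently equals \codestyle{newCounter}$-1$, the \emph{only} update ever enabled while the counter holds the value $c-1$ is one that sets it to $c$.

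The second, and main, fact is that the \codestyle{UpdateInfo} published by $T$'s $c$-th successful \codestyle{opKind} operation carries \codestyle{counter}$=c$, which I would prove by induction on the real-time order of $T$'s successful \codestyle{opKind} operations (a single thread runs them sequentially), strengthening the claim to also state that by the time the \codestyle{insert}/\codestyle{delete} call realizing the $c$-th such operation returns, the counter is already $\geq c$. The base case holds because before $T$'s first successful \codestyle{opKind} operation no \codestyle{UpdateInfo} with $tid=T$ and this \codestyle{opKind} has been created (such objects are produced only by \codestyle{createUpdateInfo} called by $T$ on this operation kind), so \codestyle{createUpdateInfo} reads $0$ and yields \codestyle{counter}$=1$. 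For the inductive step, when \codestyle{createUpdateInfo} is called inside the $c$-th operation it reads a value that is $\geq c-1$ (the inductive hypothesis for the $(c-1)$-st operation plus monotonicity) and $<c$: a written value $\geq c$ would require an earlier \codestyle{updateMetadata} call carrying some \codestyle{UpdateInfo}$(T,\geq c)$, but by the hypothesis the \codestyle{UpdateInfo}s of the first $c-1$ operations all carry \codestyle{counter}$<c$, the later operations have not yet begun, and the \codestyle{UpdateInfo}s that the $c$-th operation itself may create on failed link/mark retries are unreachable to helpers (their node was never linked, resp.\ never marked) and so trigger no counter write. Hence \codestyle{createUpdateInfo} reads exactly $c-1$, the published \codestyle{UpdateInfo} is \codestyle{UpdateInfo}$(T,c)$, and the strengthened second half follows from the case analysis below applied to this operation's own \codestyle{updateMetadata} call, which returns before \codestyle{insert}/\codestyle{delete} returns.

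Given these two facts, consider the call in the lemma statement; by the second fact its argument is \codestyle{UpdateInfo}$(T,c)$. When this call reaches \Cref{code:SizeCalculator updateMetadata update array start} the counter is $\geq c-1$: the call is executed by $T$ or by a helper, and in both cases $op$'s \codestyle{UpdateInfo} has already been published, which occurs during $op$, hence after the $(c-1)$-st successful operation returned, at which moment (second fact) the counter was $\geq c-1$; monotonicity finishes this. Now if the value read in \Cref{code:SizeCalculator updateMetadata update array start} differs from $c-1$ it is $>c-1$, i.e.\ $\geq c$, no \cas{} is attempted, and the counter remains $\geq c$. Otherwise the value read is $c-1$ and the call performs \codestyle{compareAndSet}$(c-1,c)$: on success the counter becomes $c$; on failure the counter has advanced past $c-1$ since the read, and since the only update enabled at value $c-1$ sets it to $c$, the current value is $\geq c$. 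In every case the counter is $\geq c$ after \Cref{code:SizeCalculator updateMetadata update array end}, as claimed.

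I expect the bulk of the work to be the second fact---in particular, ruling out that a concurrent helper or a discarded retry attempt has already pushed the counter past $c-1$ before $op$'s \codestyle{createUpdateInfo} reads it, and carrying out the argument without appealing to \Cref{lemma:lin when updateMetadata returns} (which in turn relies on this lemma); basing the induction on the real-time order of a single thread's operations and on the code structure alone is what keeps it non-circular, whereas the concluding \cas{} case analysis is routine.
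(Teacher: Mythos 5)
Your proof is correct and takes essentially the same route as the paper's: an induction over $T$'s successful operations of the given kind, using the sequential execution of a single thread and the monotone, unit-increment behavior of the counter under \cas{} to show the counter is $\geq c-1$ when any \codestyle{updateMetadata} call on behalf of $op$ begins (since such a call starts only after $op$'s \codestyle{UpdateInfo} is published, hence after the $(c-1)$-st successful operation completed), followed by the routine \cas{} case analysis. The only difference is that you explicitly prove the published \codestyle{UpdateInfo} of the $c$-th successful operation carries \codestyle{counter}$=c$ (ruling out stale or discarded \codestyle{UpdateInfo}s), a fact the paper treats as given by the description of \codestyle{createUpdateInfo} rather than proving separately.
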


\subsection{Wait-Freedom and Asymptotic Time Complexity}\label{section:time complexity}
The \size{} operation in our methodology is wait-free as it completes within a constant number of steps, regardless of other threads' progress. Its time complexity is linear in the number of threads due to its two passes on arrays with per-thread counters: during the collection process (in the \codestyle{\_collect} method) and the size computation (in the \codestyle{computeSize} method).

Our transformation preserves the time complexity and progress guarantees of the data-structure operations, as each call to the \codestyle{udpateMetadata} method adds a constant number of steps. This follows from the following claim.

\begin{claim}\label{claim: forward 2 iterations}
Each call to the \codestyle{forward} method of \codestyle{CountersSnapshot} (by \codestyle{updateMetadata}) executes at most two iterations of its while loop. 
\end{claim}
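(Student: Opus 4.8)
The plan is to control how the single snapshot cell that a given \codestyle{forward} call operates on can evolve over time, and then read off the iteration count from a short case analysis. Fix a call $F$ of \codestyle{forward}, say on behalf of the $c$-th successful \ins{} (or \del{}) of thread $tid$, let $I$ be the \codestyle{CountersSnapshot} instance $F$ runs on, and let $S$ denote the relevant cell of $I$'s \codestyle{snapshot} array. First I would establish a monotonicity invariant: once $S$ leaves \codestyle{INVALID}, its value never decreases. This is immediate from the code: \codestyle{add} (\Cref{code:CountersSnapshot snapshot cas in add}) only performs a \cas{} out of \codestyle{INVALID}, and the \cas{} in \codestyle{forward} (\Cref{code:CountersSnapshot snapshot cas in forward}) is attempted only while the current value is \codestyle{INVALID} or strictly below the value being installed. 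Second, I would bound the values that ever reach $S$. The metadata counter for this thread and operation kind is monotone (by \Cref{lemma:counter inc after first 2 lines of updateMetadata} and the \cas{} discipline of \codestyle{updateMetadata}), and any value other than \codestyle{INVALID} stored in $S$ is $\le$ that counter at the moment of storing, hence ever after, because it is either copied from the counter by \codestyle{add} or written by a \codestyle{forward} whose enclosing \codestyle{updateMetadata} observed the counter equal to it at the guard (\Cref{code:SizeCalculator updateMetadata obtain and check counter}). In particular, by \Cref{lemma:counter inc after first 2 lines of updateMetadata}, the counter is already $\ge c$ when $F$ is invoked (the guard of the \codestyle{updateMetadata} that issued $F$ saw it equal $c$), so any write that puts a value $< c$ into $S$ stems from an operation that had observed the counter below $c$, hence was issued \emph{before} $F$ was.

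The crux is then to show that, while $F$ is looping, $F$ witnesses at most one value of $S$ strictly below $c$. A sub-$c$ value enters $S$ either via \codestyle{add} --- which writes the first non-\codestyle{INVALID} value, hence acts at most once, after which (by monotonicity) only \codestyle{forward}s carrying larger values can modify $S$ --- or via a stale \codestyle{forward} with target $w < c$ whose \codestyle{updateMetadata} passed its guard long ago, while the counter equalled $w$. I would argue such stale forwards cannot stack two-deep below $c$ relative to $F$: the re-reads \codestyle{updateMetadata} performs right before calling \codestyle{forward} --- re-reading \codestyle{countersSnapshot} at \Cref{code:SizeCalculator updateMetadata obtain CountersSnapshot} and re-checking \codestyle{collecting} and the counter at \Cref{code:SizeCalculator updateMetadata check collecting,code:SizeCalculator updateMetadata obtain and check counter} --- together with the facts that $I$.\codestyle{collecting} is still \codestyle{true} throughout $F$'s run (so no \codestyle{\_collect} on $I$ has completed and set \codestyle{collecting} to \codestyle{false} yet) and that a stale \codestyle{forward} bails out without writing once it sees a value $\ge$ its own target in $S$, pin the sequence of sub-$c$ values $F$ observes down to at most one entry.

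Granting that, the conclusion is a short case analysis of the loop. $F$'s initial read $s_1$ is either different from \codestyle{INVALID} and $\ge c$, so the body runs zero times; or it is \codestyle{INVALID} or $< c$, so the first \cas{} is attempted. If that \cas{} succeeds, one iteration. If it fails, the witnessed value is non-\codestyle{INVALID} and (by monotonicity) strictly above what $F$ expected; if it is $\ge c$ the loop guard fails and $F$ stops after one iteration; if it is $< c$, then by the crux it is the only sub-$c$ value $F$ ever sees, so when the second \cas{} runs $S$ either still holds it --- and the \cas{} succeeds --- or $S$ has already advanced to a value $\ge c$ --- and then the loop guard before a hypothetical third iteration is false. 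In all cases $F$ performs at most two iterations.

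I expect the sub-$c$-witness bound to be the main obstacle. The threat is a pile-up of arbitrarily delayed stale \codestyle{forward} calls, each issued when the metadata counter was small and each able, a priori, to defeat one of $F$'s \cas{} attempts in turn; ruling this out is precisely what the guard \codestyle{updateMetadata} performs just before \codestyle{forward} (the conjunction of the \codestyle{collecting} test and the metadata-counter equality test) is designed to buy, and the proof must exploit it carefully --- including the interaction between \codestyle{add}'s single initial write and the timing of \codestyle{\_collect} on the still-collecting instance $I$.
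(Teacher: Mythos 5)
Your top-level skeleton (monotonicity of the snapshot cell $S$, a bound on the values that can reach it, then a short case analysis of the loop) is the right shape, and the final case analysis is correct \emph{granted} your ``crux''. But the crux --- that $F$ can witness at most one value of $S$ below $c$ --- is essentially the whole content of the claim, and you leave it unproven. The ingredients you list do not assemble into a proof, and one of them is false: nothing guarantees that $I$.\codestyle{collecting} stays \codestyle{true} throughout $F$'s run (a concurrent \codestyle{compute} may execute \Cref{code:SizeCalculator compute stop collection} at any moment, \codestyle{forward} never rechecks the flag, and the claim must hold regardless). More importantly, your observation that every sub-$c$ writer ``was issued before $F$'' is only an upper bound on stored values and does nothing to rule out the pile-up you yourself flag: arbitrarily delayed threads could, a priori, still land their \cas{}es on $S$ after $F$'s initial read, defeating $F$'s attempts one by one.

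What actually closes this is a lower bound, and it is proved by a case split you never make. Let $t_{obt}$ be when $F$'s enclosing \codestyle{updateMetadata} obtained $I$ at \Cref{code:SizeCalculator updateMetadata obtain CountersSnapshot}, and let $t_{c-1}$ be when the relevant metadata counter was set to $c-1$; one shows that from $t_{obt}$ on only values $\geq c-1$ can be written to (or already occupy) $S$, after which the two-iteration bound is immediate because successive failed-\cas{} witnesses are strictly increasing, so at most one of them can equal $c-1$. If $I$ was announced after $t_{c-1}$, every write to $S$ --- whether by \codestyle{add} at \Cref{code:SizeCalculator collect add} or by a \codestyle{forward} --- carries a value read from the metadata counter \emph{after} $I$'s announcement, because a \codestyle{forward} into $I$ can only come from an \codestyle{updateMetadata} that first obtained $I$ at \Cref{code:SizeCalculator updateMetadata obtain CountersSnapshot} and only then re-read the counter at \Cref{code:SizeCalculator updateMetadata obtain and check counter}; hence all such values are $\geq c-1$, and the delayed ``stale'' forwards you worry about are harmless because they target older \codestyle{CountersSnapshot} instances, not $I$. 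If $I$ was announced before $t_{c-1}$, then $I$ is still announced and collecting up to $t_{obt}$, so thread $T$'s $(c-1)$-st successful operation must already have forwarded $c-1$ into $S$ before $T$ could proceed to its $c$-th operation, i.e., before $t_{obt}$, and monotonicity keeps $S\geq c-1$ thereafter. Neither ingredient --- the re-read of the counter after obtaining $I$, nor the $(c-1)$-st operation's own completed forwarding --- appears in your sketch, so the gap is genuine.
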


Before forwarding, \codestyle{updateMetadata} performs several checks in a certain order. Without this careful procedure, delayed threads that run \codestyle{updateMetadata} to help old operations could forward stale values to the snapshot array, causing an \codestyle{updateMetadata} on behalf of a recent operation to repeatedly fail forwarding.
Next, we show how this procedure prevents forwarding stale values.

\begin{proof}
Consider a call to \codestyle{updateMetadata} that calls \codestyle{forward} and operates on behalf of $op$, being the $c$-th successful \ins{} operation by a thread $T$ (a similar proof applies for \del{}). Denote by \term{currSnap} the value this call obtains in \codestyle{currentCountersSnapshot} (in \Cref{code:SizeCalculator updateMetadata obtain CountersSnapshot}) at time denoted $t_{obt}$. As the snapshot counters are monotonically increasing, it is enough to prove that from $t_{obt}$ and on, only values $\geq c-1$ may be written to the snapshot counter of \term{currSnap} that is associated with $op$.

Note that after obtaining \term{currSnap} at time $t_{obt}$ and before calling \codestyle{forward}, \codestyle{updateMetadata} observes that \term{currSnap} is in a collecting mode (in \Cref{code:SizeCalculator updateMetadata check collecting}), thus, it has been in this mode at $t_{obt}$.
Now, let $t_{c-1}$ be the time in which the metadata counter associated with $op$ is set to $c-1$. If \term{currSnap} has been announced in \codestyle{sizeCalculator.countersSnapshot} before time $t_{c-1}$, then it keeps being announced and in collecting mode at least until time $t_{obt}$. Thus, the value $c-1$ is forwarded to the snapshot counter associated with $op$ before time $t_{obt}$, as otherwise the thread $T$ would not have proceeded from its ($c-1$)-st successful \ins{} to its $c$-th successful \ins{}, and we are done since the snapshot counter is monotonically increasing.

Otherwise, \term{currSnap} is announced in \codestyle{sizeCalculator.countersSnapshot} after time $t_{c-1}$. 
Two methods write to the snapshot array: \codestyle{add} and \codestyle{forward}.
\codestyle{add} is called (in \Cref{code:SizeCalculator collect add}) with a counter value that is obtained from the metadata array after \term{currSnap} is announced, hence, after time $t_{c-1}$, so it writes a value $\geq c-1$.
As for \codestyle{forward}, a thread that calls it (in \Cref{code:SizeCalculator updateMetadata forward}) to forward a value to the snapshot counter associated with $op$, performs the following steps in this order:
(1) obtain \term{currSnap} in \Cref{code:SizeCalculator updateMetadata obtain CountersSnapshot}---which must happen after \term{currSnap} is announced and hence after time $t_{c-1}$;
(2) obtain the value of the metadata counter associated with $op$ in \Cref{code:SizeCalculator updateMetadata obtain and check counter}; and then 
(3) forward this value to \term{currSnap}'s snapshot array.
Since the value is obtained after time $t_{c-1}$, it must be $\geq c-1$.
\end{proof}

\section{Evaluation}\label{section:performance}

In this section, we present the evaluation of our methodology on several data structures. The code for the data structures and the measurements is available at \cite{artifactConcurrentSize}. We first measure the overhead that the addition of the size mechanism introduces to operations of transformed data structures (in \Cref{section:overhead-breakdown-by-op-type} we break down the overhead by operation type). Then, we demonstrate the benefits of computing a linearizable size in our methodology. 
We show that it yields a performance better in orders of magnitude than the alternative of taking a linearizable snapshot of the data structure and counting its elements. We further demonstrate the scalability of our methodology and its performance insensitivity to the data-structure size. 

\paragraph{Evaluated data structures}
We start with three baseline data structures that do not support a linearizable \size{}: a skip list, a hash table and a binary search tree, denoted \skl{}, \htb{} and \bst{} respectively.
The implementation of \skl{} is taken from the \texttt{ConcurrentSkipListMap} class of the \texttt{java.util.concurrent} package of Java SE 18. We eliminated methods irrelevant to our measurements and kept the main \ins{}, \del{} and \contains{} functionality.
As for a hash table, we implemented \htb{} as a table of linked lists whose implementation is based on the linked list in the base level of \skl{}. We use a table of a static size (chosen in a way similar to Java's \texttt{ConcurrentHashMap} to be a power of $2$ between $1\times$ and $2\times$ the number of elements;
we detail below how we keep the number of elements stable during the measurements). 
We do not use Java's lock-based \texttt{ConcurrentHashMap} as our hash-table baseline because it deletes items by unlinking without marking (as it does not use a delicate synchronization mechanism but rather coarse-grained locking), thus, our transformation is not applicable to it as is.
For \bst{} we use Trevor Brown's implementation \cite{brownJava} of the lock-free binary search tree of \cite{ellen2010non} that places elements in leaf nodes.

We applied our methodology to each of the baseline algorithms, to produce the transformed data structures \sskl{}, \shtb{} and \sbst{} that support a linearizable \size{}. In the case of the tree, \bst{} linearizes the \del{} operation at the unlinking and not in the prior marking of the deleted node's parent. Hence, we formed a variant of \bst{} that linearizes \del{} at the marking step, and then applied our methodology to this variant.

We compare performance of the \size{} operation in the data structures produced using our methodology with a snapshot-based \size{} operation in two data structures supporting snapshots. 
The first one is \snapskl{}---\cite{petrank2013lock}'s implementation of a skip list with a snapshot mechanism, which we obtained from the paper's authors. Like our skip list implementations (\skl{} and \sskl{}), it builds on Java's \texttt{ConcurrentSkipListMap}. It uses code from an older version of Java, but the performance degradation incurred by the older version is negligible and irrelevant to our measurements, due to the immense performance difference between obtaining the size using their snapshot and using our methodology.
The \size{} operation is implemented in \snapskl{} by taking a snapshot, which produces a snapshot copy of the base level of the skip list, and then iterating it and counting its elements.

The second competitor we compare to is \vcasbst{} \cite{wei2021constant}---a binary search tree with a snapshot mechanism taken from the paper's published implementation \cite{vcaslib}. It is based on the same implementation by Brown that \bst{} and \sbst{} are based on, but uses a modified version of it which batches multiple keys in leaves and stores up to 64 key-value pairs in each tree leaf.
To compute the size, we did not use their implementation as a black box, as their interface supplies a snapshot copy of the tree's elements, but such copying is redundant for retrieving the size. Instead, to compute the size we call their snapshot operation that advances the timestamp, and then traverse the tree and sum the number of elements in leaves with a timestamp no bigger than the snapshot timestamp. By this, we save copying all tree elements, and even save iterating the elements one by one---as we simply read the leaf's number of elements (each batched leaf node keeps the number of contained elements). Even though we use this improved size implementation for \vcasbst{}, and even though \vcasbst{} uses batched leaves which enables it to perform faster than without them, we will show that our size computation method still outperforms it.

\paragraph{Platform}
We conducted our experiments on a machine running Linux (Ubuntu $20.04$) equipped with $4$ AMD Op\-te\-ron(TM) 6376 \SI{2.3}{GHz} processors. Each processor has $16$ cores, resulting in $64$ threads overall.
The machine used \SI{64}{GB} RAM, an L1 data cache of \SI{16}{KB} per core, an L2 cache of \SI{2}{MB} for every two cores, and an L3 cache of \SI{6}{MB} for every $8$ cores.

The implementations were written in Java. We used \codestyle{OpenJDK} $17.0.2$ with the flags \codestyle{-server}, \codestyle{-Xms15G} and \codestyle{-Xmx15G}. 
The latter two flags reduce interference from Java's garbage collection. We used the G1 garbage collector (using ParallelGC yields similar results).

\paragraph{Methodology}
Before each experiment, we fill the data structure with $1$M items, except for the experiments that check dependence on the data-structure size, in which we fill the data structure with a varying number of items---$1$M, $10$M or $100$M. We chose these sizes in order to measure the performance of the data-structure when it does not fit into the L3 cache.

We run two workloads: an update-heavy workload, with $30\%$ \ins{} operations, $20\%$ \del{} operations and $50\%$ \contains{} operations, and a read-heavy workload, with $3\%$ \ins{} operations, $2\%$ \del{} operations and $95\%$ \contains{} operations. These workloads match the read rates suggested by {\em Yahoo! Cloud Serving Benchmark} (YCSB) \cite{cooper2010benchmarking}---update-heavy workloads with $50\%$ reads and read-heavy workloads with $95\%$ reads (YCSB also suggests a $100\%$-read workload, but this is less relevant to our case, since it is less likely to have \size{} calls on a data structure that never changes). 
The left part of \Crefrange{fig:HT overhead}{fig:overhead breakdown} shows results for the read-heavy workload, and the right part shows results for the update-heavy workload.

Similarly to \cite{wei2021constant}, keys for operations during the experiment and for the initial filling are drawn uniformly at random from a range $[1, r]$, where $r$ is chosen to maintain the initial size of the data structure. For example, for $n=1$M initial keys and a workload with $30\%$ inserts and $20\%$ deletes, we use $r = n \cdot (30 + 20)/30 \approx 1.67M$.

In all experiments except for the experiments that check how overhead is split by operation type, we repeatedly choose (by the update-heavy or read-heavy proportion) the type of the next operation. However, in the overhead-split measurements (that appear in \Cref{section:overhead-breakdown-by-op-type}), we repeatedly choose a uniform type for the next 100 operations, because in these measurements we need to obtain the time it took to execute operations of each type, and obtaining the time it took to execute too few consecutive operations of the same type would impair the time measurement accuracy.

In each experiment, we run $w$ workload threads, performing \ins{}, \del{} and \contains{} calls according to the update-heavy or read-heavy workloads, and $s$ size threads, repeatedly calling \size{}, except for executions of the baseline algorithms (\htb{}, \skl{} and \bst{}---evaluated in the overhead and overhead breakdown measurements), for which we run $w$ workload threads only. $w$ and $s$ vary across experiments, and we took $w+s$ to be a power of 2 in most experiments.
In each experiment, the threads perform operations concurrently for $5$ seconds.
Each data point in the graphs represents the average result of $10$ runs, after executing $5$ preliminary runs to warm up the Java virtual machine. 
The coefficient of variation was up to $11\%$ in the experiments we present next, and up to $21\%$ in the experiments presented in \Cref{section:overhead-breakdown-by-op-type}.

\begin{figure*}[t]
  \centering
  \medskip
  \textit{\ \ \ \ \ \ \ \ \ \ \ \ Read heavy}\hfill
  \includegraphics[height=.03\textwidth]{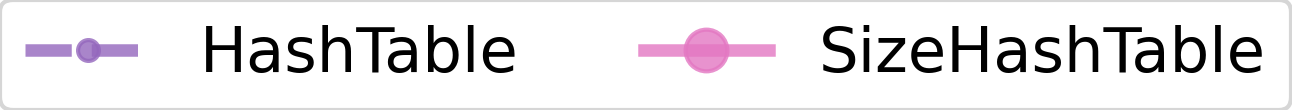}\hfill
  \textit{Update heavy\ \ \ \ }\par
  \medskip
  \text{Without a concurrent size thread}\par
  \hspace*{1.8mm}\includegraphics[width=.473\textwidth]{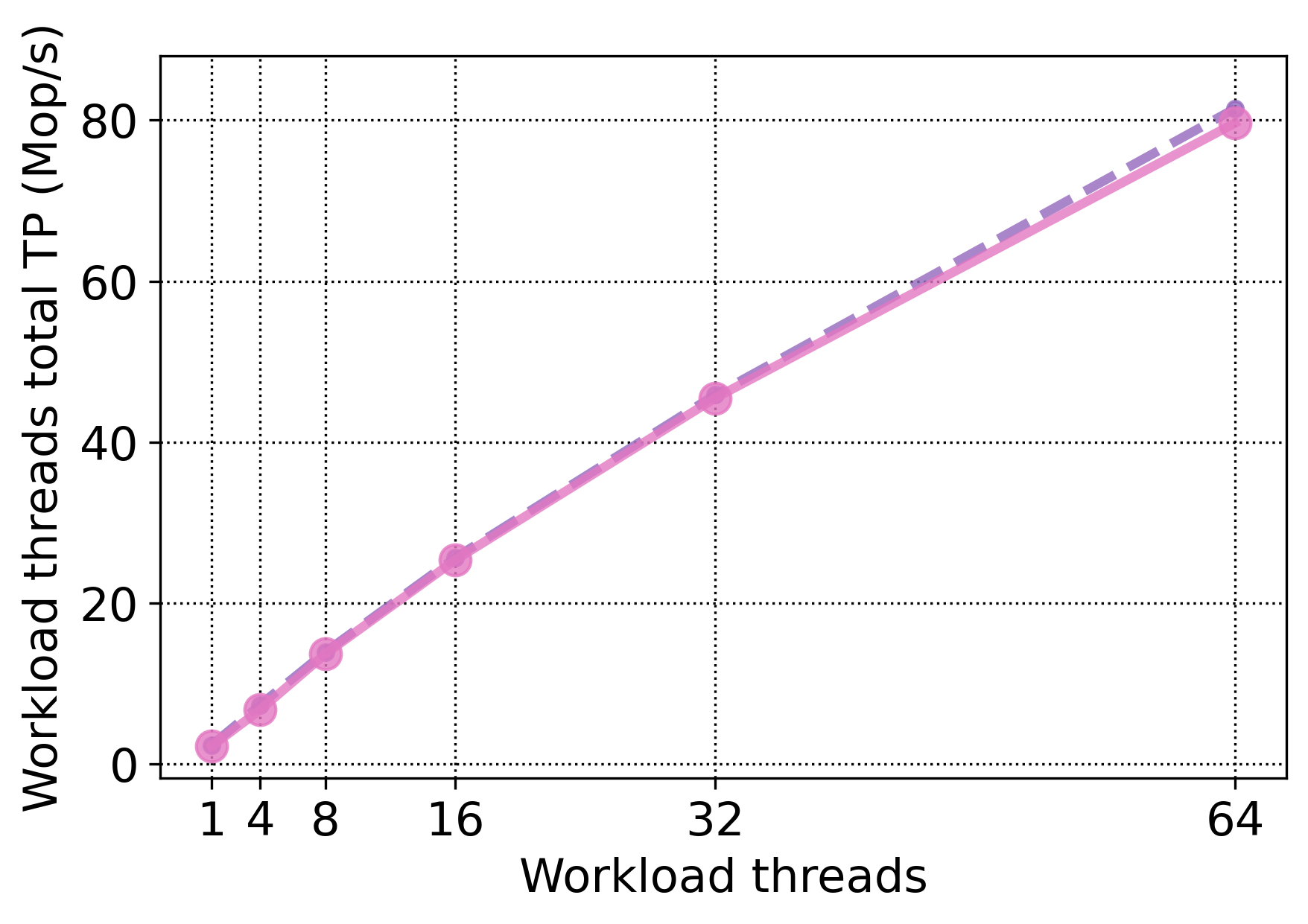}\hspace*{3.2mm}
  \includegraphics[width=.473\textwidth]{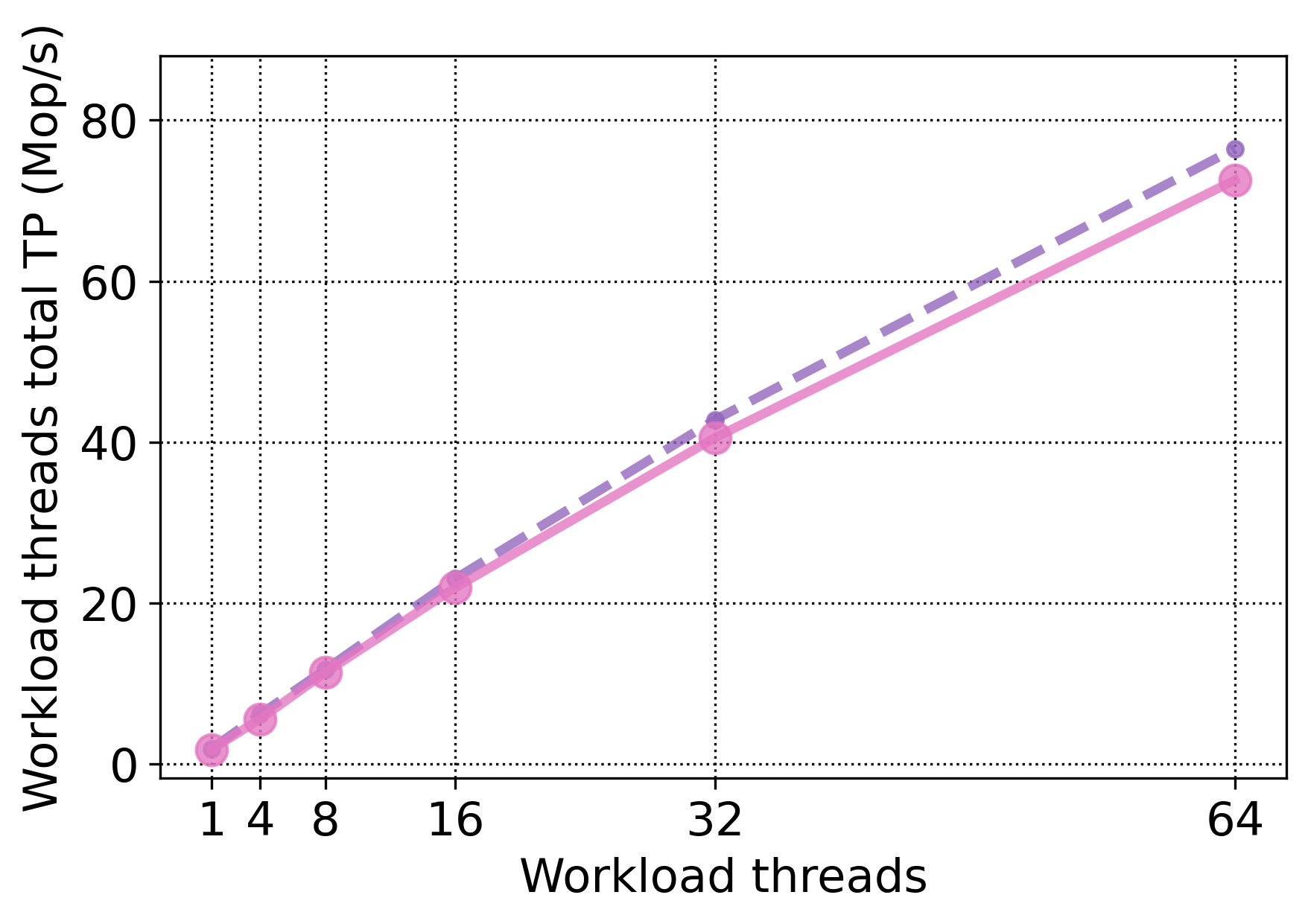}
  \includegraphics[width=.4978\textwidth,trim={0 0 0 .4cm}]{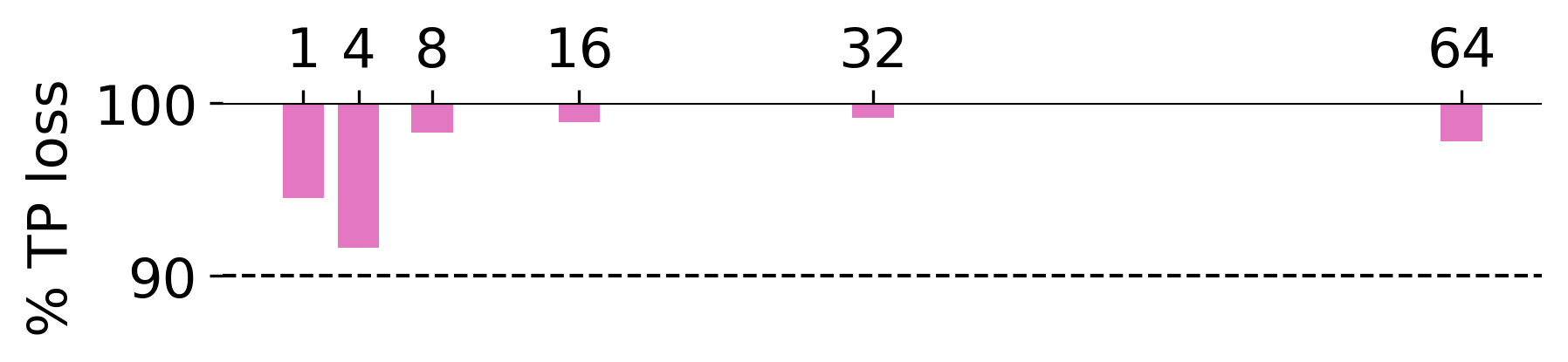}\hspace*{0.001mm}
  \includegraphics[width=.4978\textwidth,trim={0 0 0 .4cm}]{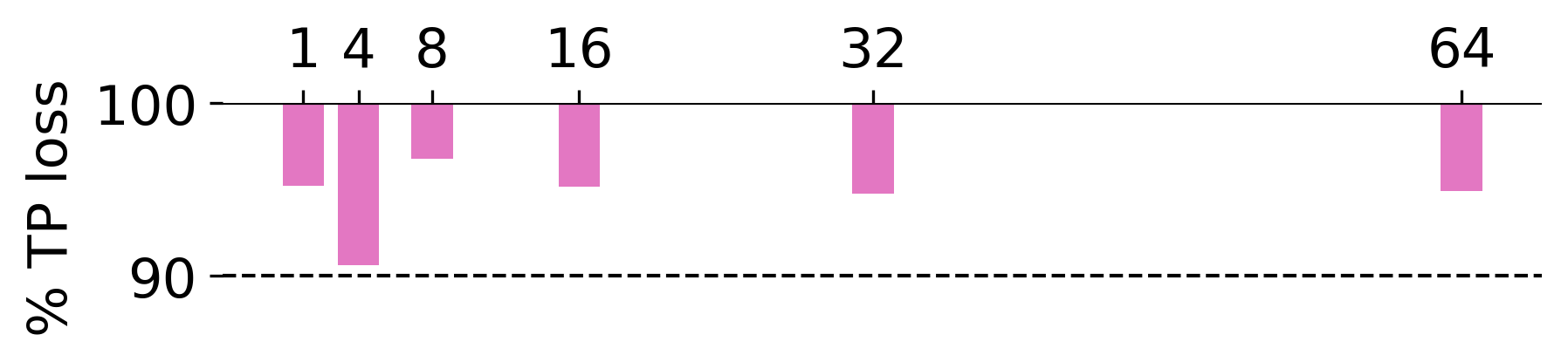}
  \medskip
  \text{With a concurrent size thread}\par
  \hspace*{1.8mm}\includegraphics[width=.473\textwidth,trim={0 0 0 .3cm}]{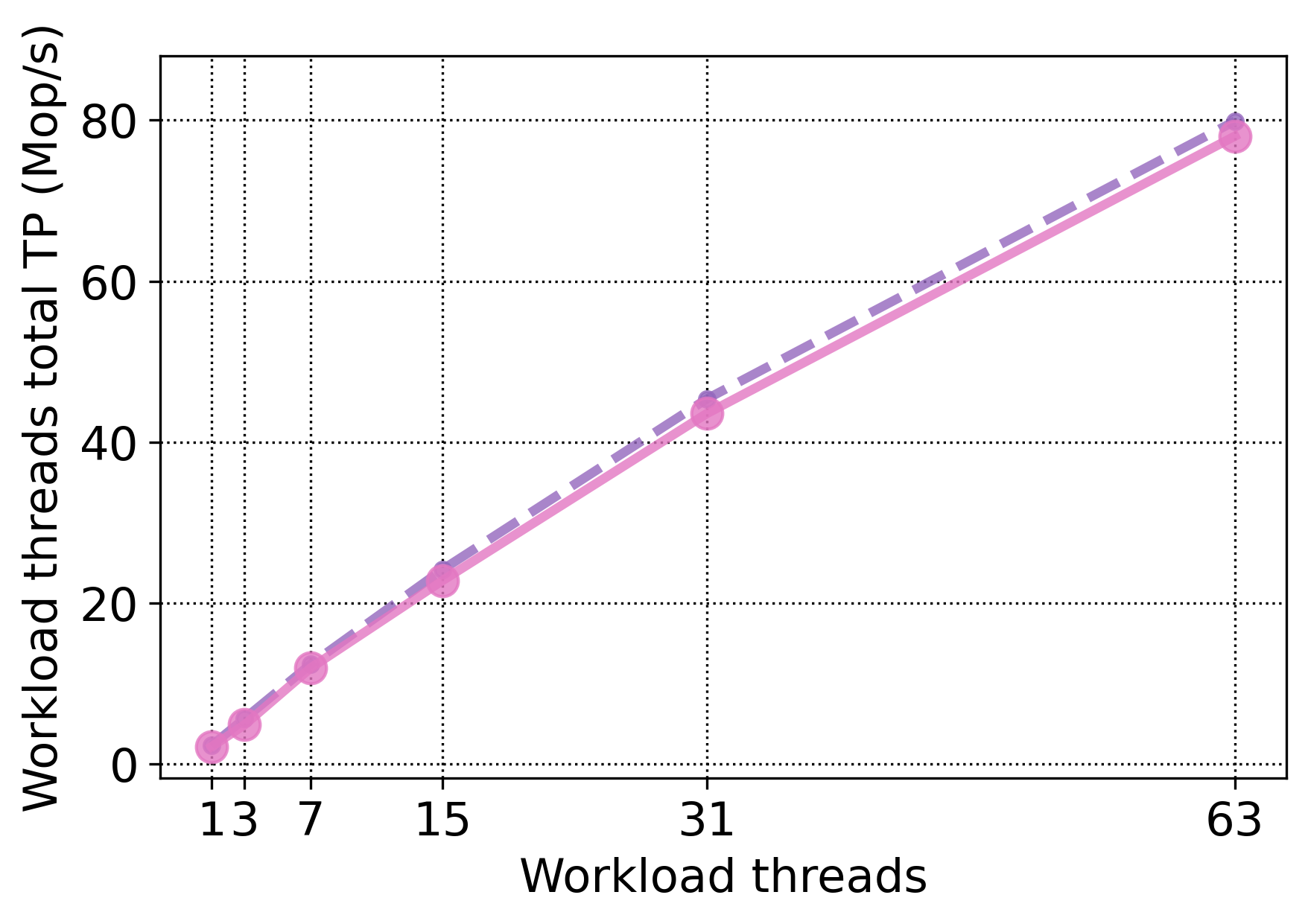}\hspace*{3.2mm}
  \includegraphics[width=.473\textwidth,trim={0 0 0 .3cm}]{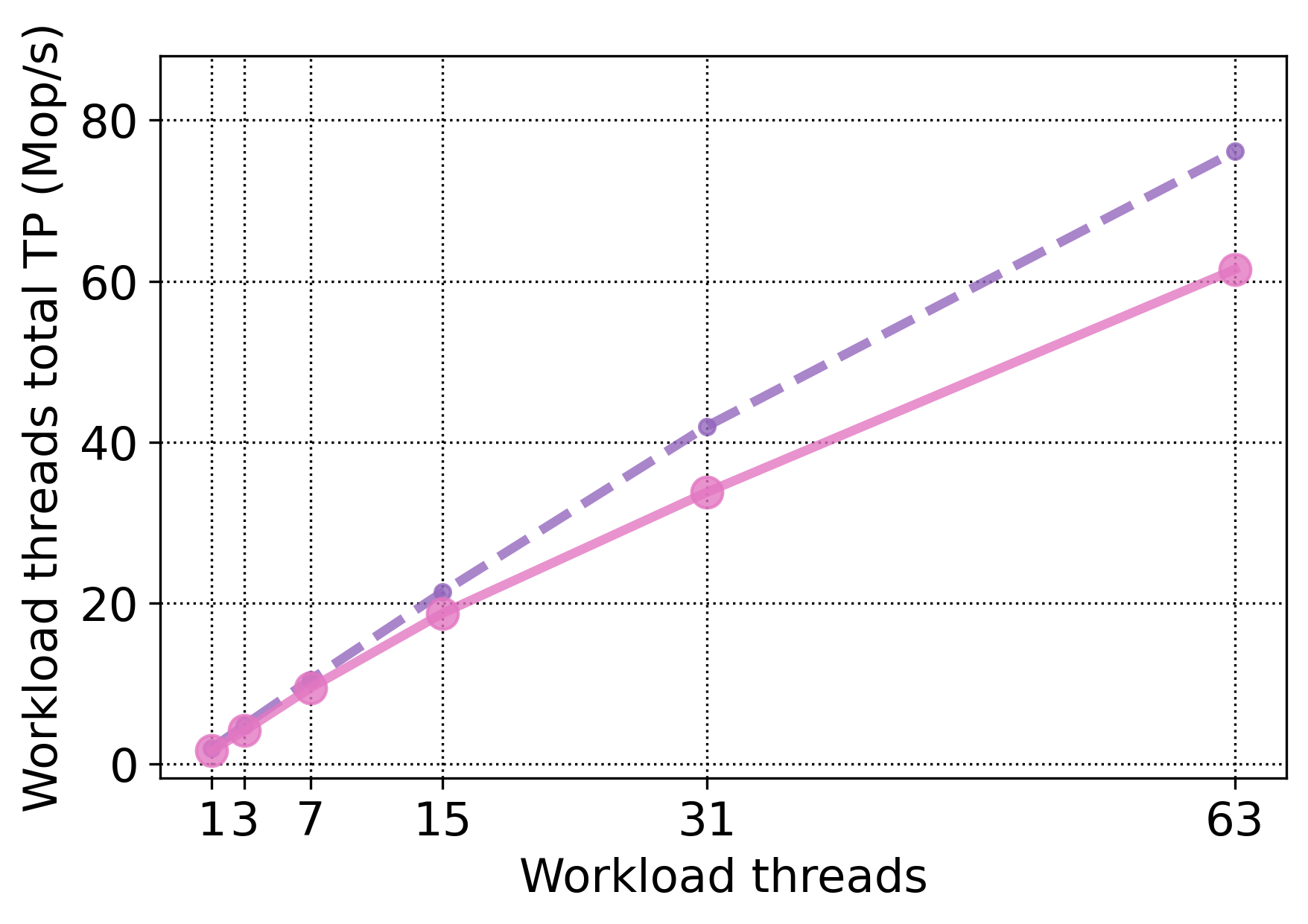}
  \raisebox{.45cm}{\includegraphics[width=.4975\textwidth,trim={0 0 0 .4cm}]{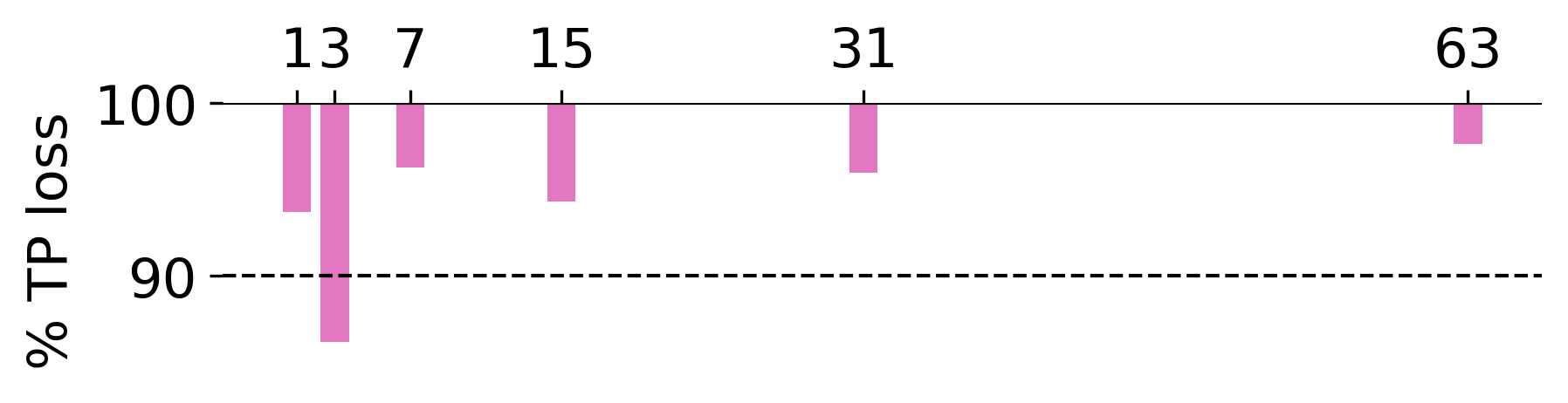}}\hspace*{0.01mm}
  \includegraphics[width=.4975\textwidth,trim={0 0 0 .4cm}]{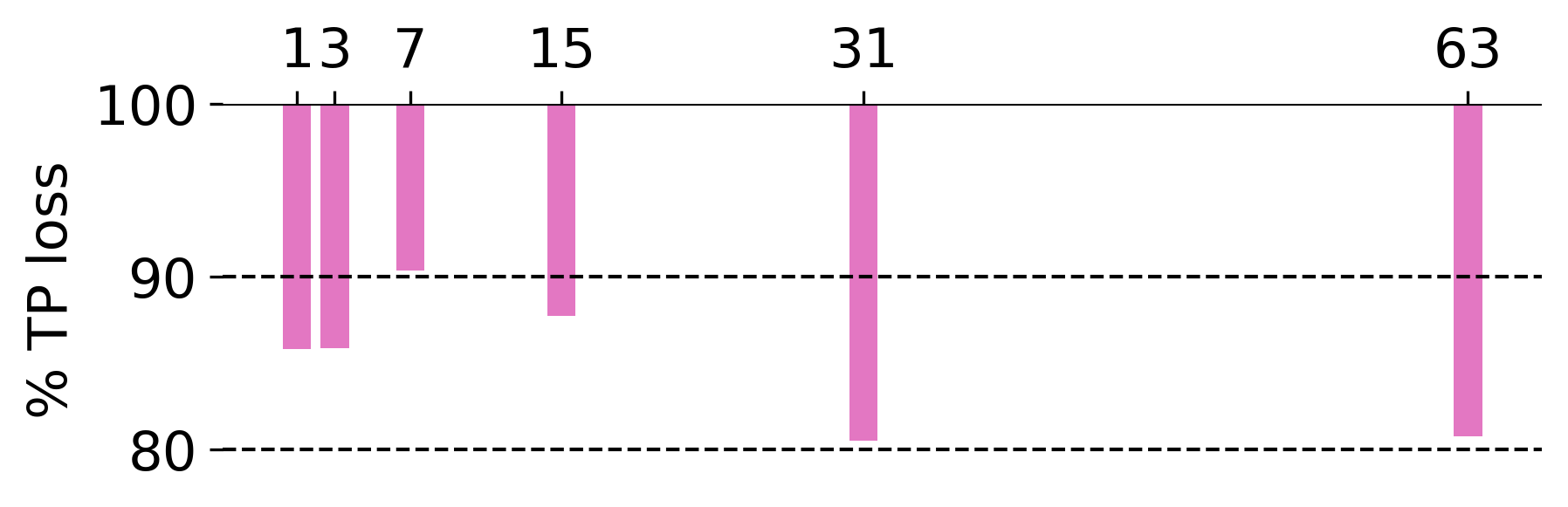}
  \caption{Overhead on hash table operations}
  \label{fig:HT overhead}
\end{figure*}

\begin{figure*}[t]
  \centering
  \medskip
  \textit{\ \ \ \ \ \ \ \ \ \ \ \ Read heavy}\hfill
  \includegraphics[height=.03\textwidth]{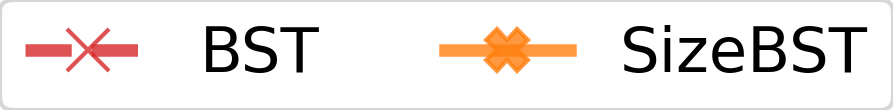}\hfill
  \textit{Update heavy\ \ \ \ }\par
  \medskip
  \text{Without a concurrent size thread}\par
  \hspace*{2mm}\includegraphics[width=.473\textwidth]{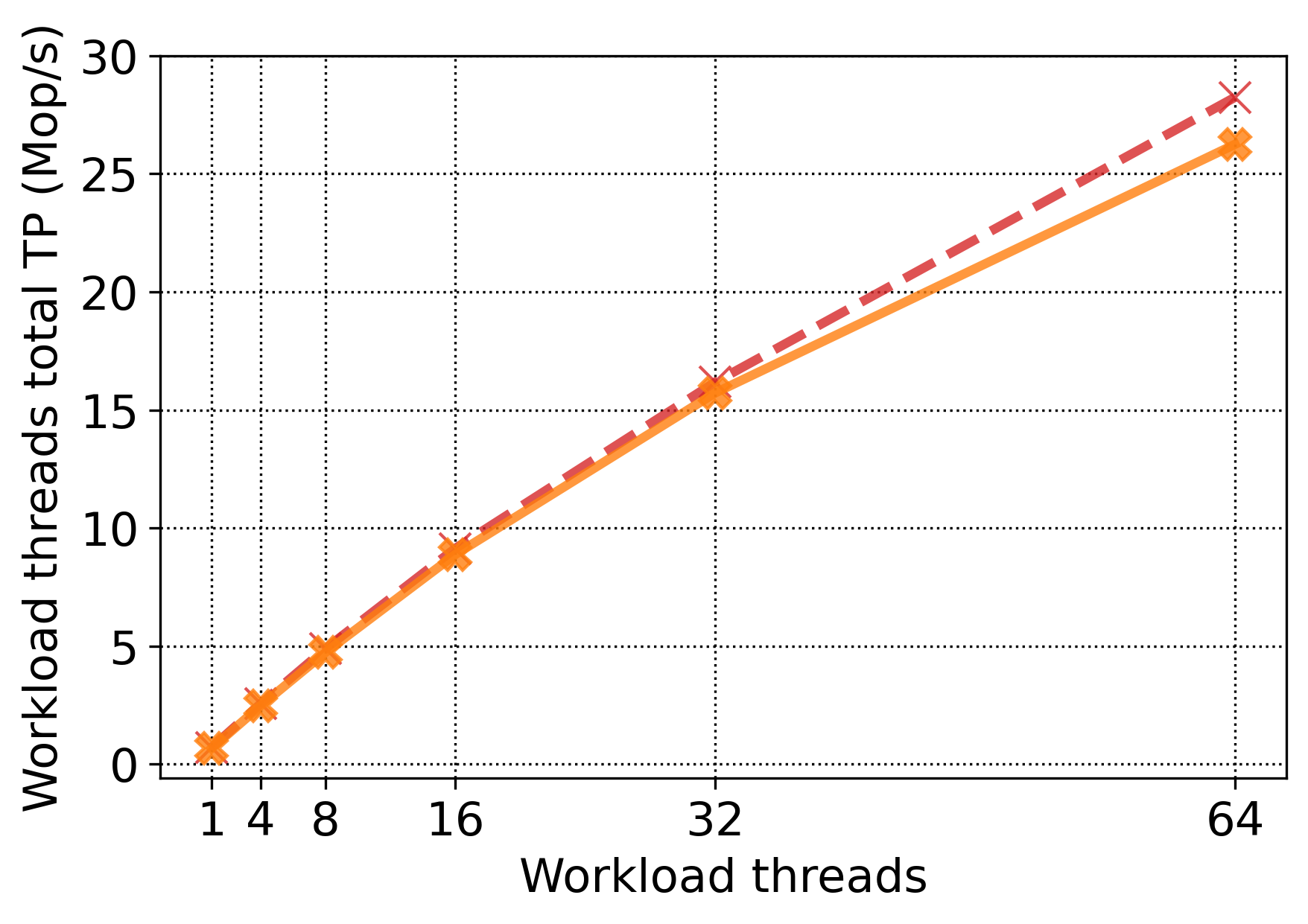}\hspace*{3mm}
  \includegraphics[width=.473\textwidth]{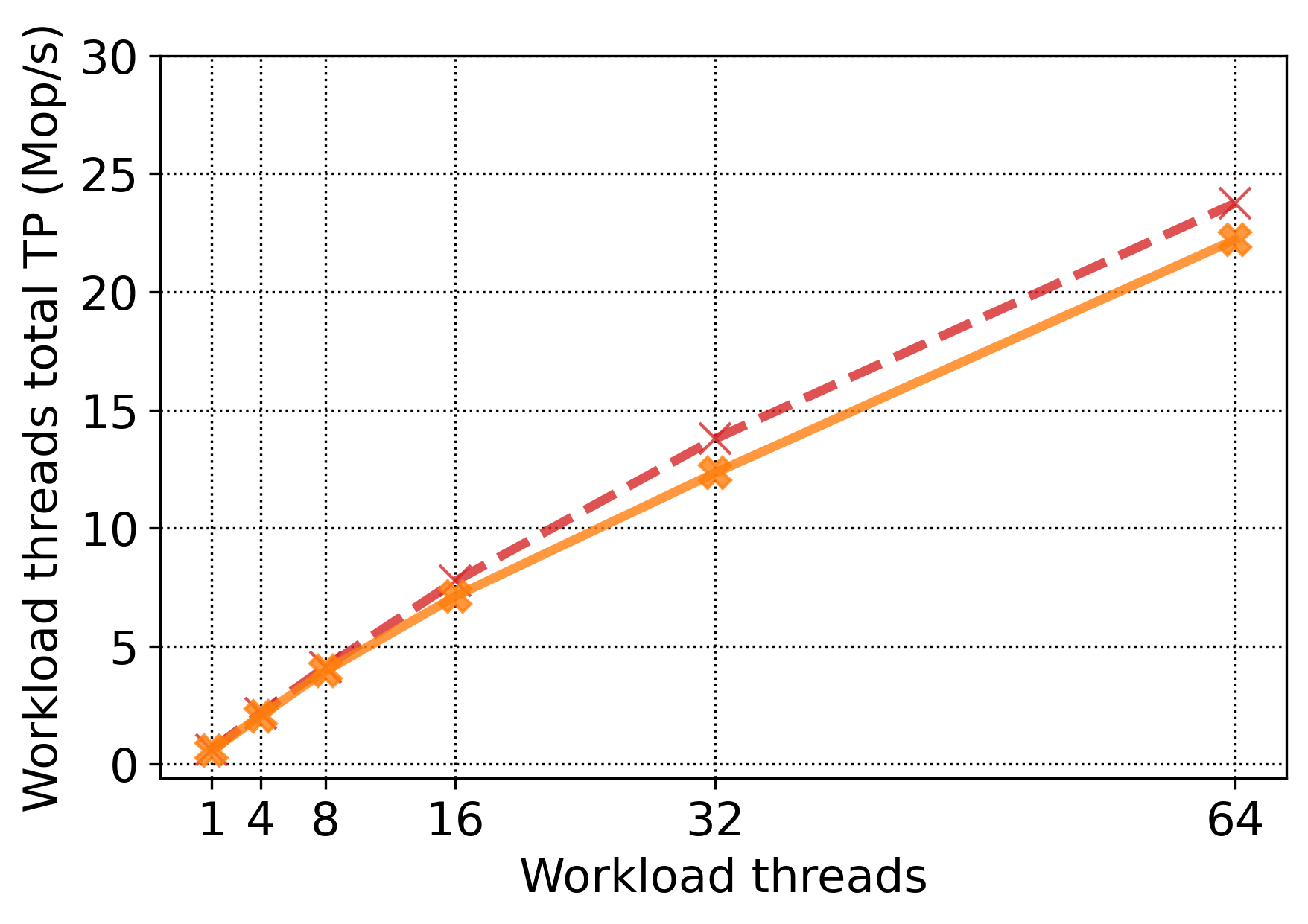}
  \includegraphics[width=.499\textwidth,trim={0 0 0 .4cm}]{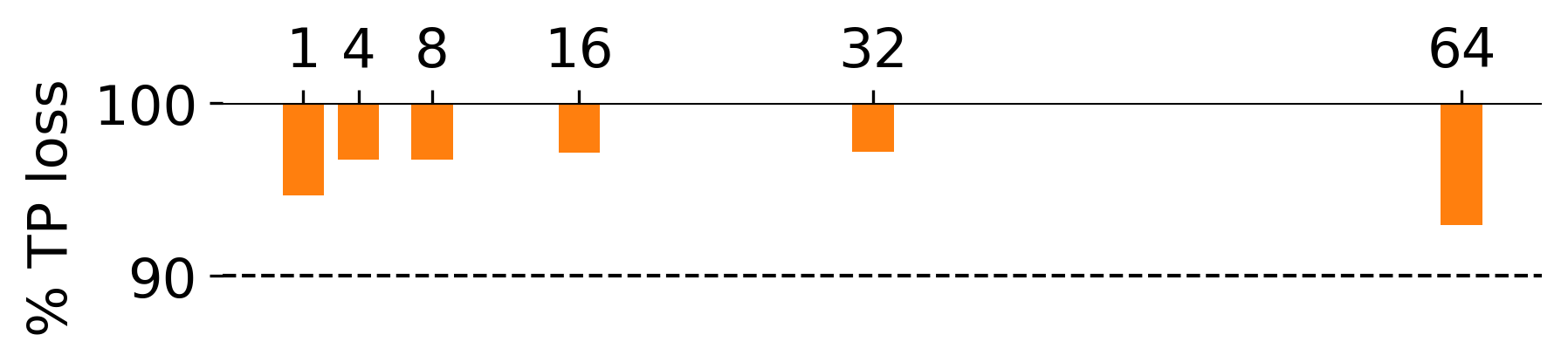}\hspace*{0.001mm}
  \includegraphics[width=.4985\textwidth,trim={0 0 0 .4cm}]{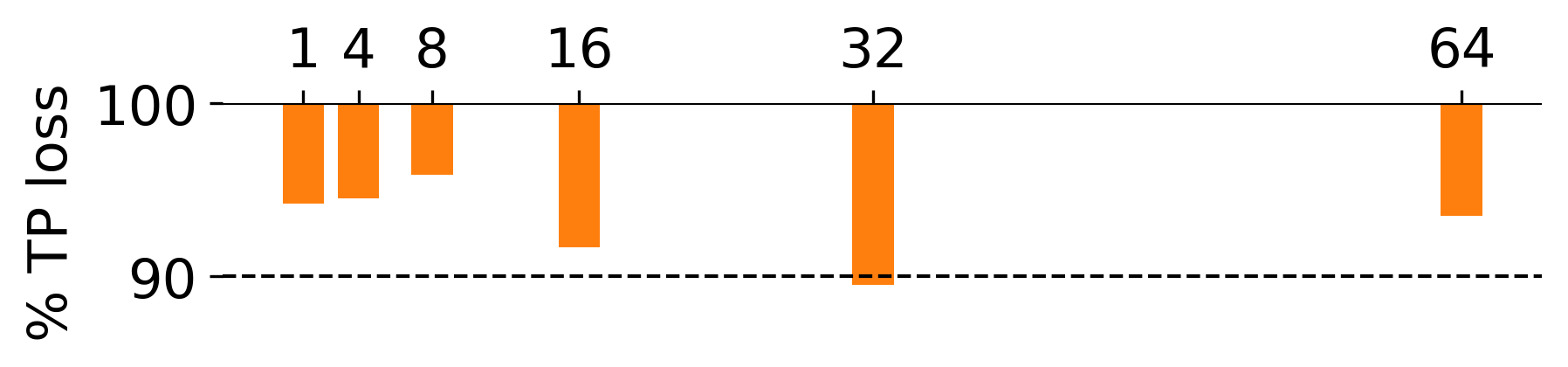}
  \medskip
  \text{With a concurrent size thread}\par
  \hspace*{2mm}\includegraphics[width=.473\textwidth,trim={0 0 0 .3cm}]{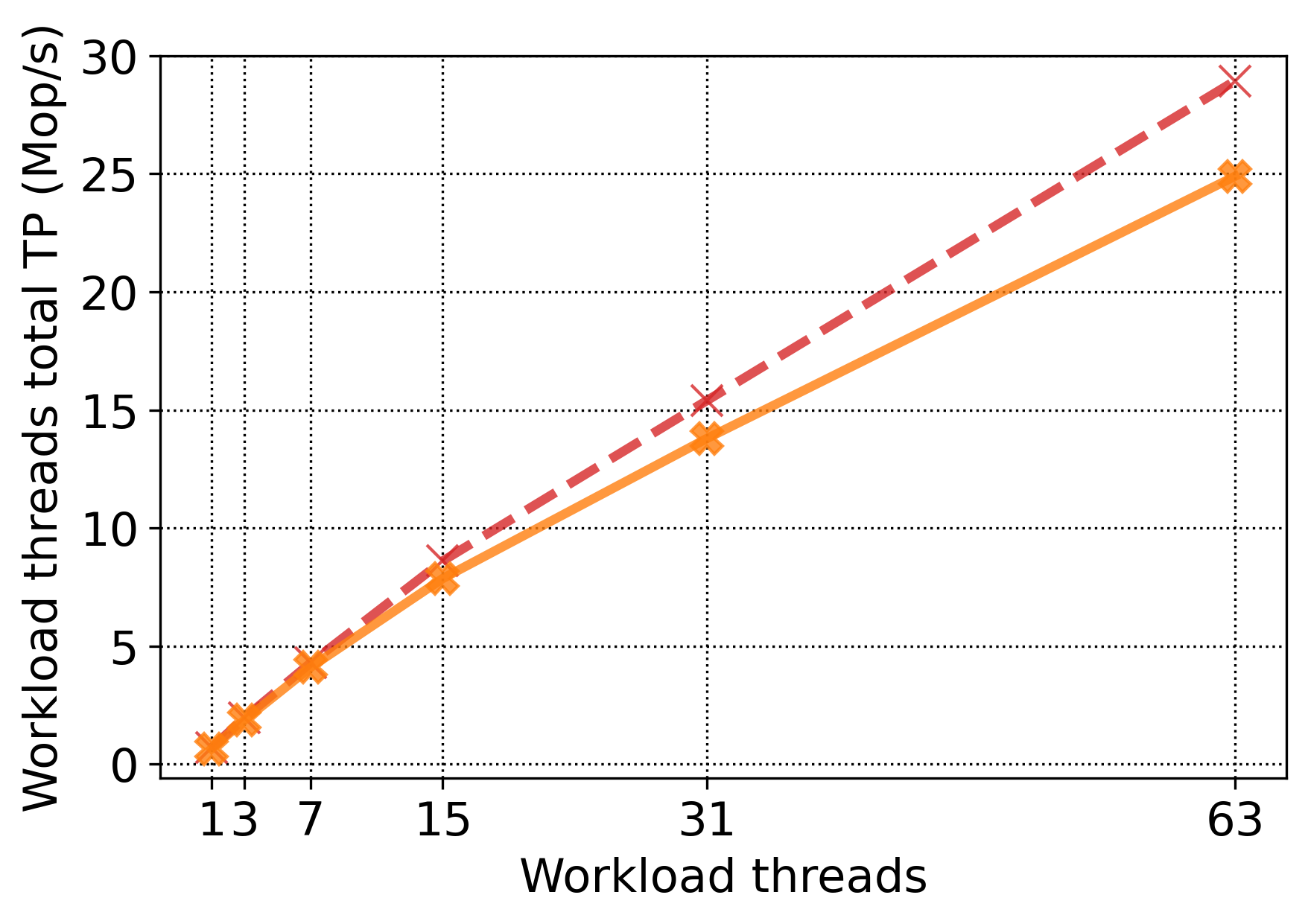}\hspace*{3mm}
  \includegraphics[width=.473\textwidth,trim={0 0 0 .3cm}]{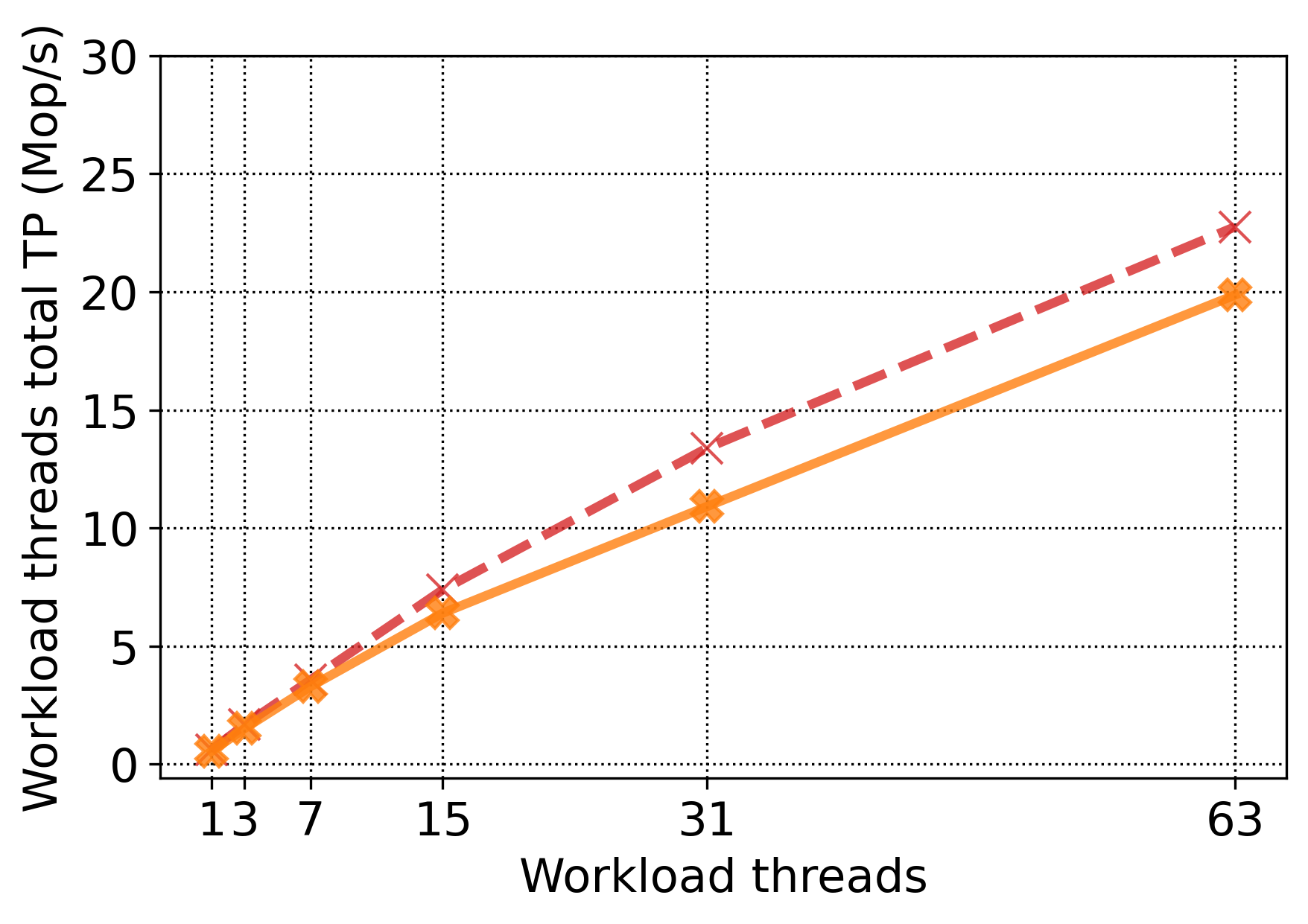}
  \raisebox{.45cm}{\includegraphics[width=.495\textwidth,trim={0 0 0 .4cm}]{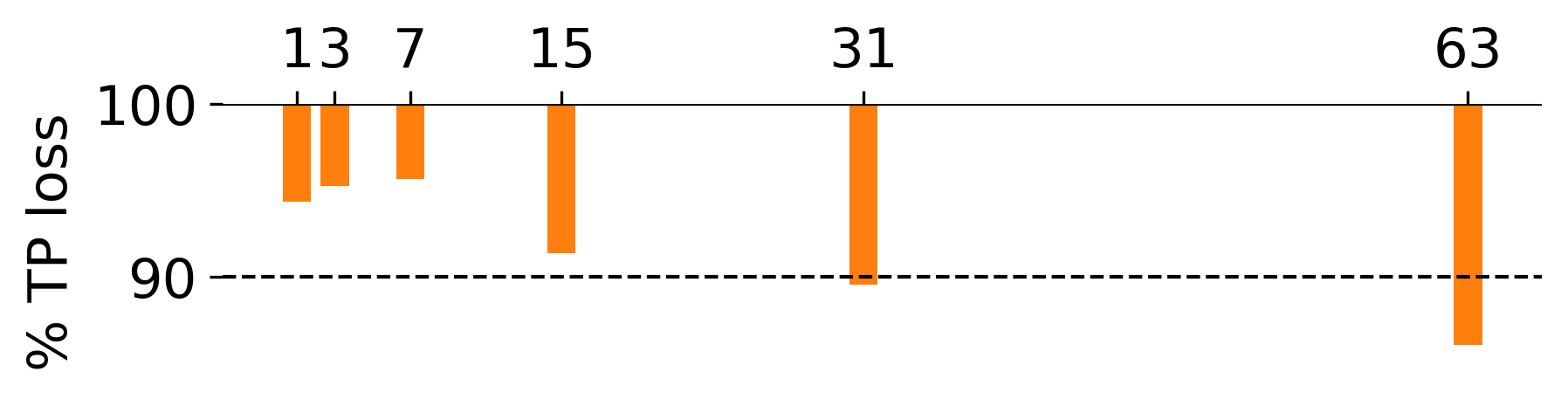}}\hspace*{0.001mm}
  \includegraphics[width=.495\textwidth,trim={0 0 0.5 .4cm}]{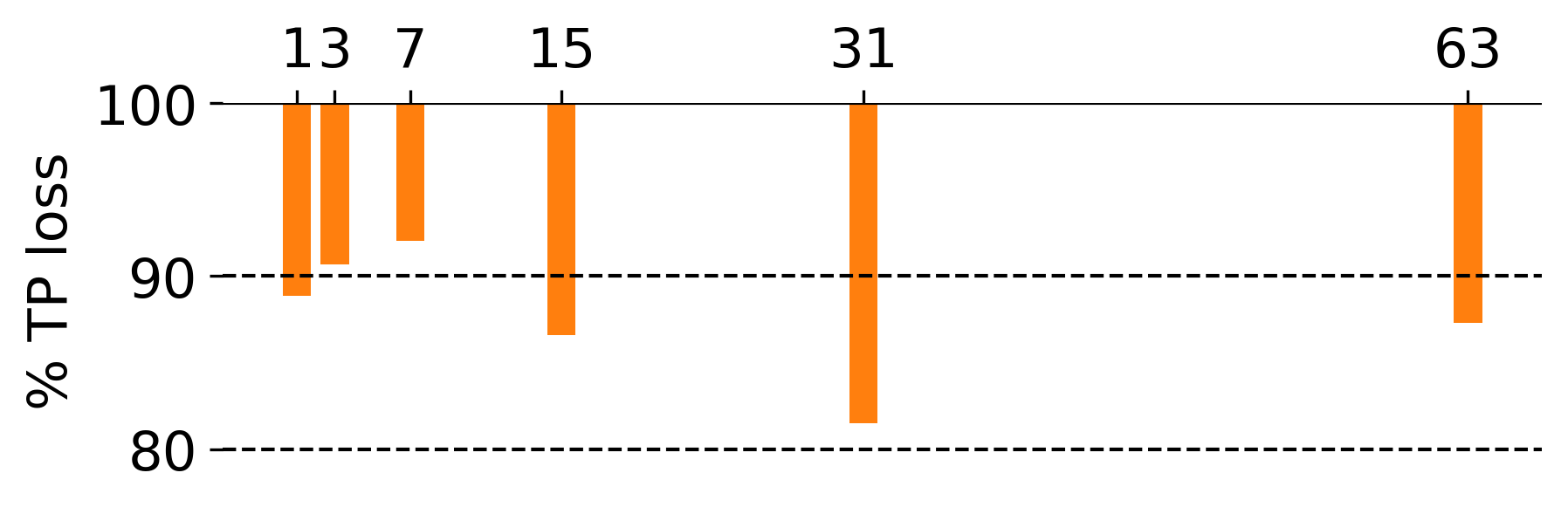}
  \caption{Overhead on BST operations}
  \label{fig:BST overhead}
\end{figure*}

\begin{figure*}[t]
  \centering
  \medskip
  \textit{\ \ \ \ \ \ \ \ \ \ \ \ Read heavy}\hfill
  \includegraphics[height=.03\textwidth]{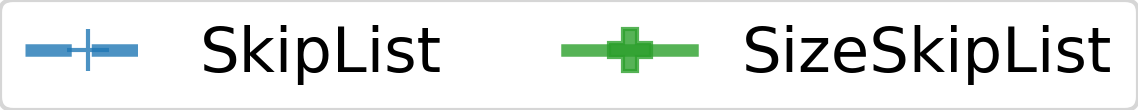}\hfill
  \textit{Update heavy\ \ \ \ }\par
  \medskip
  \text{Without a concurrent size thread}\par
  \includegraphics[width=.49\textwidth]{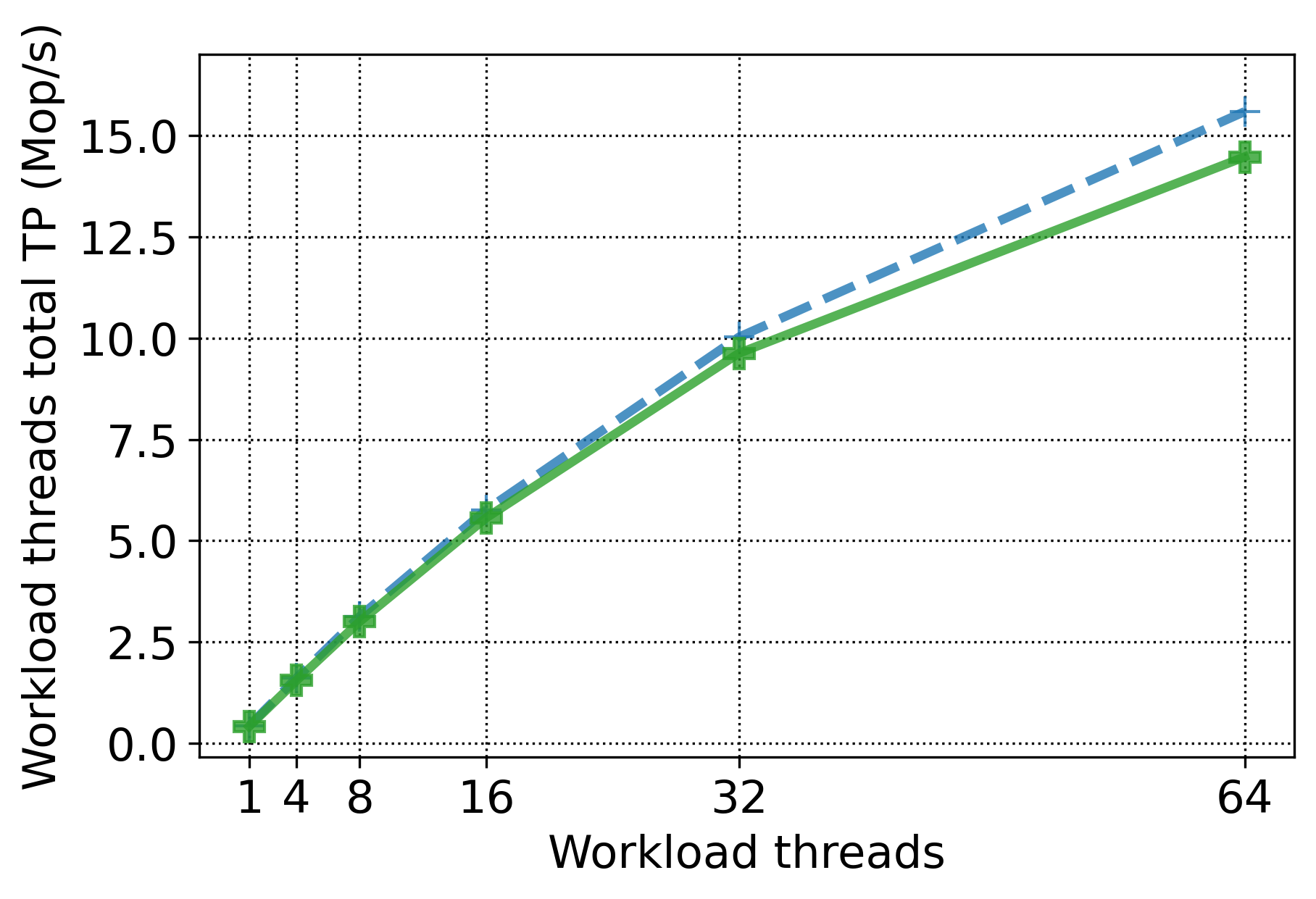}\hspace*{1.5mm}
  \includegraphics[width=.49\textwidth]{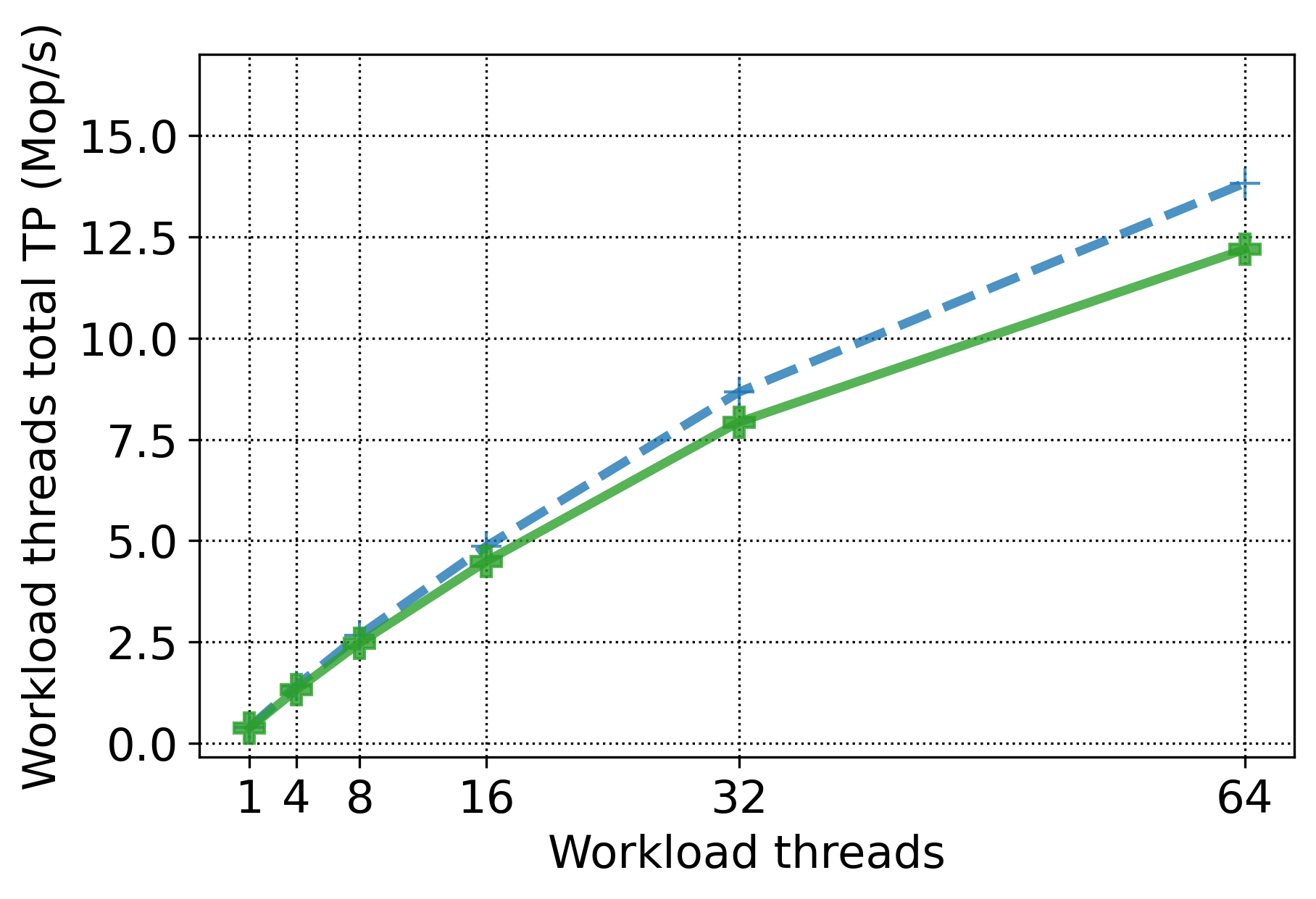}
  \raisebox{.08cm}{\includegraphics[width=.5\textwidth,trim={0.5 0 0 .4cm}]{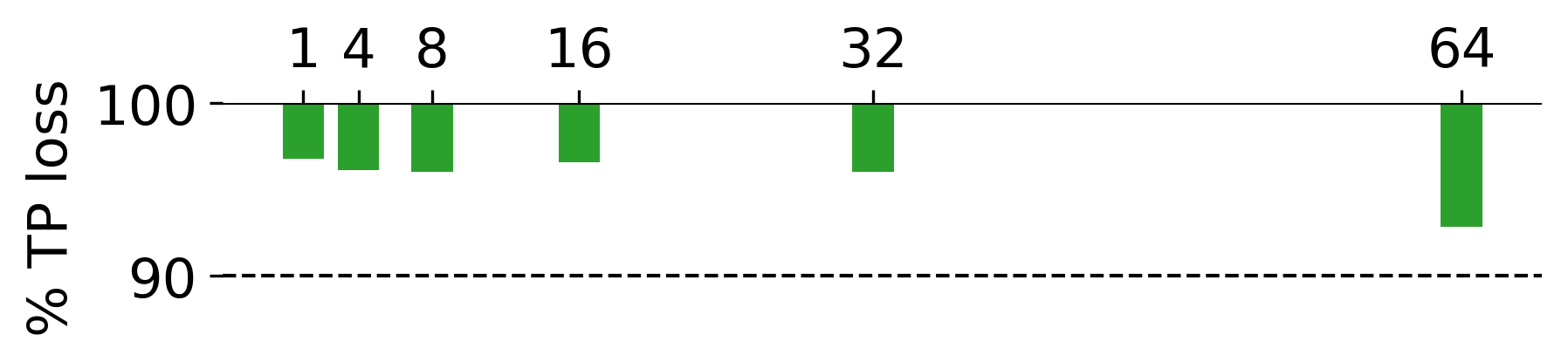}}\hspace*{0.5mm}
  \includegraphics[width=.5\textwidth,trim={0.5 0 0 .4cm}]{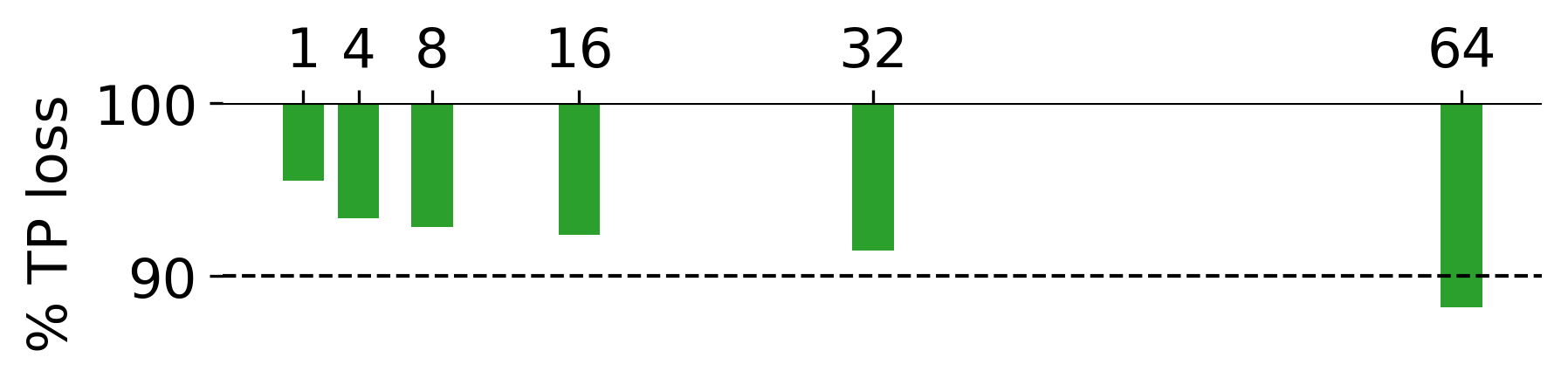}
  \medskip
  \text{With a concurrent size thread}\par
  \includegraphics[width=.49\textwidth,trim={0 0 0 .3cm}]{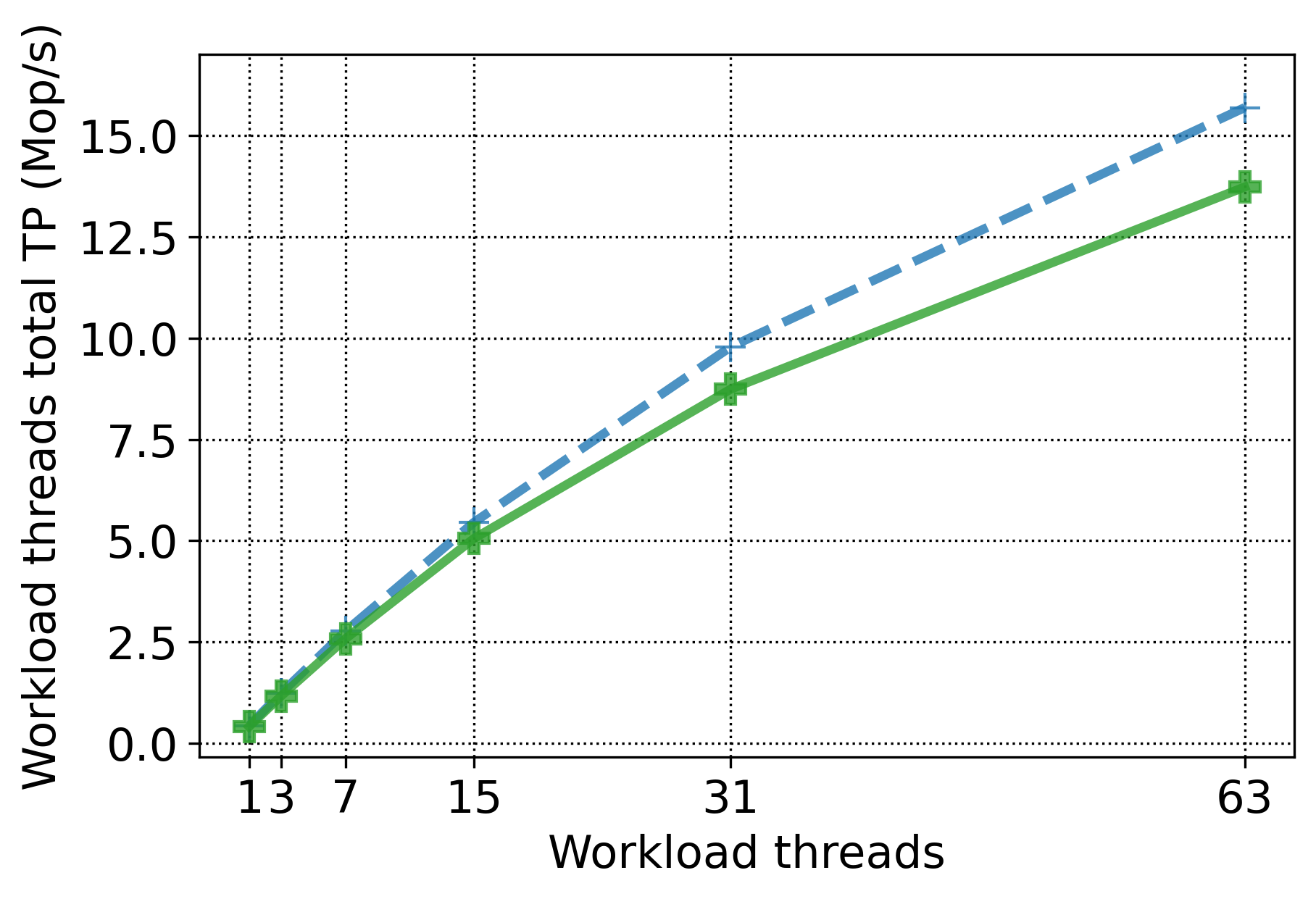}\hspace*{1mm}
  \includegraphics[width=.49\textwidth,trim={0 0 0 .3cm}]{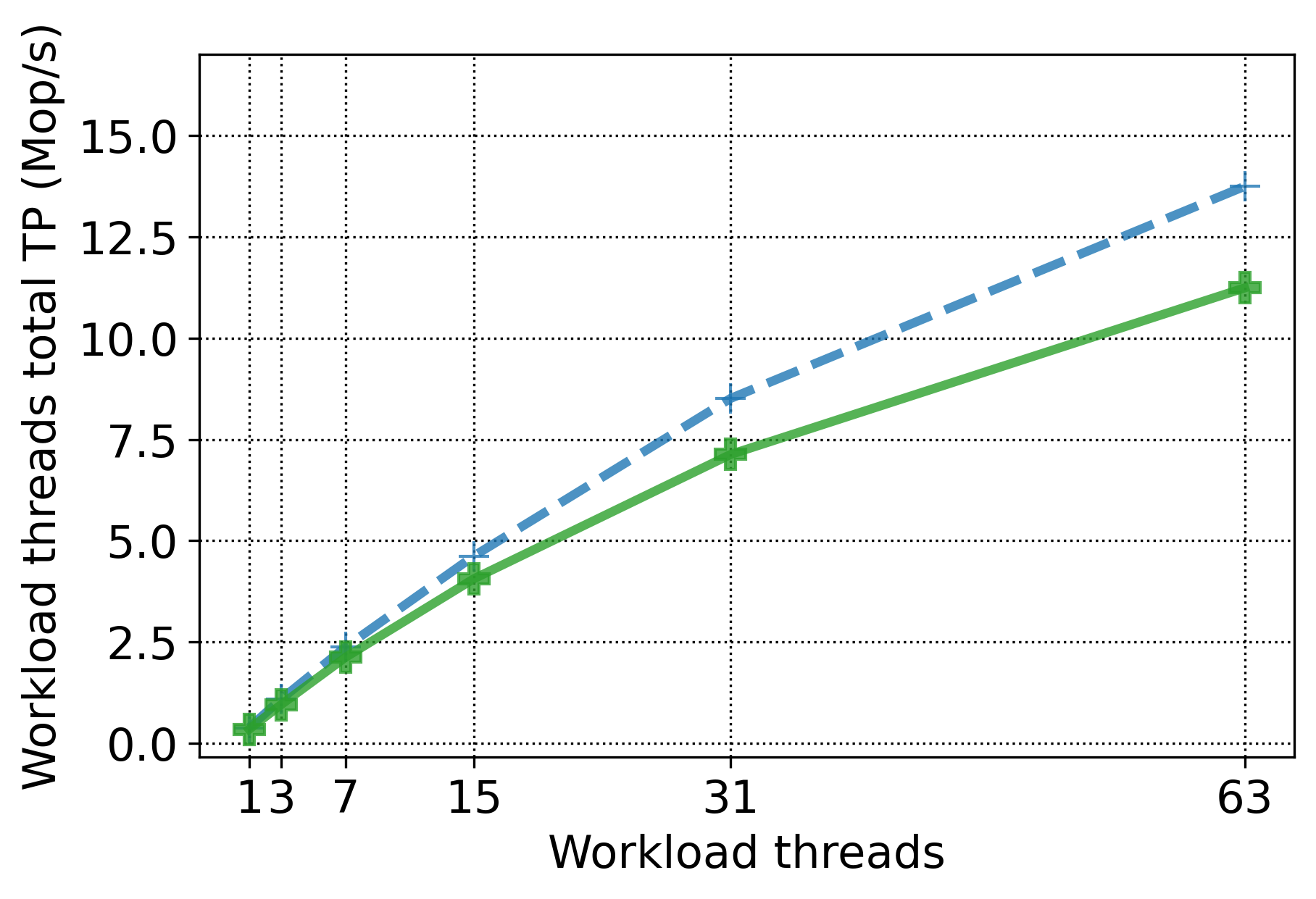}
  \raisebox{.55cm}{\includegraphics[width=.499\textwidth,trim={0 0 0 .4cm}]{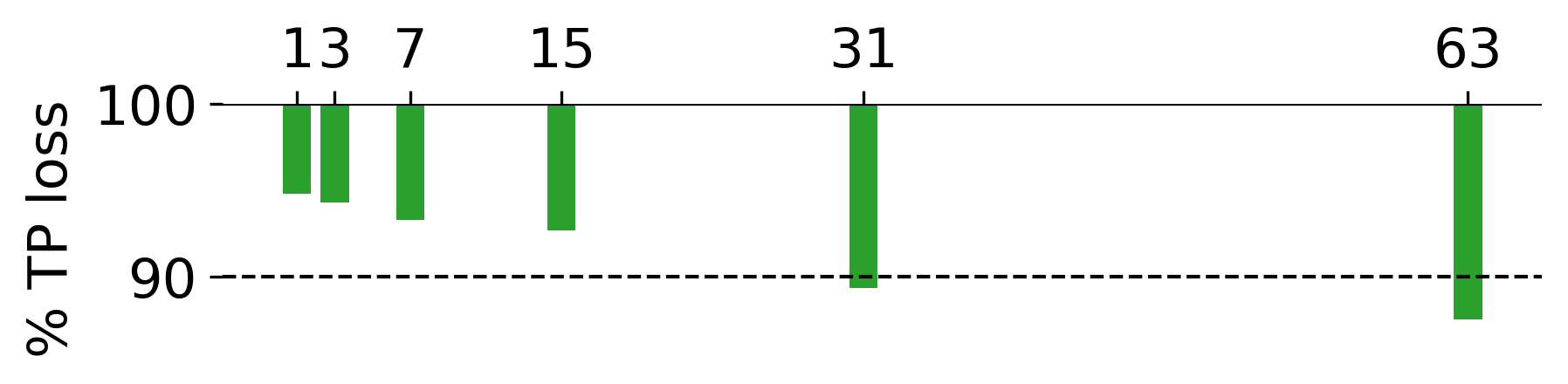}}\hspace*{.05mm}
  \includegraphics[width=.498\textwidth,trim={0 0 0 .4cm}]{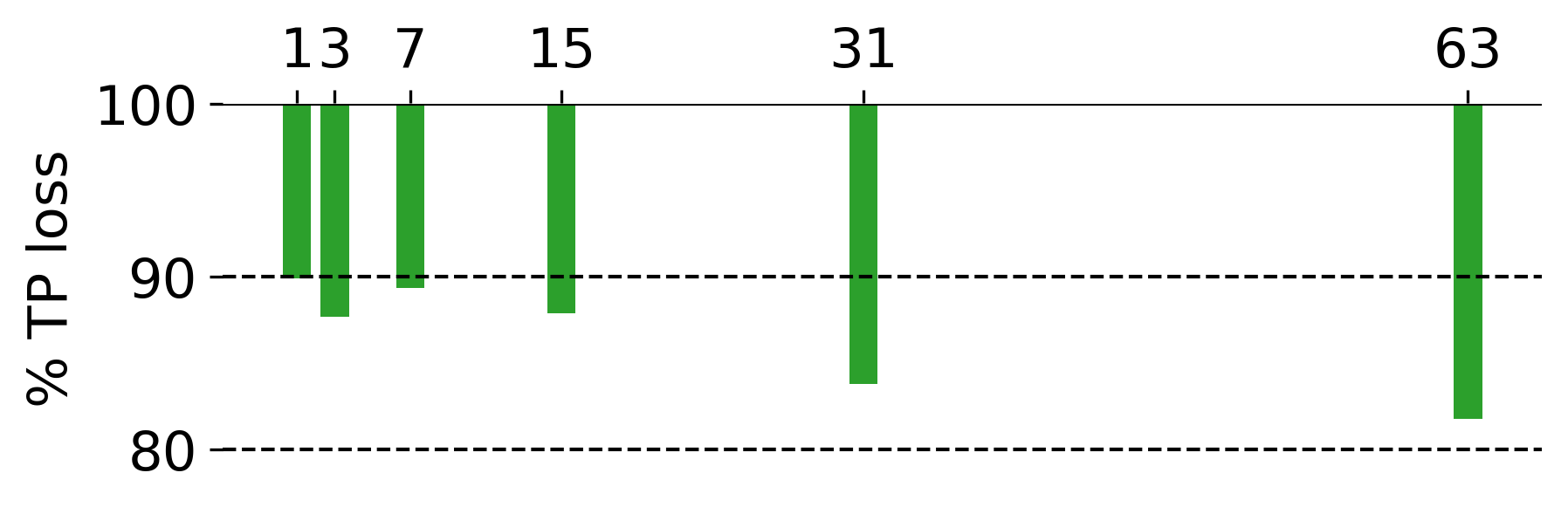}
  \caption{Overhead on skip list operations}
  \label{fig:SL overhead}
\end{figure*}

\paragraph{Overhead}

We measure the overhead of our methodology on the original data-structure operations by measuring the performance of workload threads---executing \ins{}, \del{} and \contains{} operations.
We compare the total throughput of $w$ workload threads, where $w$ varies from $1$ to $64$, for the transformed data structures versus the baseline data structures. The results appear in the top part of \Cref{fig:HT overhead,fig:BST overhead,fig:SL overhead}: the results for \shtb{} versus \htb{} appear in \Cref{fig:HT overhead}, for \sbst{} versus \bst{} in \Cref{fig:BST overhead}, and for \sskl{} versus \skl{} in \Cref{fig:SL overhead}.
They show the overhead when no concurrent \size{} operations are executed. To measure the overhead in the presence of \size{} calls as well, we similarly run $w$ workload threads, where $w$ varies from $1$ to $63$, while also running---for the transformed algorithms only---a concurrent size thread (that executes \size{} calls), and measure the total throughput of the workload threads. The results of these experiments appear in the bottom part of \Cref{fig:HT overhead,fig:BST overhead,fig:SL overhead}.

For each experiment, the top graph depicts the number of operations (\ins{}, \del{} and \contains{}) applied to the data structure per second by the workload threads altogether, measured in million operations per second. The curve of the baseline data structure appears along with the curve of its transformed version with \size{} support.
The bottom bar graph shows the throughput of the transformed data structure divided by that of the baseline data structure (in percentages), to demonstrate the throughput loss of the transformed data structure's operations. For instance, $90\%$ signify that the transformed workload threads reach $90\%$ of the throughput of the baseline workload threads. The throughput loss is worse for an update-heavy workload than for a read-heavy workload, and worse when a concurrent \size{} is executed. Still, the relative throughput in all experiments varies in the range of $80\%$ to $99\%$, i.e., a throughput loss of $1\%$ to $20\%$. We bring a breakdown of the overhead by operation type in \Cref{section:overhead-breakdown-by-op-type}.

\begin{figure*}[t]
  \centering
  \medskip
  \textit{\ \ \ \ \ \ \ \ \ \ \ Read heavy}\hfill
  \includegraphics[height=.03\textwidth]{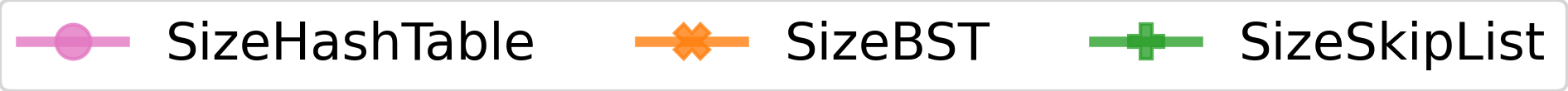}\hfill
  \textit{Update heavy\ \ \ }\par
  \medskip
  \includegraphics[width=.49\textwidth,trim={0 0 0 0.2cm},clip]{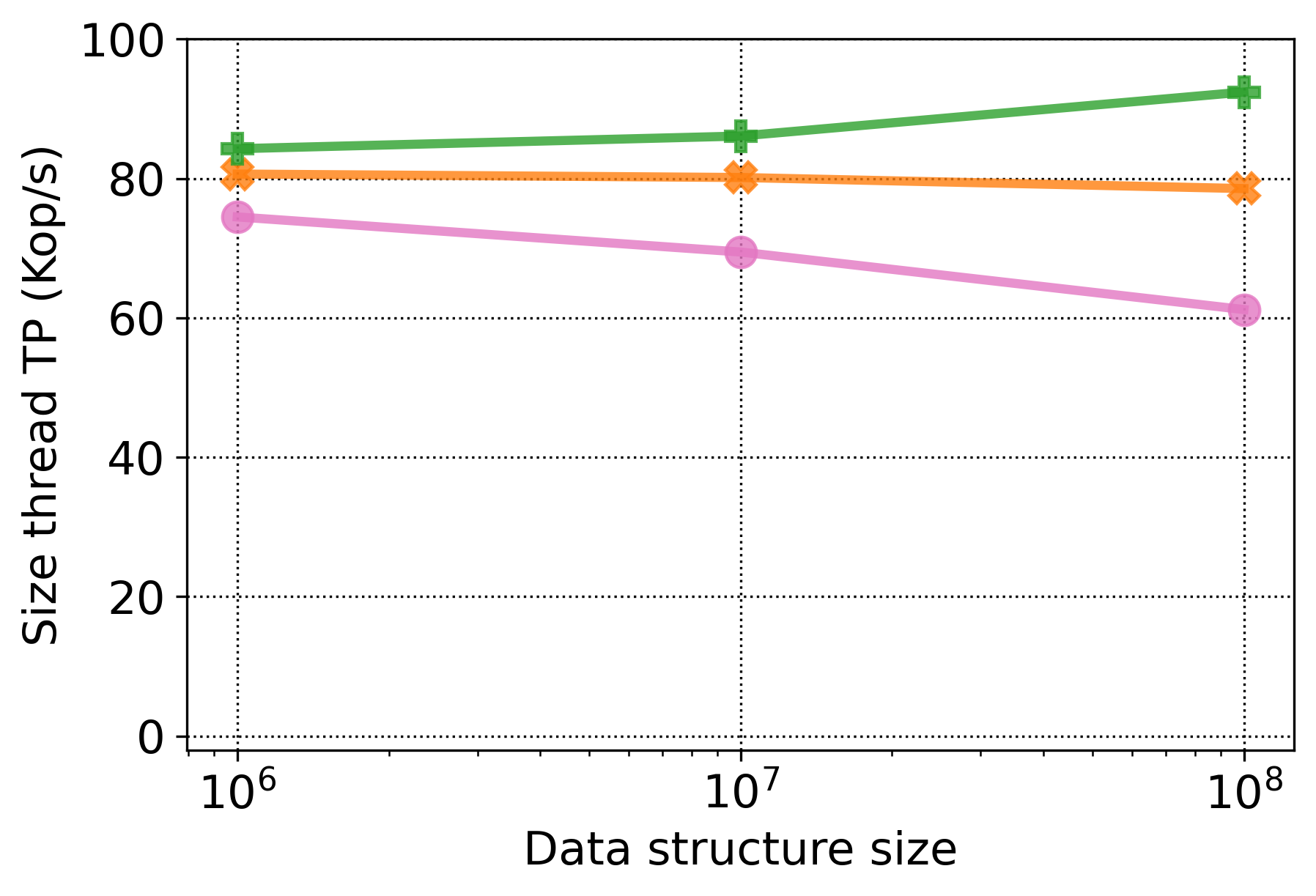}\hfill
  \includegraphics[width=.49\textwidth,trim={0 0 0 0.2cm},clip]{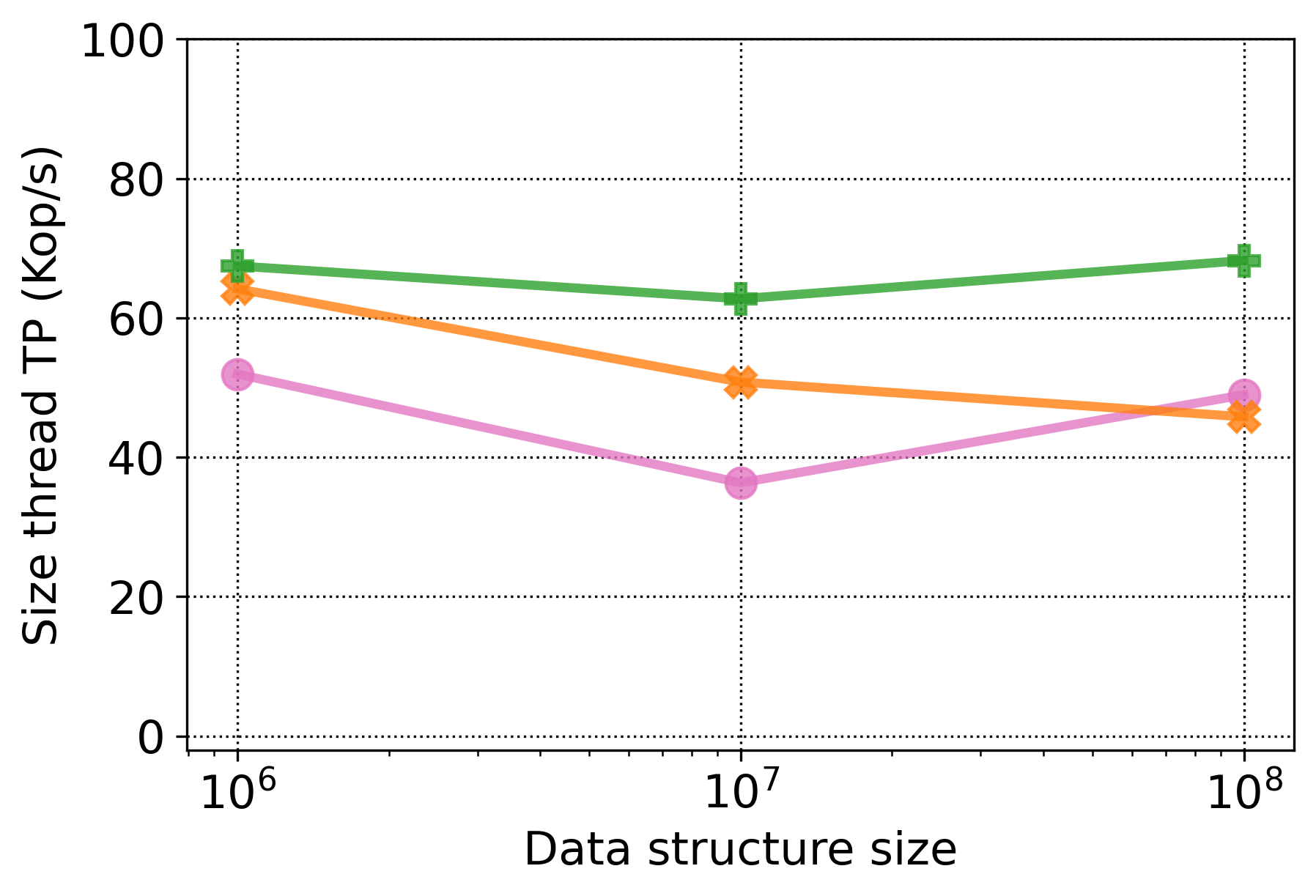}\par
  \caption{Size throughput as a function of data-structure size}
  \label{fig:per-size}
\end{figure*}

\begin{figure*}[t]
  \centering
  \medskip
  \textit{\ \ \ \ \ \ \ \ \ \ \ \ Read heavy}\hfill
  \includegraphics[height=.03\textwidth]{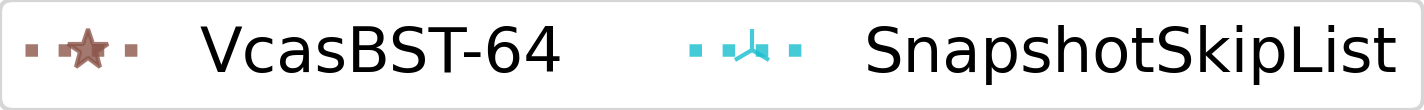}\hfill
  \textit{Update heavy\ \ \ \ }\par
  \medskip
  \includegraphics[width=.49\textwidth,trim={0 0.3cm 0 0},clip]{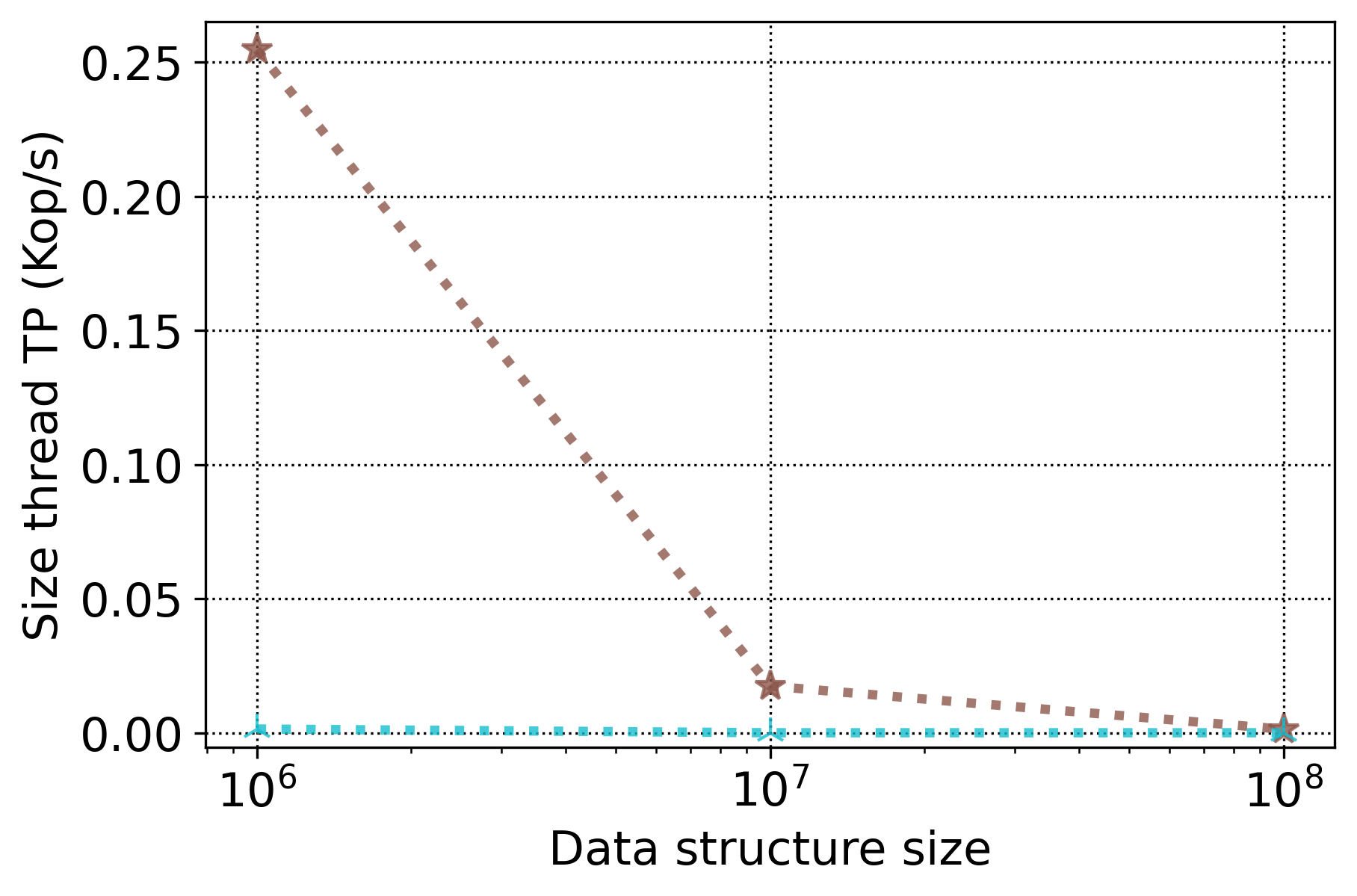}\hfill
  \includegraphics[width=.49\textwidth,trim={0 0.3cm 0 0},clip]{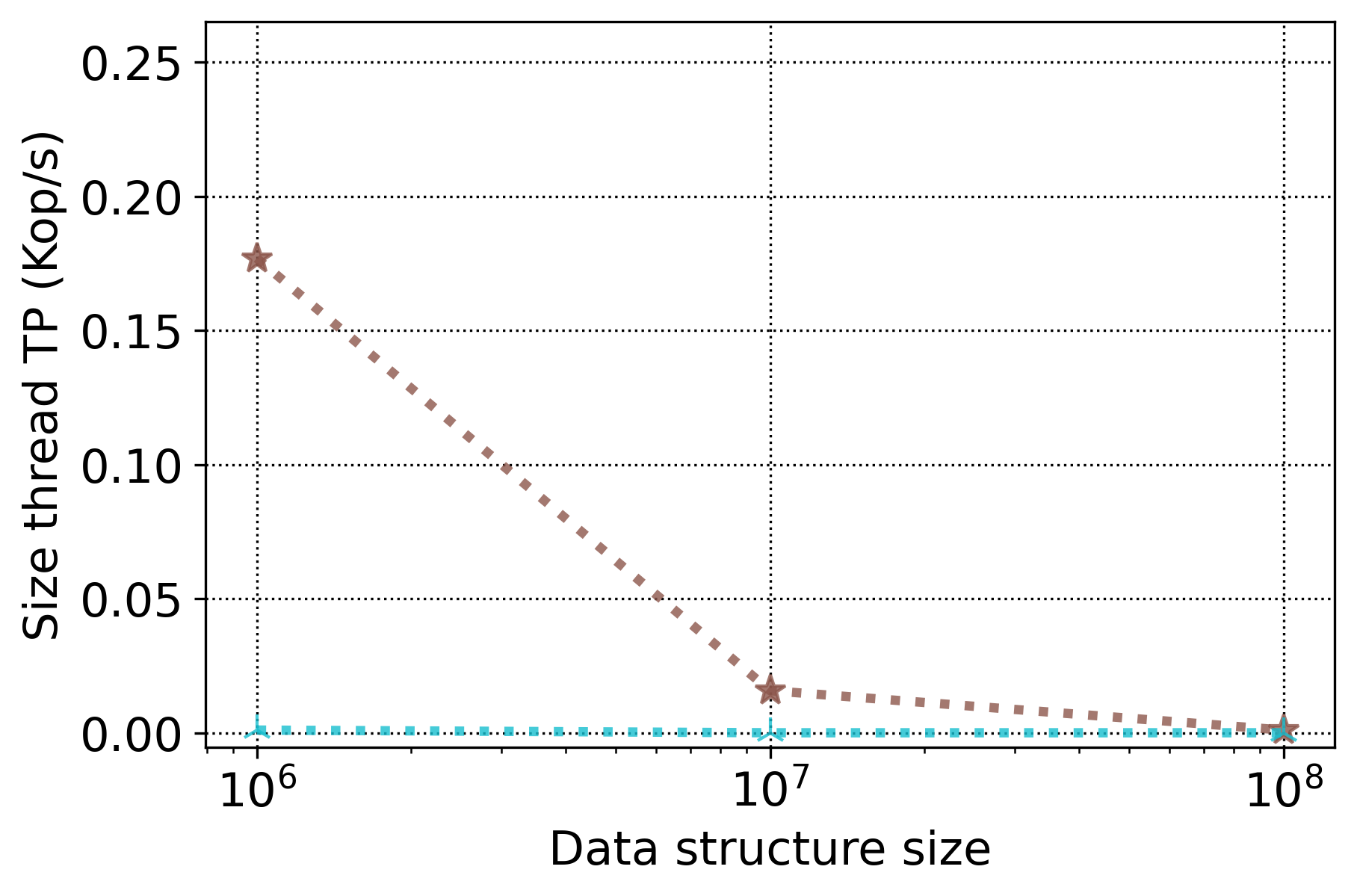}
  \caption{Snapshot-based size throughput as a function of data-structure size}
  \label{fig:per-size-snapshot}
\end{figure*}

\paragraph{Varying data-structure size}
To measure the effect of the number of elements in the data structure on the \size{} throughput,
we run experiments on different data-structure sizes, varying between $1$M and $100$M, with $32$ concurrent threads---one size thread and $31$ workload threads. \Cref{fig:per-size} presents the throughput of the size thread, measured in thousand \size{} operations per second. Each curve shows the size throughput for another transformed data structure, per different initial sizes. The results demonstrate that our size-computation methodology is not sensitive to the data-structure size. This is due to the metadata array, on which the \size{} operates instead of traversing the data structure itself.
In contrast, obtaining the size using a snapshot-based method causes performance degradation as the size increases, as shown for \vcasbst{} in 
\Cref{fig:per-size-snapshot} which presents the corresponding graphs for the competitors. \snapskl{} demonstrates a very low size throughput: for a data-structure size of $1$M it executes $1.4$ \size{} operations per second in average for the read-heavy workload and $1$ \size{} operation per second for the update-heavy workload; and for bigger data-structure sizes it executes less than $1$ \size{} operation per second.

\paragraph{Scalability}
To assess the scalability of the \size{} operation, we run $s$ size threads, where $s$ varies between $1$ and $16$, concurrently with $32$ workload threads. \Cref{fig:scalability} presents the total throughput of all size threads, measured in thousand \size{} operations per second. It shows results for both our transformed data structures, and the snapshot-supporting data structures which demonstrate inferior performance.
For each of our transformed data structures, the throughput improves as number of size threads increases. This demonstrates the scalability of our methodology.

\begin{figure*}
  \centering
  \medskip
  \hfill\includegraphics[height=.03\textwidth]{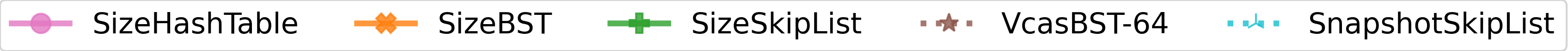}\hspace*{1.5mm}\par
  \textit{\ \ \ \ \ \ \ \ \ \ \ Read heavy}\hfill
  \textit{Update heavy\ \ \ \ }\par
  \medskip
  \includegraphics[width=.49\textwidth]{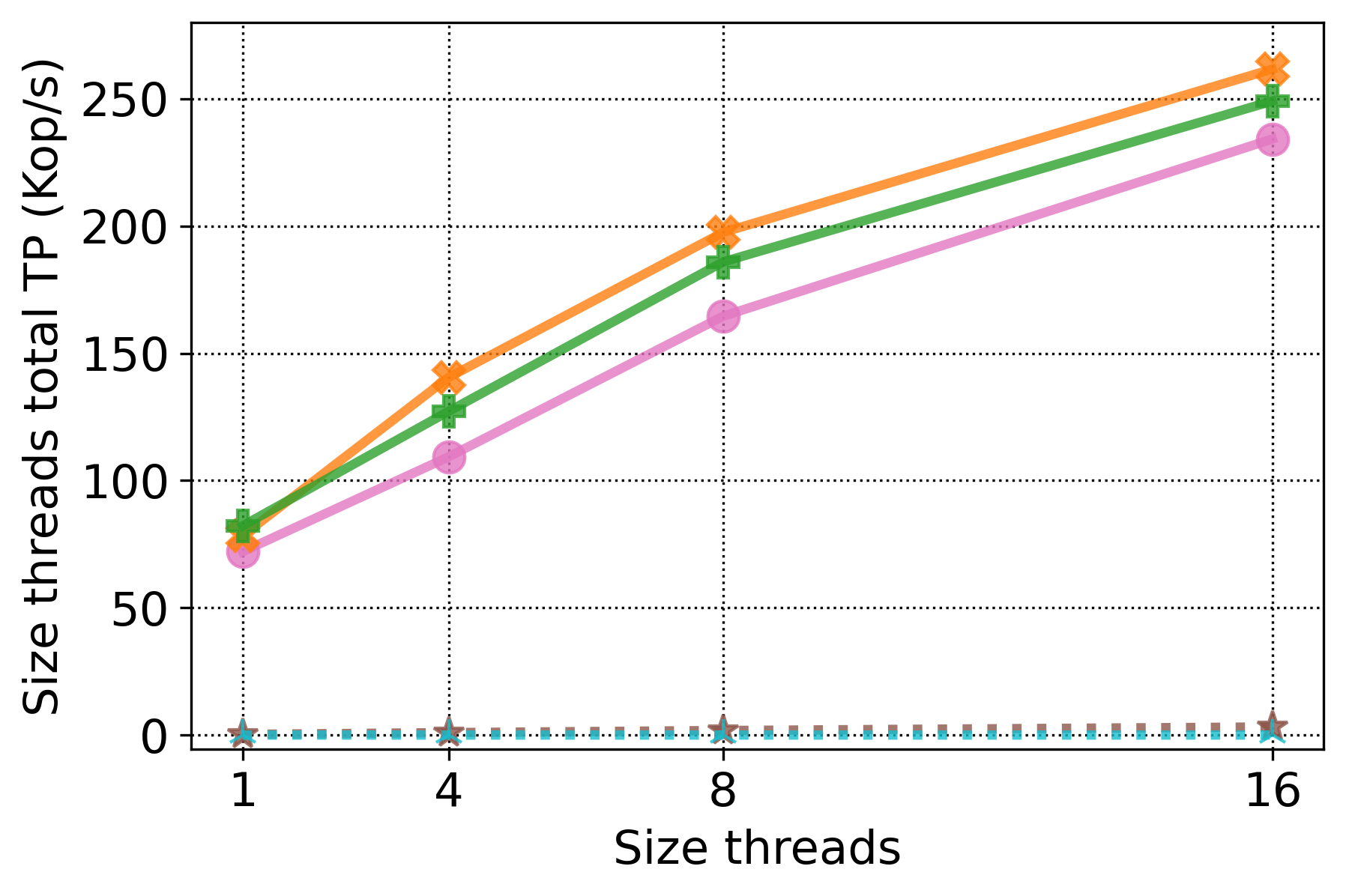}\hfill
  \includegraphics[width=.49\textwidth]{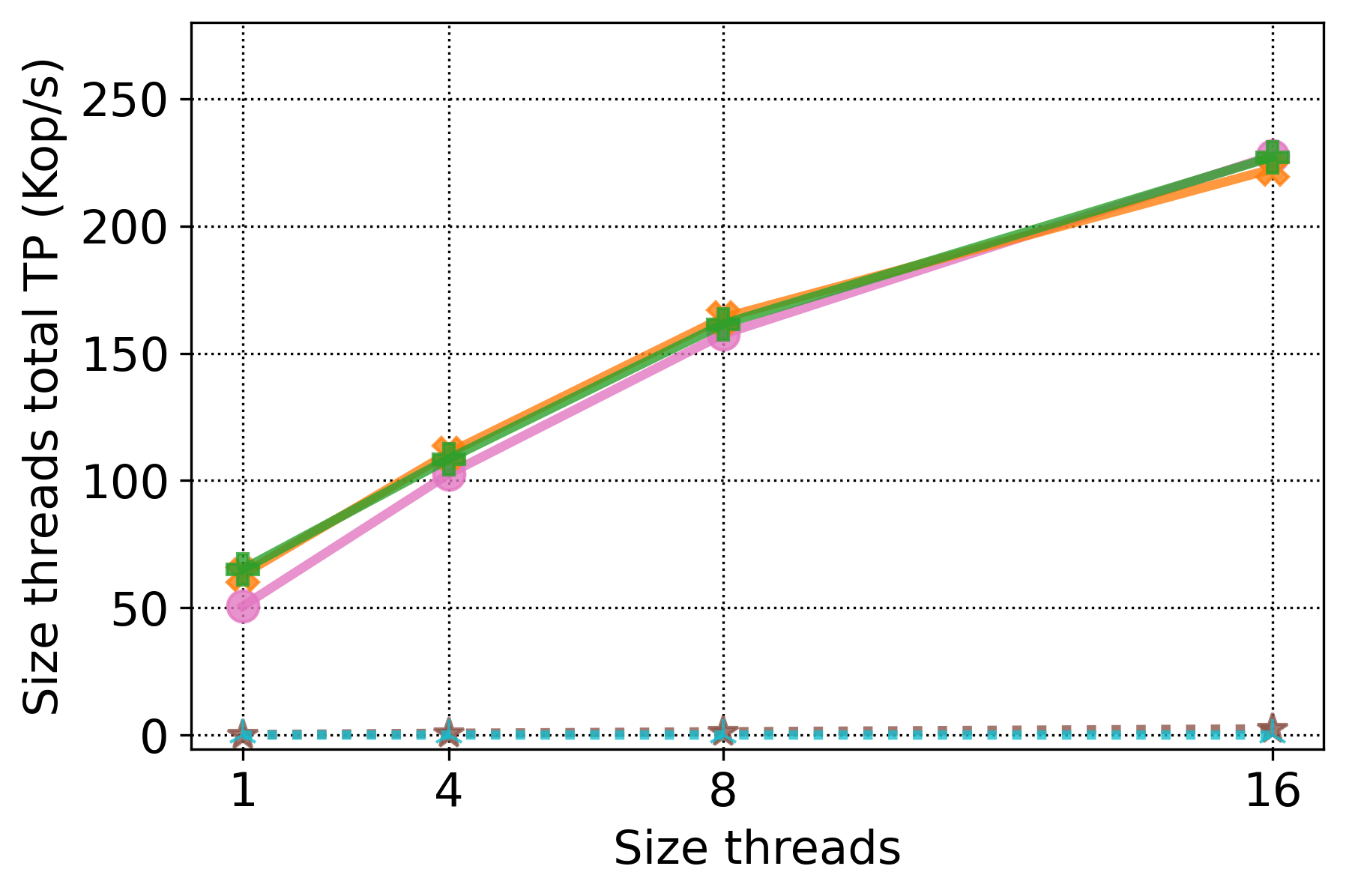}
  \caption{Size scalability}
  \label{fig:scalability}
\end{figure*}

\paragraph{Comparison to snapshot-based size}
Our transformed data structures yield a much better throughput than the competitors, as demonstrated in \Cref{fig:per-size,fig:per-size-snapshot,fig:scalability}:
\sskl{} demonstrates in these experiments a throughput at least $54806\times$ the throughput of \snapskl{} (in some experiments, not even a single \size{} operation on \snapskl{} completed within $5$ seconds).
The throughput of \sbst{} in these experiments is between $83-60423\times$ the throughput of \vcasbst{}. The performance gap between our transformed data structures and \vcasbst{} is not as large as the gap from \snapskl{}, because \vcasbst{} succeeds to improve snapshot performance in comparison to \snapskl{}, but not without a cost---it pays with higher space overhead.

\subsection{Overhead Breakdown by Operation Type}\label{section:overhead-breakdown-by-op-type}

\begin{figure*}
  \centering
  \medskip
  \textit{\ \ \ \ \ \ \ \ \ \ \ \ Read heavy}\hfill
  \includegraphics[height=.03\textwidth]{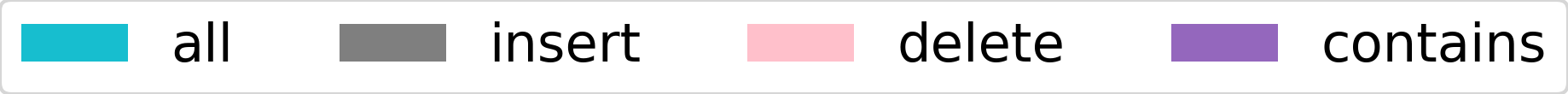}\hfill
  \textit{Update heavy\ \ \ \ }\par
  \vspace{1mm}
  \text{\shtb{} vs \htb{} without a concurrent size thread}\par
  \vspace{2.2mm}
  \includegraphics[width=.4975\textwidth,trim={0 .7cm 0 .4cm}]{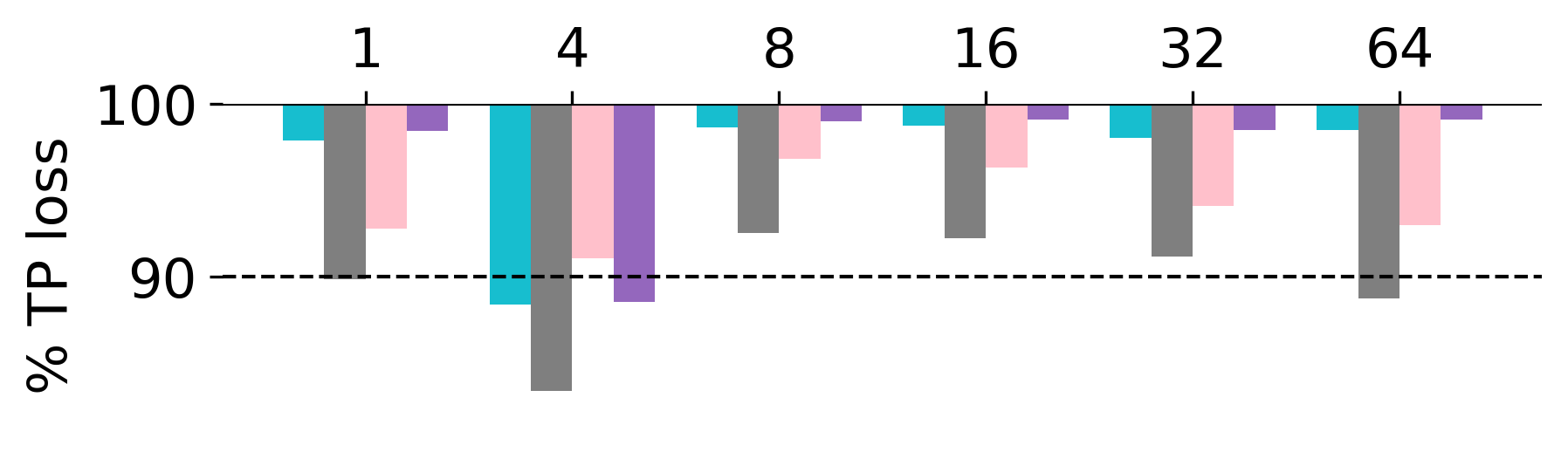}\hspace*{0.001mm}
  \raisebox{.3cm}{\includegraphics[width=.4975\textwidth,trim={0 .7cm 0 .4cm}]{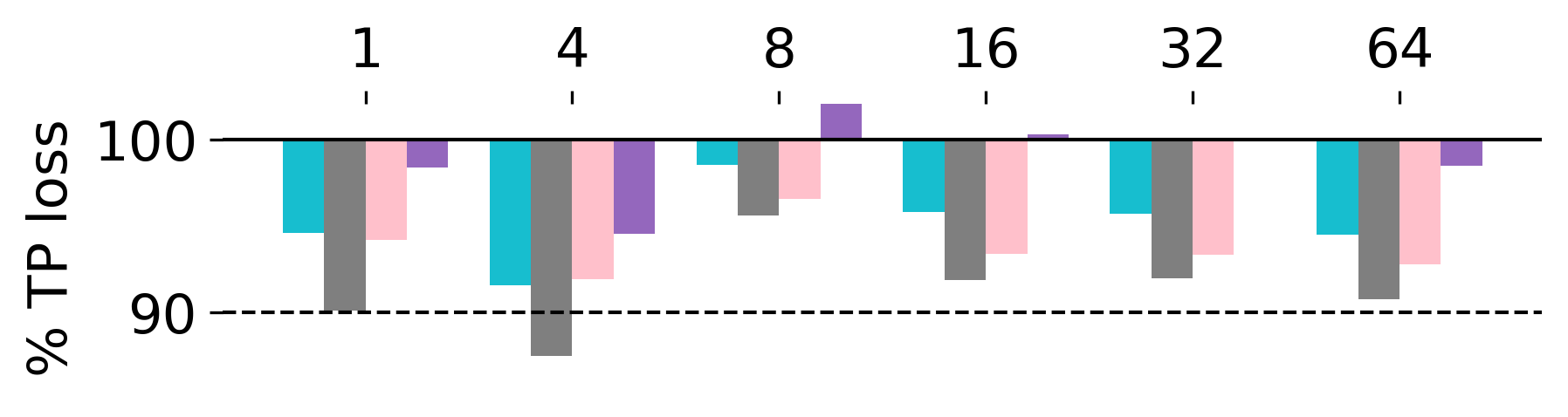}}
  \text{\shtb{} vs \htb{} with a concurrent size thread}\par
  \vspace{2.2mm}
  \raisebox{.32cm}{\includegraphics[width=.4975\textwidth,trim={0 .7cm 0 .4cm}]{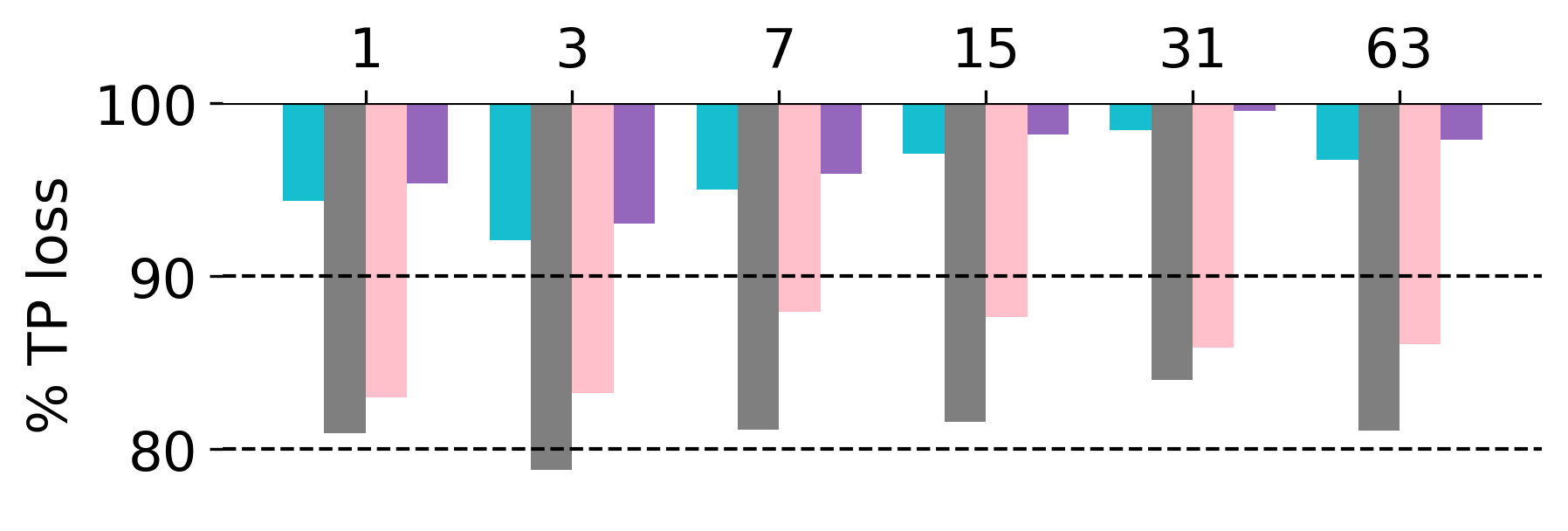}}\hspace*{0.01mm}
  \includegraphics[width=.4975\textwidth,trim={0 .7cm 0 .4cm}]{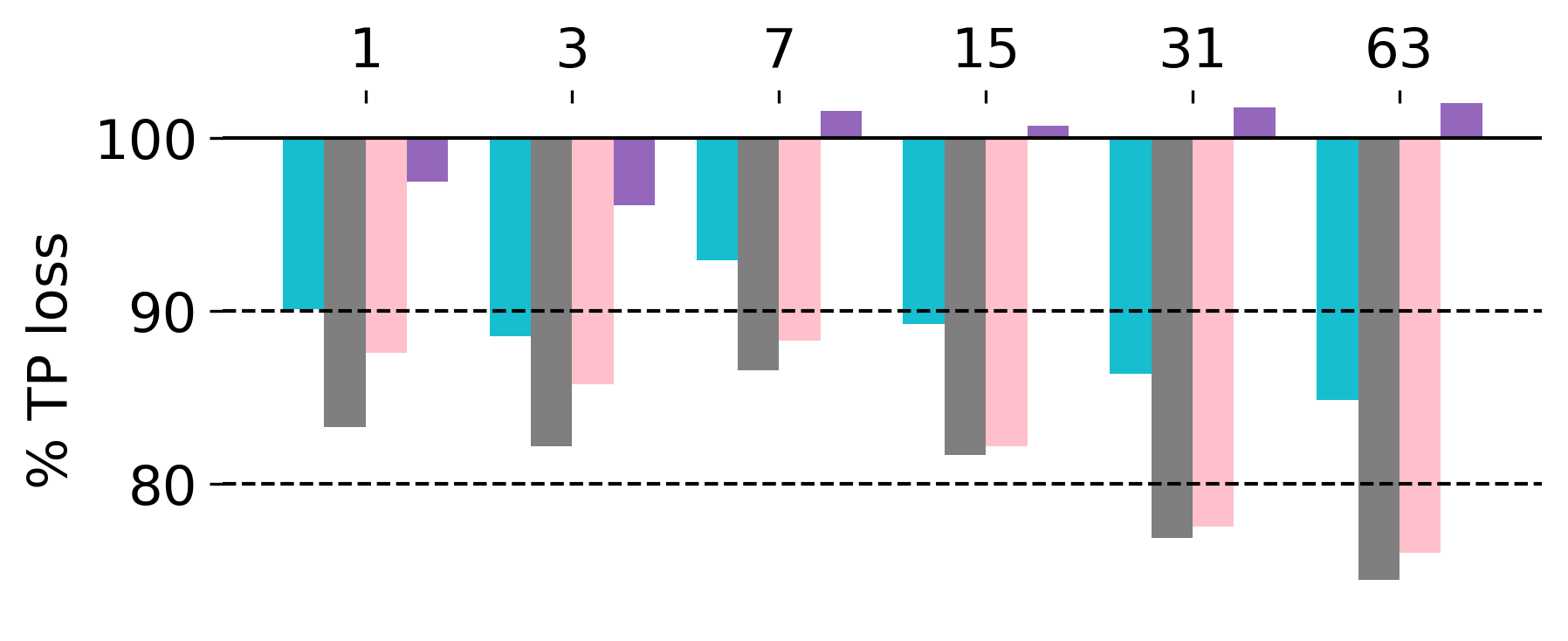}
  \text{\sbst{} vs \bst{} without a concurrent size thread}\par
  \vspace{2.2mm}
  \raisebox{.2cm}{\includegraphics[width=.499\textwidth,trim={0 0.7 0 .4cm}]{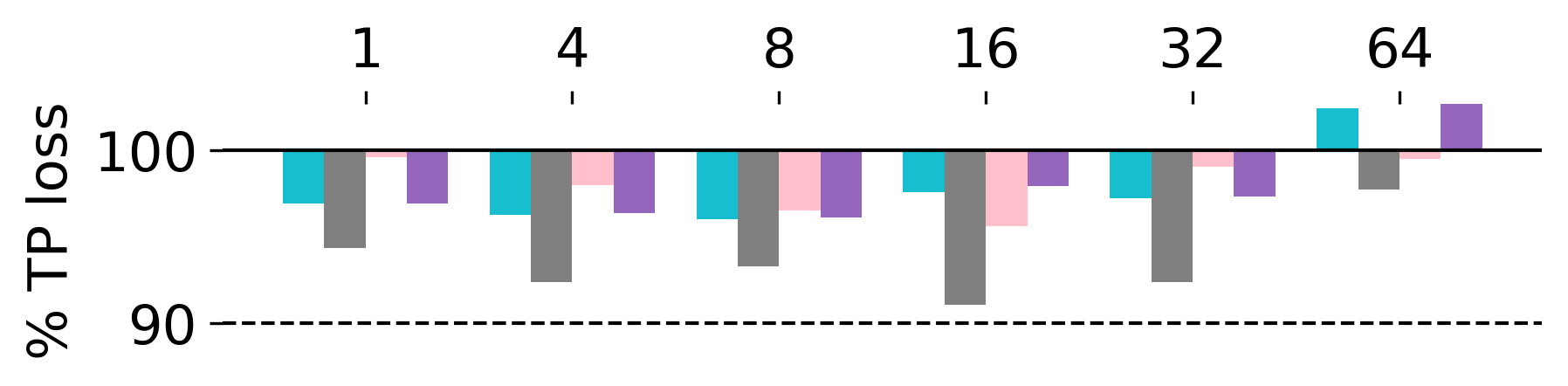}}\hspace*{0.001mm}
  \includegraphics[width=.4985\textwidth,trim={0 0.7 0 .4cm}]{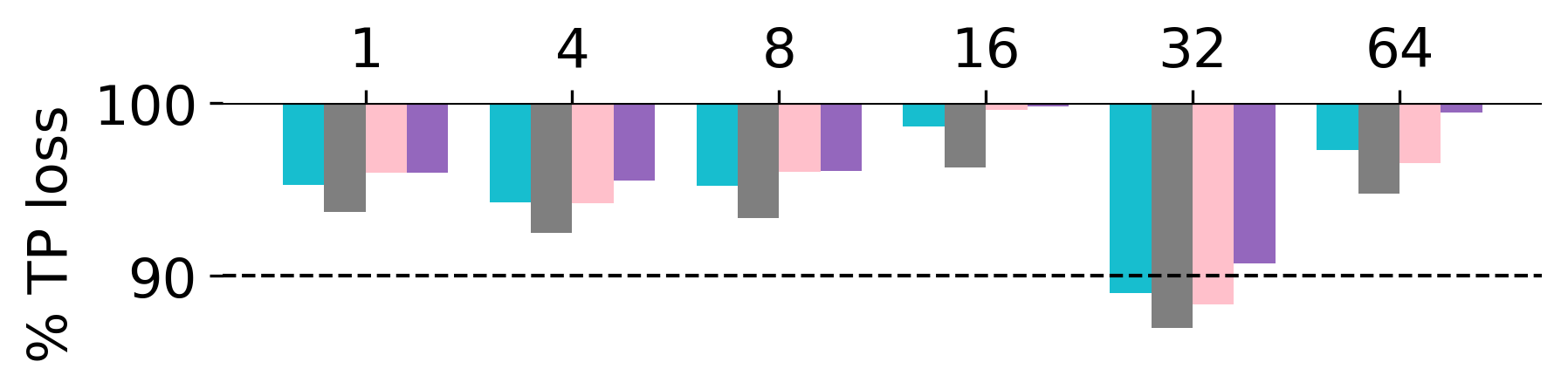}
  \text{\sbst{} vs \bst{} with a concurrent size thread}\par
  \vspace{2.2mm}
  \includegraphics[width=.497\textwidth,trim={0 0.7 0 .4cm}]{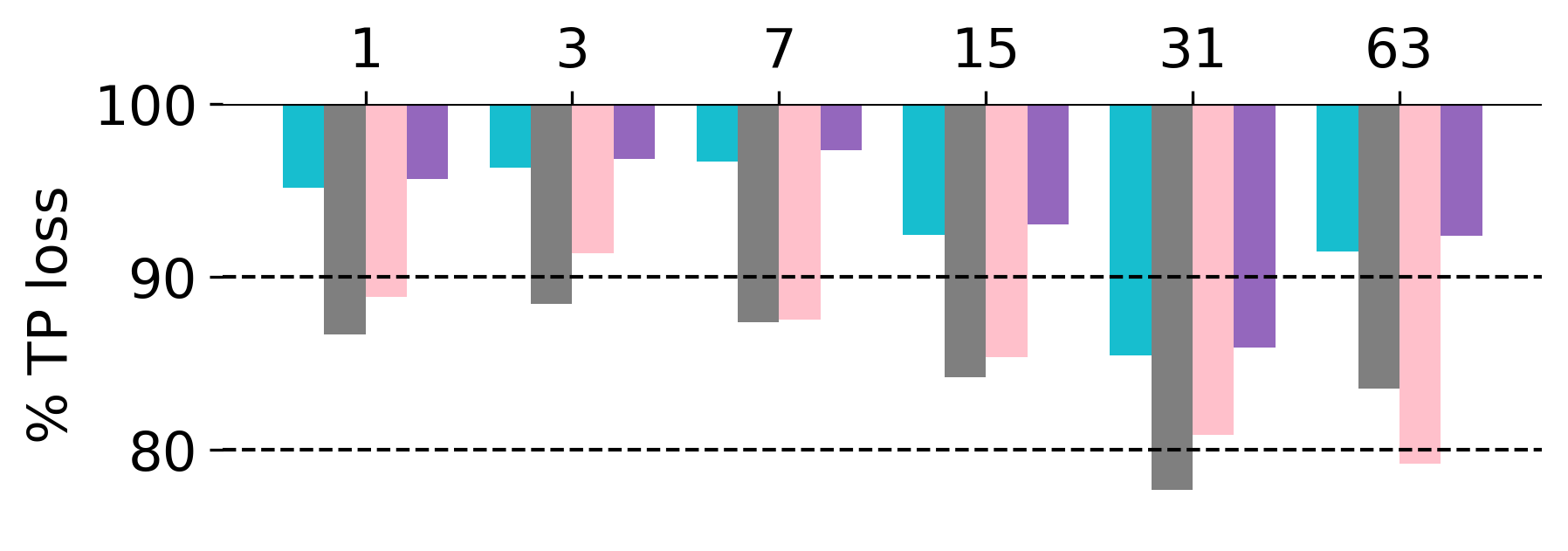}\hspace*{0.01mm}
  \raisebox{.2cm}{\includegraphics[width=.497\textwidth,trim={0 0.7 0 .4cm}]{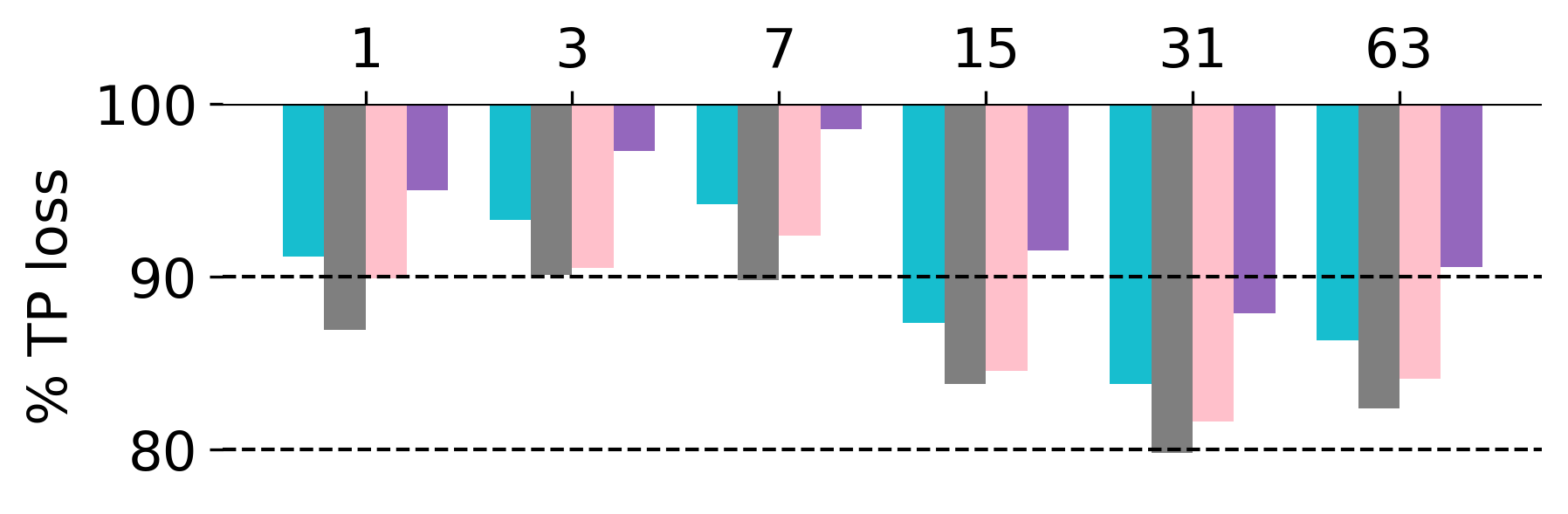}}
  \text{\sskl{} vs \skl{} without a concurrent size thread}\par
  \vspace{2.2mm}
  \raisebox{.2cm}{\includegraphics[width=.4995\textwidth,trim={0 1.3 0 .4cm}]{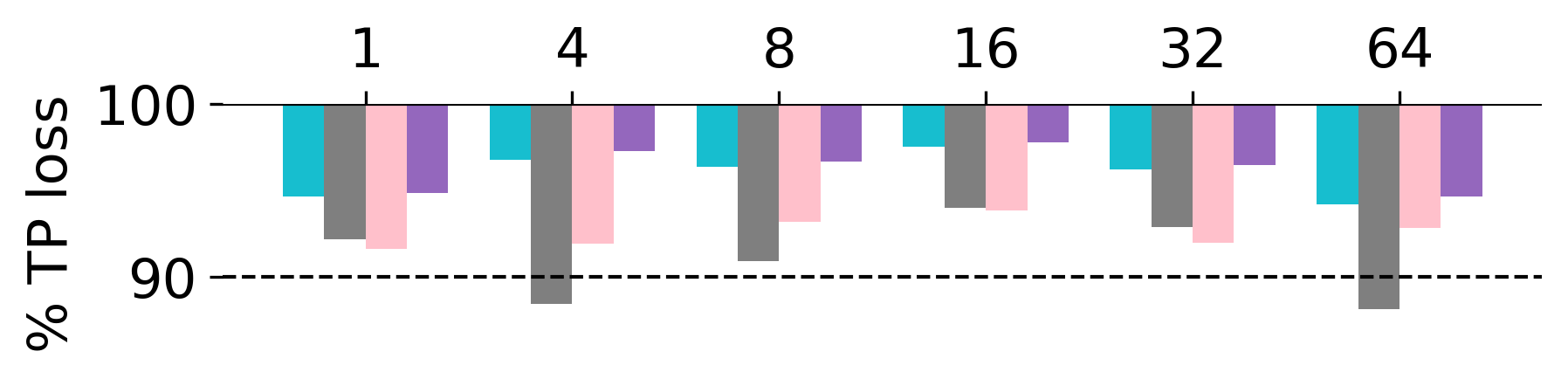}}\hspace*{0.5mm}
  \includegraphics[width=.4995\textwidth,trim={0 1.3 0 .4cm}]{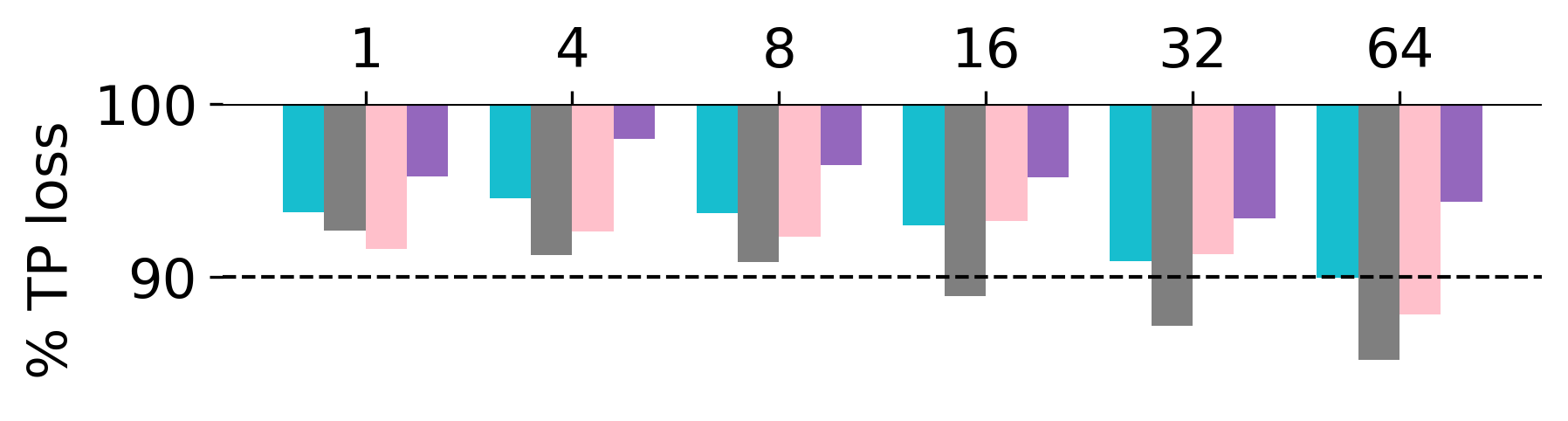}
  \text{\sskl{} vs \skl{} with a concurrent size thread}\par
  \vspace{2.2mm}
  \includegraphics[width=.499\textwidth,trim={0 0.7 0 .4cm}]{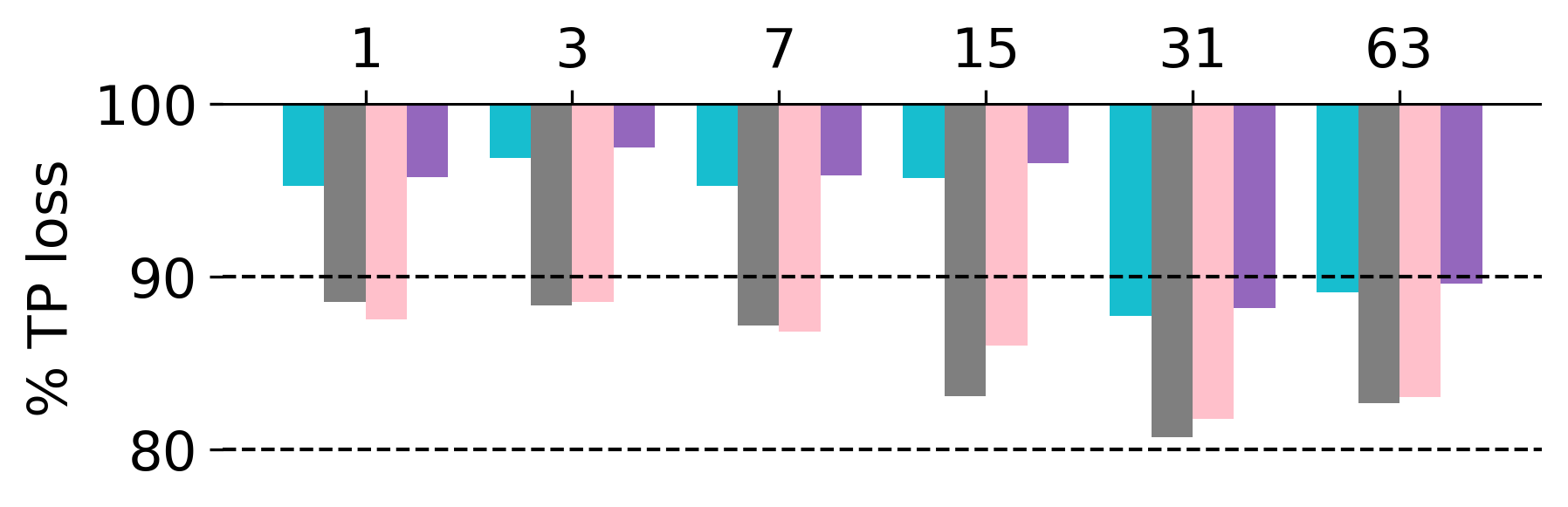}\hspace*{.05mm}
  \raisebox{.2cm}{\includegraphics[width=.498\textwidth,trim={0 0.7 0 .4cm}]{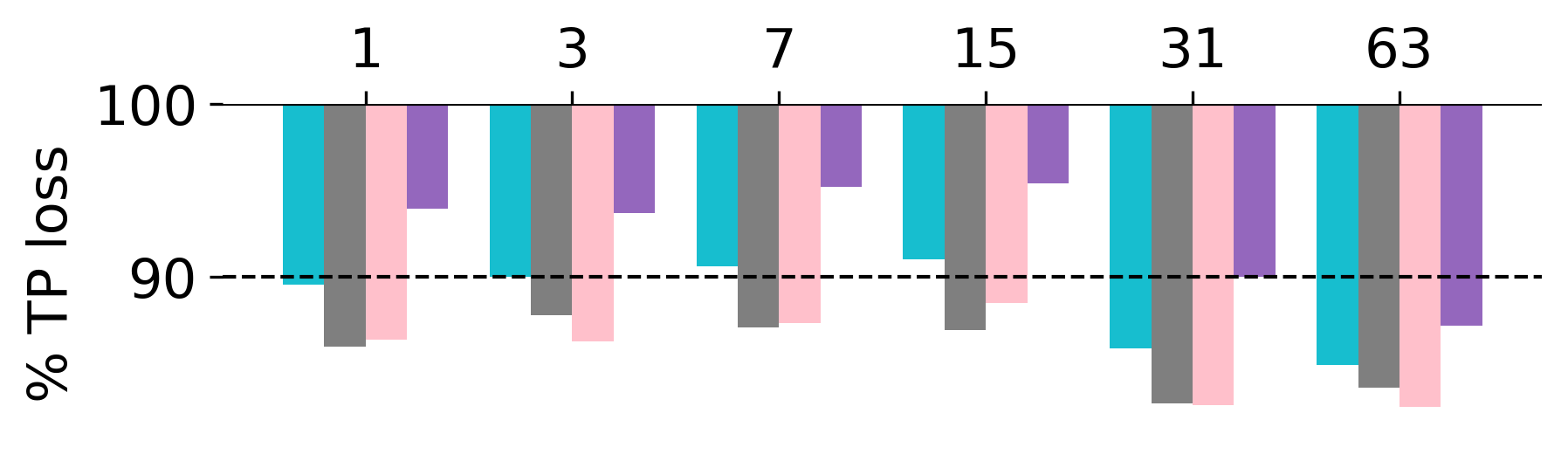}}
  \caption{Overhead breakdown by operation type}
  \label{fig:overhead breakdown}
\end{figure*}

We performed measurements to assess the overhead breakdown by operation type (\ins{} / \del{} / \contains{}).
Similarly to the above overhead measurements, we compare the performance of workload threads for the transformed data structures versus the baseline data structures. But here, in addition to comparing the combined throughput of all three types of operations by all workload threads (i.e., the total number of all operations divided by the total time they ran),
we compare also the total throughput of all workload threads {\em per operation type} (namely, the total number of insertions by all threads divided by the total time the insertions ran, and the same for deletions and for \contains{} calls). The results appear in \Cref{fig:overhead breakdown}.
In most measurements, the throughput loss is highest for \ins{} operations and lowest for \contains{} operations.

\section{Conclusion}\label{section:conclusion}

In this work we addressed the problem of obtaining a correct size of a concurrent data structure. We showed that existing solutions in the literature are either inefficient or incorrect (even in a very liberal sense). We then presented a methodology for adding a linearizable \size{} operation to concurrent data structures that implement sets or dictionaries.
Our methodology was shown to yield attractive theoretical properties in terms of progress guarantees and asymptotic complexity. Evaluation demonstrated that while incurring some overhead on the data-structure's original operations, the methodology yields a \size{} operation that provides an orders-of-magnitude performance improvement over existing solutions. We additionally illustrated that the \size{} operation is scalable and insensitive to the data-structure size.

\bibliographystyle{ACM-Reference-Format}
\bibliography{refs}

\appendix
\section[Inaccuracies of the algorithm in Afek et al.]{Inaccuracies of the algorithm in \cite{Afek2012size}}\label{section:prev paper incorrect}

We showed in \Cref{section:intro} that updating the data structure and the size-related metadata separately, as is done in the algorithm for data-structure size presented in \cite{Afek2012size}, makes the algorithm non-linearizable and also not satisfying the weakest correctness principle defined in \cite{herlihy2008art} which requires method calls to appear to happen in a one-at-a-time sequential order.
But even if update operations somehow update the data structure and the metadata atomically, the algorithm of \citet{Afek2012size} will still not satisfy the above-mentioned correctness principle, due to another issue we elaborate on next. We start with a demonstrating execution, which produces an impossible negative size, hence, no reordering of its method calls forms a legal sequential execution.

Consider an execution with 3 threads executing their operations concurrently: thread $T_0$ executes an insertion of an item to the data structure, thread $T_1$ executes a deletion of the same item from the data structure, and thread $T_2$ executes a \size{} call. 
First, $T_0$ and $T_1$ start executing their operations. $T_0$ inserts the item to the data structure and then $T_1$ successfully removes it. After operating on the data structure, they both call the algorithm's $wait\_free\_update$ method to update their values in the array $g\_mem$. During these method calls, they obtain $g\_seq$ when its value is $0$, and before they proceed to updating $g\_mem$, $T_2$ starts its \size{} execution.
It performs $scan\_seq:=FAI(g\_seq)$ which results in $scan\_seq=1$, and later starts collecting $g\_mem$'s values. It obtains $[0,0]$ from $g\_mem[0].\_recent$ and adds $0$ to $size$. At this point, $T_0$ resumes its execution, and writes $[1,0]$ to $g\_mem[0].\_recent$ (this value was already missed by $T_2$). Then $T_1$ resumes its execution, and writes $[-1,0]$ to $g\_mem[1].\_recent$. Now $T_2$ continues scanning the array. It obtains $[-1,0]$ from $g\_mem[1].\_recent$ and accordingly adds $-1$ to size, and then $[0,0]$ from $g\_mem[2].\_recent$ and adds $0$ to size. Subsequently, it returns the incorrect size $-1$. 

To analyze how this happened, we examine the linearization points of the operations on the array $g\_mem$.
The linearization point of the size operation by $T_2$ is when it increments $g\_seq$ using $FAI$; the linearization point of the insertion by $T_0$ is when it writes to $g\_mem[0].\_recent$.
The problem stems from the linearization point of the deletion by $T_1$. It cannot be placed (like erroneously mentioned in \cite{Afek2012size}) when $T_1$ writes to $g\_mem[1].\_recent$, because it must occur before the linearization point of the size operation that observed the deletion. Instead, it is placed in retrospect right before the linearization point of the size operation. This linearization scheme of possibly placing in retrospect linearization points of updates that the size operation observes in its scan long before they write their value, is adopted from the single-scanner algorithm of Riany et al. \cite{riany2001towards}, on which the size algorithm in \cite{Afek2012size} is based. The scheme was intended for the atomic snapshot problem, where there are no dependencies between the update operations. However, when handling dependent data-structure operations, they cannot be freely reordered when the size observes them in its scan.
In the execution described above, reversing the order of an insertion and a following deletion of the same item is unacceptable, since the deletion cannot succeed if it happens before the insertion, and thus cannot legally decrement the size before the insertion increments it.
In conclusion, the correctness problem of the suggested size algorithm stems from the linearization order the size operation dictates: if a size operation $S$ observes during its scan a value, written after $S$'s linearization point by a delayed update $U$, it dictates in retrospect to place $U$'s linearization point before $S$'s linearization point -- which might occur before linearization points of update operations that $U$ depends on.

\section{Linearizability proof continued}\label{section:abstract set lin proof}
We complete the missing parts in the linearizability proof in \Cref{section:Linearizability proof}.
We start with a proof of \Cref{lemma:counter inc after first 2 lines of updateMetadata}.

\begin{proof}
We prove by induction. Assume the lemma holds for $c-1$.
The metadata counters are modified only in \Cref{code:SizeCalculator updateMetadata update array end} by increments using \cas{}. Hence, it is enough to prove that when the \codestyle{updateMetadata} call starts, the relevant metadata counter's value is $\geq c-1$.
\codestyle{updateMetadata} is called with an \codestyle{UpdateInfo} instance associated with $op$ after this instance has been published in the relevant node. This publication is done by the thread $T$ when it executes $op$, and as each thread executes its operations sequentially, $T$ has completed its ($c-1$)-st successful operation of the same kind (insertion or deletion) by this time. During that operation, $T$ called \codestyle{updateMetadata} on its behalf, so by the induction hypothesis, the relevant metadata counter's value is $\geq c-1$ when $T$ completes that previous operation. Since the counters are monotonically increasing, we are done.
\end{proof}

To complete the linearizability proof (presented in \Cref{section:Linearizability proof}), it remains to prove \Cref{claim:results comply}. 
In what follows, we denote the set's $i$-th successful \ins$(k)$ operation (by {\em $i$-th} we refer to the linearization order, namely, to the $i$-th successful \ins$(k)$ to be linearized) by \ins$_i(k)$, its linearization time by $t_{insert_i(k)}$, and the time of its original linearization by $orig\_t_{insert_i(k)}$. We further denote the analogous \del{} operation and its related times by \del$_i(k)$, $t_{delete_i(k)}$ and $orig\_t_{delete_i(k)}$.

\begin{claim}\label{claim:results comply}
Consider a sequential history formed by ordering an execution's operations (with their results) according to their linearization points defined in \Cref{section:linearization points}. 
Then operation results in this history comply with the sequential specification of a set.
\end{claim}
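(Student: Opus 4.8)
The plan is to use the fact that the sequential specification of a set is the product of independent single-key specifications: a sequential history is legal for a set if and only if, for every key $k$, the subhistory consisting of exactly the operations whose parameter is $k$ is legal for the trivial object that tracks whether $k$ is present. So I would fix a key $k$, let $H_k$ be the subsequence of the linearization-ordered history consisting of the \ins($k$), \del($k$) and \contains($k$) calls, and prove that $H_k$ is legal, assuming without loss of generality that $k$ is initially absent (an initially-present $k$ is symmetric).

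The first step is to pin down the ``skeleton'' of $H_k$ --- its successful \ins($k$) and \del($k$) calls --- and show that in linearization order they strictly alternate \ins$_1(k)$, \del$_1(k)$, \ins$_2(k)$, \del$_2(k)$, \dots, beginning with an insertion. I would derive this from four ingredients. (a) For any successful \ins{} or \del{}, its (transformed) linearization point lies at or after its original linearization point: its \codestyle{UpdateInfo} becomes visible to helpers only at the original linearization step (the node linking for \ins{}, the marking for \del{}), and the associated per-thread counter can be raised to the operation's target value only by an \codestyle{updateMetadata} call made on its behalf, hence only after that step; and the ``immediately after a \size{}'s linearization point'' alternative can apply only when that \size{} was still collecting at the counter update, so its linearization point is later still. (b) Original linearizability of the set, together with the underlying structure's invariants that at most one active node per key ever exists and that \del{} is linearized at marking, yields $orig\_t_{insert_i(k)} < orig\_t_{delete_i(k)} < orig\_t_{insert_{i+1}(k)}$ and identifies \del$_i(k)$ as the operation that marks the node linked by \ins$_i(k)$. (c) The helping discipline of \Cref{fig: transformed data structure}: a \del{} operation calls \codestyle{updateMetadata} on behalf of the resident \ins{} before it marks the node; and before \ins$_{i+1}(k)$ performs its linking step, some thread has already called \codestyle{updateMetadata} on behalf of \del$_i(k)$ --- either \ins$_{i+1}(k)$ itself, upon encountering the still-marked node, or the thread that unlinks that node, which the footnote of \Cref{fig: transformed data structure} forces to help the deletion first. (d) \Cref{lemma:lin when updateMetadata returns}, which upgrades the calls in (c) into the guarantees $t_{insert_i(k)} < orig\_t_{delete_i(k)}$ and $t_{delete_i(k)} < orig\_t_{insert_{i+1}(k)}$. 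Chaining (a)--(d) gives $t_{insert_i(k)} < t_{delete_i(k)} < t_{insert_{i+1}(k)}$ for every $i$, which is exactly the desired alternation, and nothing on $k$ precedes \ins$_1(k)$ since $k$ starts absent.

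The second step is to show that every remaining call of $H_k$ --- a \contains($k$), a failing \ins($k$), or a failing \del($k$) --- returns the value forced by the skeleton state immediately before its linearization point (namely, $k$ is present exactly when the most recent skeleton operation preceding the call is an \ins($k$)). By \Cref{section:linearization points}, such an operation $op$ either keeps its original linearization point or is moved to immediately after the skeleton operation $op_2$ it depends on --- a successful \ins($k$) for a \contains($k$) returning \codestyle{true} and for a failing \ins($k$), and a successful \del($k$), or nothing (meaning $k$ is still absent), for a \contains($k$) returning \codestyle{false} and for a failing \del($k$). If $op$ is moved, then by construction it is placed right after $op_2$ and, using the alternation inequalities of the first step, strictly before the next skeleton operation, so the most recent skeleton operation before $op$ is $op_2$, whose kind is precisely what legalizes $op$'s result. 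If $op$ keeps its original point $orig\_t_{op}$, I would use original linearizability to locate $orig\_t_{op}$ inside the original window whose state matches $op$'s result, then transfer this to transformed time via the inequalities of the first step (which give $t_{insert_i(k)} \le orig\_t_{op} < t_{delete_i(k)}$, with no other skeleton point in between), and finish with the tie-breaking rule of \Cref{section:linearization points} that orders coincident linearization instants as \size{} operations first, then successful updates, then moved dependent operations.

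The step I expect to be the main obstacle is exactly this interaction of moved and coincident linearization points. One must rule out the scenario in which $op$ retains its original linearization point while the skeleton operation it ought to ``see'' has itself been delayed past that point --- the linearization rule of \Cref{section:linearization points} is engineered so that in such a situation $op$ is moved instead, but making this airtight requires carefully combining the helping order, \Cref{lemma:lin when updateMetadata returns} and \Cref{claim:lin within op}. One must also check that the chosen total order on operations sharing a linearization instant never slips a skeleton operation between a moved $op$ and its $op_2$. Everything beyond that is a routine four-way case analysis over the non-skeleton operation types, with the \del{} cases mirroring the \ins{} cases exactly as elsewhere in the paper.
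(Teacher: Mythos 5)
There is a genuine gap: your proof never verifies the return values of the \size{} operations, and that is the heart of this claim. Your opening reduction---``a sequential history is legal for a set iff each per-key subhistory is legal''---only holds for the \ins{}/\del{}/\contains{} fragment of the interface; the transformed object also provides \size{}, whose result is a global property (the number of keys present at its linearization point) and cannot be checked key by key. The paper's proof devotes its final (and largest) part to exactly this: letting \term{determiningSize} be the \size{} call that fixes the \codestyle{size} field of the shared \codestyle{CountersSnapshot}, it shows for each thread $T$ that if $j$ is the value \term{determiningSize} read from $T$'s insertion counter in the snapshot array, then $T$'s $j$-th successful \ins{} is linearized before the \size{}'s linearization point and $T$'s $(j{+}1)$-st successful \ins{} (if any) after it, and symmetrically for deletions. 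That argument needs \Cref{lemma:update preceding size} (no value can appear in the snapshot array unless it was already in the metadata array before the \codestyle{collecting} flag was cleared, i.e., a \size{} cannot witness ``future'' updates) together with the retrospective-linearization rule for updates missed by the collection. Nothing in your proposal plays this role, so the computed size is never tied to the set of updates that precede the \size{} in the constructed order, and the claim is not established.

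For the part you do treat, your plan essentially reconstructs the paper's argument: your alternation ``skeleton'' with the inequalities $t_{insert_i(k)} < orig\_t_{delete_i(k)} < t_{delete_i(k)} < orig\_t_{insert_{i+1}(k)}$, obtained from the helping discipline plus \Cref{lemma:lin when updateMetadata returns}, is the paper's \Cref{lemma:ins and del order} and \Cref{corollary:alternating ins and del}, and your two-case analysis for \contains{} and failing updates (moved next to the operation they depend on, or kept at the original point and sandwiched using those inequalities) matches the paper's second paragraph. So the fix is not to change course on that material but to add the missing \size{} analysis, including a statement and proof of something equivalent to \Cref{lemma:update preceding size}.
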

\begin{proof}
As for the results of successful update operations, their correctness follows directly from \Cref{corollary:alternating ins and del}:
The last successful update operation on $k$ to be linearized before the linearization point of a successful \ins{}($k$) operation is a deletion, thus, the key $k$ is logically not in the set at the moment of the insertion's linearization and the insertion correctly succeeds. Similarly, the last successful update operation on $k$ to be linearized before the linearization point of a successful \del{}($k$) operation is an insertion, thus, the key $k$ is logically in the set at the moment of the deletion's linearization and the deletion correctly succeeds.

Now, let us examine the results of \contains{} operations and failing update operations. Let $op$ be such an operation on a key $k$, and let the operation it depends on (namely, the last successful update operation on $k$ whose original linearization point precedes $op$'s original linearization point) be \ins{}$_i(k)$ for some $i\geq1$ (the proof for a \del{} operation is similar). 
As \ins{}$_i(k)$ is an insertion, $op$ must be a \contains{} operation returning \codestyle{true} or a failing \ins{} operation.
To show that $op$'s result---which reflects that the last operation on the set was an insertion---is legal, we will prove that the linearization point of $op$ occurs when the last linearized successful update operation on $k$ is the insertion \ins{}$_i(k)$.
Let $orig\_t_{op}$ be the original linearization moment of $op$. 
There are two possibilities with regards to $op$'s linearization point:
either $op$ is linearized immediately after $t_{insert_i(k)}$, and we are done, or it is linearized at $orig\_t_{op}$. In the latter case, according to the linearization point definition, \ins{}$_i(k)$ must be linearized by $op$'s original linearization moment, namely, $t_{insert_i(k)} < orig\_t_{op}$.
If no successful \del{}($k$) operation is linearized after $t_{insert_i(k)}$, then we are done. Else, $orig\_t_{op}<orig\_t_{delete_i(k)}$, as \ins{}$_i(k)$ is the last successful update operation on $k$
whose original linearization point precedes $orig\_t_{op}$. Since $orig\_t_{delete_i(k)}<t_{delete_i(k)}$ (by \Cref{lemma:ins and del order}), then $orig\_t_{op}<t_{delete_i(k)}$, and we are done.

Finally, we analyze the linearization of a \size{} operation. Denote such an operation by $op$, the \codestyle{CountersSnapshot} instance it obtains and operates on by \term{countersSnapshot}, and the \size{} call that sets the \term{countersSnapshot}.\codestyle{size} field by \term{determiningSize}. $op$ returns the difference between the sum of insertion counters and the sum of deletion counters that were observed in the \term{countersSnapshot}.\codestyle{snapshot} array by \term{determiningSize}.
Let $j$ be the value that \term{determiningSize} obtained from the insertion counter of some thread $T$ in \term{countersSnapshot}.\codestyle{snapshot}. We will prove that $T$'s $j$-th successful \ins{} is linearized before $op$'s linearization point, and $T$'s ($j+1$)-st successful \ins{} (if such an operation occurs) is linearized after it. We refer to insertions for convenience, but the exact same proof applies  to the deletion counters as well.

We start with $T$'s $j$-th successful \ins{}. 
Since \term{determiningSize} obtained the value $j$ from the relevant snapshot counter, then by \Cref{lemma:update preceding size}, the metadata counter update on behalf of $T$'s $j$-th successful \ins{} happens before $op$'s linearization point. If $T$'s $j$-th successful \ins{} is linearized in its metadata counter update, we are done. Else, it is linearized immediately after the linearization point of a \size{} operation, whose collecting \codestyle{CountersSnapshot} instance is announced when the metadata counter is updated, and which reads a value $<j$ from the relevant snapshot counter. This \size{} operation cannot be $op$ (which reads the value $j$), but rather a preceding \size{} operation whose \codestyle{CountersSnapshot} instance is announced prior to \codestyle{countersSnapshot} (because it is already announced when the metadata counter is updated on behalf of $T$'s $j$-th successful \ins{}, which by \Cref{lemma:update preceding size} happens before \term{countersSnapshot}'s \codestyle{collecting} field is set to \codestyle{false}), so its linearization point precedes $op$'s linearization point.

We proceed to $T$'s ($j+1$)-st successful \ins{} (in case such an operation occurs). If in its metadata counter update, \term{countersSnapshot} is announced in the \codestyle{SizeCalculator} instance held by the set and its \codestyle{collecting} field's value is \codestyle{true}, then this insertion is linearized immediately after $op$'s linearization point, because \term{determiningSize} obtained the value $j$ (which is smaller than $j+1$) from the corresponding \term{countersSnapshot}.\codestyle{snapshot}'s counter. Else, 
this metadata counter update must have occurred after $op$'s linearization point, since the alternative is that \term{countersSnapshot} is announced after that metadata counter update, in which case a value $\geq j+1$ must be collected in \term{countersSnapshot}.\codestyle{snapshot}. 
The insertion is linearized either at its metadata counter update or later, thus, linearized after $op$'s linearization point in this case as well.
\end{proof}

The proof of \Cref{claim:results comply} uses the following:

\begin{observation}\label{observation:alternating original lps}
The original linearization points of successful insertions and deletions of each key $k$ are alternating. 
\end{observation}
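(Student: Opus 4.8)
The plan is to reduce the claim to the sequential specification of a set via the linearizability of the \emph{original} data structure. First I would invoke that hypothesis: since the original data structure is linearizable, there is a choice of original linearization points --- one per operation, each lying inside its execution interval --- such that ordering all operations by these points yields a legal sequential history $H$ of a set. (If two operations happen to share a linearization point, ties are broken arbitrarily; the argument below does not depend on the tie-breaking.) Fix a key $k$.

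Next I would track, along $H$, the boolean predicate ``$k$ is currently a member of the set''. By the sequential specification of a set, a \contains{} call, a failing \ins{} or \del{}, and any successful or failing operation on a key other than $k$ all leave this predicate unchanged; a successful \ins($k$) is legal in $H$ only at a point where the predicate is \codestyle{false}, and it sets it to \codestyle{true}; and a successful \del($k$) is legal only where the predicate is \codestyle{true}, and it sets it to \codestyle{false}. Consequently, if we restrict attention to the subsequence $H_k$ of $H$ consisting of exactly the successful \ins($k$) and \del($k$) operations, the membership predicate (evaluated just before each element of $H_k$) strictly toggles from one element of $H_k$ to the next, because no operation lying strictly between two consecutive elements of $H_k$ in $H$ touches this predicate.

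Finally I would conclude by contradiction. Suppose two consecutive elements of $H_k$ were both insertions of $k$: the predicate would be \codestyle{false} just before the first (for it to be legal), \codestyle{true} just after it, and, since nothing in between alters the predicate, still \codestyle{true} just before the second --- contradicting the legality of the second insertion. The symmetric argument rules out two consecutive deletions of $k$. Hence $H_k$ alternates between successful insertions and successful deletions of $k$, which is precisely the statement about the original linearization points. I do not expect a genuine obstacle here: the only thing requiring care is the bookkeeping that operations outside $H_k$ --- in particular unsuccessful operations on $k$ and all operations on other keys --- do not disturb the membership of $k$, and this is immediate from the sequential specification of a set.
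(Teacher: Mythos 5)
Your proposal is correct and follows exactly the route the paper takes: the paper dismisses this observation with the single remark that it ``follows from the linearizability of the original data structure and the sequential specification of a set,'' and your argument is just the careful unpacking of that remark (legal sequential history ordered by the original linearization points, plus tracking the membership predicate of $k$ to force alternation). No gap; you have merely written out the details the paper leaves implicit.
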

This follows from the linearizability of the original data structure and the sequential specification of a set.

\begin{lemma}\label{lemma:ins and del order}
For each key $k$ and each $i\geq1$: 
\[
orig\_t_{insert_i(k)} < t_{insert_i(k)} < orig\_t_{delete_i(k)} < t_{delete_i(k)} < orig\_t_{insert_{i+1}(k)}
\]
\end{lemma}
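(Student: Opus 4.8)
The plan is to fix a key $k$ and re-index its successful updates by their \emph{original} linearization order: write $I_j$ (resp.\ $D_j$) for the $j$-th successful \ins$(k)$ (resp.\ \del$(k)$) in that order, with original linearization time $orig\_t(I_j)$ (resp.\ $orig\_t(D_j)$) and transformed linearization time $t(I_j)$ (resp.\ $t(D_j)$) as defined in \Cref{section:linearization points}. By \Cref{observation:alternating original lps} these original times alternate, $orig\_t(I_1)<orig\_t(D_1)<orig\_t(I_2)<\dots$. I would then prove, for every $j\ge 1$, three local facts: (i) $orig\_t(I_j)<t(I_j)$ and $orig\_t(D_j)<t(D_j)$; (ii) $t(I_j)<orig\_t(D_j)$; (iii) $t(D_j)<orig\_t(I_{j+1})$. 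Concatenating these across all $j$ yields the strictly increasing chain $orig\_t(I_1)<t(I_1)<orig\_t(D_1)<t(D_1)<orig\_t(I_2)<\dots$; in particular the transformed linearization times of the successful updates on $k$ increase in the same order as the originals, so the $j$-th successful \ins$(k)$ to be linearized in the transformed structure is exactly $I_j$, and likewise $\del_j(k)=D_j$. Substituting $\ins_i(k)=I_i$, $\del_i(k)=D_i$ then turns (i)+(ii)+(iii) into the displayed chain of the lemma, so no separate induction beyond \Cref{observation:alternating original lps} is required.

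Fact (i) is essentially a code-ordering observation. The \cas{} that raises an \ins{}'s (resp.\ \del{}'s) metadata counter (\Cref{code:SizeCalculator updateMetadata update array end}) is executed inside a \codestyle{updateMetadata} call made on that operation's behalf, hence only after that operation's \codestyle{UpdateInfo} has become visible; and the \codestyle{UpdateInfo} is made visible exactly at the original linearization point (the node-linking step of an \ins{}, the marking step of a \del{}), so that \cas{} — and every \codestyle{updateMetadata} call for the operation — strictly follows $orig\_t$. By the linearization-point definition the operation is linearized either at that counter \cas{} or immediately after the linearization point of a \size{} that is collecting when the \cas{} occurs — in either case at or after the \cas{} — and by \Cref{claim:lin within op} still within its own execution interval. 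Hence $orig\_t<t$ for both kinds.

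Fact (ii) is a direct application of \Cref{lemma:lin when updateMetadata returns}. Between $orig\_t(I_j)$ and $orig\_t(D_j)$ no successful update on $k$ is linearized in the original structure (they alternate), so the unique node bearing $k$ throughout that interval is the one $I_j$ linked, and $D_j$'s marking step acts on precisely that node, which carries $I_j$'s \codestyle{insertInfo}. Before marking, $D_j$ calls \codestyle{updateMetadata} on that \codestyle{insertInfo}, and by \Cref{lemma:lin when updateMetadata returns} $I_j$ is linearized by the time this call returns, which precedes $orig\_t(D_j)$; if the completed-insertion optimization has already nulled \codestyle{insertInfo}, that only happened after a \codestyle{updateMetadata} call on $I_j$'s behalf, so $I_j$ was linearized even earlier. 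Thus $t(I_j)<orig\_t(D_j)$.

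Fact (iii) is the main obstacle. Let $N$ be the node $D_j$ marked. For $I_{j+1}$ to succeed at (re)inserting $k$ at $orig\_t(I_{j+1})$, $N$ cannot still be a live obstruction then — in these data structures a new \ins$(k)$ cannot take effect while a marked node with key $k$ is still physically present, the original algorithm helping to remove it first. If $I_{j+1}$ itself encounters $N$ on its search (and possibly unlinks it), then by the transformation it must call \codestyle{updateMetadata} on $N$'s \codestyle{deleteInfo} — that is, on $D_j$'s behalf — before proceeding, so \Cref{lemma:lin when updateMetadata returns} places $t(D_j)$ before $orig\_t(I_{j+1})$. Otherwise some earlier operation $P$ unlinks $N$ before $I_{j+1}$'s search can reach it; by the rule that a marked node's metadata is updated before it is unlinked, $P$'s \codestyle{updateMetadata}-on-$D_j$'s-behalf call returns before $P$ performs the unlink, and that unlink precedes $orig\_t(I_{j+1})$, so again $t(D_j)<orig\_t(I_{j+1})$. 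The delicate point is exactly this case split: unlike in (ii), $D_j$ is not reached directly by a read of an unmarked node, so one must argue that \emph{whoever} physically removes $N$ has already pushed $D_j$ to its linearization point and that this happened before $orig\_t(I_{j+1})$ — which is where the proof leans on the transformation's help-before-unlink invariant (the $^*$ footnote of \Cref{fig: transformed data structure} and the accompanying "metadata is updated before unlinking a marked node" requirement) together with the just-mentioned property of the original data structure.
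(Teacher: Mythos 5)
Your proposal is correct and follows essentially the same route as the paper's proof: the same decomposition into the three inequalities $orig\_t<t$, $t_{insert_i(k)}<orig\_t_{delete_i(k)}$, and $t_{delete_i(k)}<orig\_t_{insert_{i+1}(k)}$, each discharged via \Cref{lemma:lin when updateMetadata returns} together with \Cref{observation:alternating original lps}, the help-before-unlink rule, and the property that an original \ins{} succeeding while a marked node with its key is reachable must have observed that node during its search. Your only departures are bookkeeping ones—indexing by original linearization order and arguing at the end that the two orders coincide, and splitting case (iii) by who removes the node rather than by whether it is unlinked before $orig\_t_{insert_{i+1}(k)}$—which make the argument slightly more explicit but not different in substance.
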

\begin{proof}
The linearization point of each successful \ins{} or \del{} operation happens after its original linearization point because the linearization point occurs at the metadata counter update or later, and this update is performed in our transformation after the original linearization point.

In addition, before \del$_i(k)$ carries out its own original linearization point, i.e., marking the node it is deleting, it calls \codestyle{updateMetadata} on behalf of the \ins{} operation that inserted that node. By \Cref{observation:alternating original lps}, the last original linearization point of a successful update operation on $k$ before the one of \del$_i(k)$ is that of \ins$_i(k)$. Thus, the node that \del$_i(k)$ deleted was inserted by \ins$_i(k)$, and \del$_i(k)$ calls \codestyle{updateMetadata} with the \codestyle{insertInfo} associated with \ins$_i(k)$.
By \Cref{lemma:lin when updateMetadata returns}, \ins$_i(k)$ will have been linearized by the time this \codestyle{updateMetadata} call returns. Hence, $t_{insert_i(k)} < orig\_t_{delete_{i}(k)}$.

It remains to prove that $t_{delete_i(k)} < orig\_t_{insert_{i+1}(k)}$. 
By \Cref{observation:alternating original lps}, \ins$_{i+1}$($k$)'s original linearization point occurs after \del$_{i}$($k$)'s original linearization point. If the node deleted by \del$_{i}$($k$) has been already unlinked prior to $orig\_t_{insert_{i+1}(k)}$, then prior to the unlinking, \codestyle{updateMetadata} has been called on behalf of \del$_{i}$($k$).
Else, 
we note that in all set implementations we are aware of, if at the original linearization moment of a successful \ins{}($k$) there exists a node with the key $k$ reachable from the data structure's roots, then the \ins{} operation must have observed this node earlier, during its search for $k$.
Thus, \ins$_{i+1}$($k$) observes the node deleted by \del$_{i}$($k$), and calls \codestyle{updateMetadata} on its behalf before carrying out its own original linearization point. In both cases, by \Cref{lemma:lin when updateMetadata returns}, \del$_{i}$($k$) is linearized by the time the \codestyle{updateMetadata} call returns, thus, linearized before $orig\_t_{insert_{i+1}(k)}$.
\end{proof}

\begin{corollary}\label{corollary:alternating ins and del}
The linearization points of successful insertions and deletions of each key $k$ are alternating. 
\end{corollary}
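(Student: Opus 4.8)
The plan is to obtain the corollary almost directly from \Cref{lemma:ins and del order}, which already fixes the relative order of all the time points involved. Recall that, for a fixed key $k$, the operations \ins$_1(k),$ \ins$_2(k),\ldots$ enumerate \emph{all} successful insertions of $k$ in the (new) linearization order, and likewise \del$_1(k),$ \del$_2(k),\ldots$ enumerate all successful deletions. Hence, to prove that the linearization points alternate it suffices to show that merging these two lists by linearization time yields exactly the sequence \ins$_1(k),$ \del$_1(k),$ \ins$_2(k),$ \del$_2(k),\ldots$, with the insertion of index $i$ immediately preceding the deletion of index $i$, which in turn immediately precedes the insertion of index $i+1$ (the case of no successful insertions of $k$ being vacuous).

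First I would extract from the inequality chain in \Cref{lemma:ins and del order} the two consequences that matter here: dropping the middle term $orig\_t_{delete_i(k)}$ gives $t_{insert_i(k)} < t_{delete_i(k)}$, and dropping the middle term $orig\_t_{insert_{i+1}(k)}$ gives $t_{delete_i(k)} < t_{insert_{i+1}(k)}$. Concatenating these over all $i \ge 1$ produces the strictly increasing sequence
\[
t_{insert_1(k)} < t_{delete_1(k)} < t_{insert_2(k)} < t_{delete_2(k)} < \cdots ,
\]
so in linearization order the successful operations on $k$ indeed alternate between insertions and deletions. It then remains only to observe that this sequence starts with an insertion: by \Cref{lemma:ins and del order} every $t_{delete_i(k)}$ exceeds $t_{insert_i(k)}$ and hence exceeds $t_{insert_1(k)}$, so no successful deletion of $k$ is linearized before \ins$_1(k)$. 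This matches the intuition that a deletion cannot succeed before the first successful insertion, and indeed \Cref{observation:alternating original lps} guarantees the same alternation already at the level of the original linearization points, which is precisely the property \Cref{lemma:ins and del order} builds on.

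I do not expect a genuine obstacle here: the real work — interleaving the new linearization points with the original ones and invoking the helping guarantee of \Cref{lemma:lin when updateMetadata returns} — is already carried out inside \Cref{lemma:ins and del order}. The one point worth double-checking is a bookkeeping consistency issue: since ``$i$-th'' is defined with respect to the new linearization order, one must confirm that the operation named \ins$_i(k)$ really is the one that sits between \del$_{i-1}(k)$ and \del$_i(k)$ in that order; the strictly increasing chain displayed above is exactly what certifies this. Its strictness also means that no two operations on the same key $k$ ever share a linearization moment, so here we need not appeal to the tie-breaking rule of the linearization-point specification of \Cref{section:linearization points}.
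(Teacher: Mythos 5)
Your proposal is correct and follows essentially the same route as the paper, which states the corollary without further proof as an immediate consequence of the inequality chain in \Cref{lemma:ins and del order}: dropping the original-linearization times and concatenating the chains over consecutive indices yields $t_{insert_1(k)} < t_{delete_1(k)} < t_{insert_2(k)} < \cdots$, exactly as you write. Your added remarks (that the sequence starts with an insertion and that the indexing by the new linearization order is consistent) are fine elaborations of the same argument, not a different approach.
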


\begin{lemma}\label{lemma:update preceding size}
Let \term{countersSnapshot} be a \codestyle{CountersSnapshot} instance.
Any non-\codestyle{INVALID} value written to a counter in the \term{countersSnapshot}.\codestyle{snapshot} array must have been written to the corresponding counter in the \codestyle{metadataCounters} array (of the \codestyle{SizeCalculator} instance held by the set) \emph{before} the \term{countersSnapshot}.\codestyle{collecting} field is set to \codestyle{false}.
\end{lemma}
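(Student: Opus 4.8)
The plan is to enumerate the two code paths by which a non-\codestyle{INVALID} value lands in a \codestyle{snapshot} counter of \term{countersSnapshot}, and show in each case that the corresponding write to \codestyle{metadataCounters} strictly precedes the moment \term{countersSnapshot}.\codestyle{collecting} becomes \codestyle{false}. Values are written to \codestyle{snapshot} only in \Cref{code:CountersSnapshot snapshot cas in add} (inside \codestyle{add}, called from \codestyle{\_collect}) and in \Cref{code:CountersSnapshot snapshot cas in forward} (inside \codestyle{forward}, called from \codestyle{updateMetadata}). I would treat these two writers separately.

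\emph{Writes via \codestyle{add}.} A call \codestyle{\_collect(targetCountersSnapshot)} with \codestyle{targetCountersSnapshot} $=$ \term{countersSnapshot} is issued from \codestyle{compute} (\Cref{code:SizeCalculator compute collect}) strictly before that same \codestyle{compute} executes \Cref{code:SizeCalculator compute stop collection}, which is the only line that sets a \codestyle{collecting} field to \codestyle{false} after construction. First I would argue that the value \codestyle{add} writes was read from \codestyle{metadataCounters[tid][opKind]} in \Cref{code:SizeCalculator collect add} \emph{during} this \codestyle{\_collect} call, hence before this \codestyle{compute} reaches \Cref{code:SizeCalculator compute stop collection}. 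It remains to rule out that some \emph{other} \codestyle{compute} call already set \term{countersSnapshot}.\codestyle{collecting} to \codestyle{false} earlier: but any \codestyle{compute} that executes \Cref{code:SizeCalculator compute stop collection} on \term{countersSnapshot} must first have obtained \term{countersSnapshot} from \codestyle{\_obtainCollectingCountersSnapshot} and then run \codestyle{\_collect} on it — and \term{countersSnapshot} is never re-announced once it stops collecting, so every \codestyle{\_collect} on \term{countersSnapshot}, including the one under discussion, starts before the \emph{first} time \codestyle{collecting} is cleared. Thus the metadata read, and a fortiori the metadata write that produced the read value, happens before \codestyle{collecting} is set to \codestyle{false}.

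\emph{Writes via \codestyle{forward}.} A call \codestyle{forward(tid, opKind, c)} on \term{countersSnapshot} is reached from \codestyle{updateMetadata} only after that \codestyle{updateMetadata} obtained \term{countersSnapshot} in \Cref{code:SizeCalculator updateMetadata obtain CountersSnapshot} and then observed \term{countersSnapshot}.\codestyle{collecting} to be \codestyle{true} in \Cref{code:SizeCalculator updateMetadata check collecting}; the value $c$ it forwards is exactly the value it wrote to (or observed in) \codestyle{metadataCounters[tid][opKind]} in \Crefrange{code:SizeCalculator updateMetadata update array start}{code:SizeCalculator updateMetadata update array end}, which is executed before \Cref{code:SizeCalculator updateMetadata check collecting}. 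So at the time \Cref{code:SizeCalculator updateMetadata check collecting} reads \codestyle{collecting} as \codestyle{true}, the value $c$ has already been written to the metadata counter. Since \codestyle{collecting} of a \codestyle{CountersSnapshot} instance, once cleared, is never set back to \codestyle{true} (it is set to \codestyle{true} only in the constructor, and to \codestyle{false} only in \Cref{code:SizeCalculator compute stop collection}), the read of \codestyle{collecting} $=$ \codestyle{true} in \Cref{code:SizeCalculator updateMetadata check collecting} precedes — in the synchronization order guaranteed by the memory model of \Cref{section:memory model} — the first write of \codestyle{false} to \codestyle{collecting}. Composing, the metadata write precedes that first clearing of \codestyle{collecting}, as required. (A subtlety worth a sentence: \codestyle{forward} may also need to conclude that the forwarded value is the same one currently in the metadata counter; but the lemma only asserts that \emph{some} write of that value to the metadata counter precedes the clearing, and the write in \Crefrange{code:SizeCalculator updateMetadata update array start}{code:SizeCalculator updateMetadata update array end} of the very same \codestyle{updateMetadata} invocation — or an even earlier write, since counters are monotone — serves this purpose.)

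The main obstacle I anticipate is the \codestyle{add} case: one must be careful that the \codestyle{collecting} field cannot have been cleared by a \emph{different} \codestyle{compute} call before our \codestyle{\_collect} performs its metadata read. The clean way to close this is the monotonicity-of-\codestyle{collecting} observation together with the fact that every \codestyle{compute} that clears \term{countersSnapshot}.\codestyle{collecting} necessarily runs \codestyle{\_collect} on \term{countersSnapshot} \emph{before} clearing it — so all such \codestyle{\_collect} calls, ours included, read the metadata before the field is ever \codestyle{false}. The \codestyle{forward} case is then essentially a direct read of the code, relying only on the fixed order of \Crefrange{code:SizeCalculator updateMetadata update array start}{code:SizeCalculator updateMetadata update array end}, \Cref{code:SizeCalculator updateMetadata obtain CountersSnapshot} and \Cref{code:SizeCalculator updateMetadata check collecting} within a single \codestyle{updateMetadata} invocation plus the monotonicity of \codestyle{collecting}.
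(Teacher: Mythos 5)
Your decomposition is exactly the paper's (the only writers to \codestyle{snapshot} are \codestyle{add} and \codestyle{forward}), and your \codestyle{forward} case is essentially the paper's argument: the metadata counter already holds the forwarded value when \Cref{code:SizeCalculator updateMetadata check collecting} observes \codestyle{collecting} to be \codestyle{true} (the paper makes this precise by citing \Cref{lemma:counter inc after first 2 lines of updateMetadata}), and \codestyle{collecting} is monotone, so that write precedes the first clearing. That part is fine.

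The gap is in your \codestyle{add} case. From ``every \codestyle{\_collect} on \term{countersSnapshot} starts before the first time \codestyle{collecting} is cleared'' you conclude that the read of \codestyle{metadataCounters} in \Cref{code:SizeCalculator collect add} happens before the clearing --- but that does not follow: a \codestyle{\_collect} that started before the clearing may be delayed and perform its per-cell read (and its \cas{} in \Cref{code:CountersSnapshot snapshot cas in add}) only after another \codestyle{compute} has executed \Cref{code:SizeCalculator compute stop collection}. The conclusion is still true, but for a different reason, which is the one the paper uses: the \cas{} in \codestyle{add} can succeed only while the cell is \codestyle{INVALID}, and whichever \codestyle{compute} performs the first clearing has, before clearing, completed its own \codestyle{\_collect} and hence already attempted \codestyle{add} on that very cell --- so the cell is non-\codestyle{INVALID} from the first clearing onward, at most one \codestyle{add} \cas{} per cell ever succeeds, and that successful \cas{} (together with the metadata read feeding it, which precedes it in the same \codestyle{\_collect}) must occur before the first clearing. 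Your argument needs this ``late \codestyle{add} \cas{} must fail'' step; as written, the inference from ``\codestyle{\_collect} started before the clearing'' to ``the metadata read precedes the clearing'' is a non sequitur, which you yourself flag as the main obstacle but do not actually close.
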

Intuitively, this implies that a \size{} operation cannot witness future update operations (namely, ones that are linearized after it).
\begin{proof}
Consider some \term{countersSnapshot}.\codestyle{snapshot}'s cell $C$ of either an insertion or a deletion counter of some thread $T$. We will analyze all possible writes of non-\codestyle{INVALID} values to $C$, and show that they write values that have been written to $T$'s corresponding metadata counter before \term{countersSnapshot}.\codestyle{collecting} is set to \codestyle{false}.
Non-\codestyle{INVALID} values are written to \term{countersSnapshot}.\codestyle{snapshot} in the \codestyle{CountersSnapshot}'s \codestyle{add} and \codestyle{forward} methods, in \Cref{code:CountersSnapshot snapshot cas in add,code:CountersSnapshot snapshot cas in forward} respectively.
Starting with \codestyle{add}, only the first execution of the \cas{} in \Cref{code:CountersSnapshot snapshot cas in add} on $C$ succeeds, and it occurs within a call to \codestyle{SizeCalculator}'s \codestyle{\_collect} method, before the first time \term{countersSnapshot}.\codestyle{collecting} is set to \codestyle{false} in \Cref{code:SizeCalculator compute stop collection}.
As for \codestyle{forward}, it is called by \codestyle{SizeCalculator}'s \codestyle{updateMetadata} method for forwarding some value $val$ to \term{countersSnapshot} after (1) $val$ is written to the relevant $T$'s metadata counter---as guaranteed by \Cref{lemma:counter inc after first 2 lines of updateMetadata}, and then (2) \term{countersSnapshot}.\codestyle{collecting} is verified to bear the value \codestyle{true} in \Cref{code:SizeCalculator updateMetadata check collecting}. Hence, $val$ has been written to the relevant metadata counter before \term{countersSnapshot}.\codestyle{collecting} is set to \codestyle{false}.
\end{proof}

\end{document}